\definecolor{darkgreen}{rgb}{0.1,0.5,0.1}
\newtheorem{Thm}{Theorem}
\newtheorem{Lem}[Thm]{Lemma}
\newtheorem{Def}[Thm]{Definition}
\newtheorem{Cor}[Thm]{Corollary}
\newtheorem{Rem}[Thm]{Remark}
\numberwithin{equation}{section}
\newtheorem{Example}[Thm]{Example}
\newtheorem{Ass}[Thm]{Assumption}
\newcommand{\beqa}{\begin{eqnarray}}
\newcommand{\eeqa}[1]{\label{#1}\end{eqnarray}}
\newcommand{\beq}{\begin{equation}}
\newcommand{\eeq}[1]{\label{#1}\end{equation}}
\newcommand{\rmd}{{\mathrm{ d}}}
\newcommand{\rme}{{\mathrm{ e}}}
\newcommand{\rmi}{{\mathrm{ i}}}
\newcommand{\R}{{\mathbb{ R}}}
\newcommand{\C}{{\mathbb{C}}}
\newcommand{\bbR}{{\mathbb{R}}}
\newcommand{\bbC}{{\mathbb{C}}}
\newcommand{\bbA}{{\mathbb{A}}}
\newcommand{\N}{{\mathbb{N}}}
\newcommand{\md}{{\mathrm{ d}}}
\newcommand{\bal}{{\boldsymbol{\alpha}}}
\newcommand{\eps}{\varepsilon}
\newcommand{\bk}{\mathbf{k}}
\newcommand{\bx}{\mathbf{x}}
\newcommand{\bu}{\mathbf{u}}
\newcommand{\bE}{\mathbf{E}}
\newcommand{\bH}{\mathbf{H}}
\newcommand{\bL}{\mathbf{L}}
\newcommand{\bU}{\mathbf{U}}
\newcommand{\bbP}{\mathbb{P}}
\newcommand{\bbM}{\mathbb{M}}
\newcommand{\bbX}{\mathbb{X}}
\newcommand{\calP}{{\mathcal{P}}}
\newcommand{\calZ}{{\mathcal{Z}}}
\newcommand{\ds}{\displaystyle}
\def\bR{\boldsymbol{\rm{R}}}
\def\XXint#1#2#3{{\setbox0=\hbox{$#1{#2#3}{\int}$}
		\vcenter{\hbox{$#2#3$}}\kern-.5\wd0}}
\begin{document}
\vspace{-1in}

\title{An operator approach to the analysis of electromagnetic wave propagation in dispersive media. Part 1: general results.}

\author{Maxence Cassier$^{a}$ and  Patrick Joly$^{b}$ \\ \\
{
\footnotesize $^a$ Aix Marseille Univ, CNRS,  Centrale Med, Institut Fresnel, Marseille, France}\\ 
{\footnotesize  $^b$ POEMS$^1$, CNRS, INRIA, ENSTA Paris, Institut Polytechnique de Paris, 91120 Palaiseau, France}\\ 
{\footnotesize (maxence.cassier@fresnel.fr, patrick.joly@inria.fr)}}
\maketitle
	\begin{abstract}
We investigate in this chapter the mathematical models for electromagnetic wave propagation
in dispersive isotropic passive linear media  for which the dielectric permittivity $\varepsilon$ and magnetic permeability $\mu$ depend on the frequency. 
We emphasize the link between physical requirements  and
mathematical properties of the models. A particular attention is devoted to the notions
of causality and  passivity  and its connection to  the existence of Herglotz functions that determine the dispersion of the material. We consider successively the cases of the general passive media and the so-called local media for which $\varepsilon$ and 
$\mu$ are rational functions of the frequency. This leads us to analyse  the important class of non dissipative and  dissipative generalized Lorentz models. In particular, we discuss the connection between mathematical and physical properties of models through the notions of stability, energy conservation, dispersion and modal analyses, group and phase velocities and energy decay in dissipative systems.
	\end{abstract}	

{\noindent \bf Keywords:} Maxwell's equations, electromagnetic passive  media, metamaterial,  generalized Lorentz models, dispersion analysis, spectral theory, Herglotz functions,  energy decay rate.

\section{Introduction}
The phenomenon of dispersion of waves is represented by the fact that their speed of propagation depends of their wavelength. Maxwell's equations in the vacuum are non dispersive. However, according to physicists \cite{Jac-98, landau-84, Tip}, dispersion is ubiquitous  for electromagnetic wave propagation in matter. This property is traduced in the models via the dependence of the material properties (electric permittivity $\varepsilon$ and magnetic permeability $\mu$) of the propagation media with respect to the frequency $\omega$. \\[12pt]
The theory of wave propagation in dispersive media, and more specifically negative index materials in electromagnetism, 
had known recently a regain of interest with the appearance of electromagnetic metamaterials. Their  theoretical behaviour had been, much before their experimental realization, predicted in the pioneering article of Veselago \cite{veselago}. Since 
the beginning of the century, several works \cite{smith2004metamaterials}, \cite{cui2010metamaterials}, \cite{brien2002photonic} have shown a practical realisability of metamaterials, with the help of a periodic assembly of small resonators whose effective macroscopic behaviour corresponds to
a negative index. Their existence opened new perspectives of application for physicists, in particular in optics and photonic crystals,  related to new physical phenomena such as backward propagating waves, negative refraction \cite{veselago} or plasmonic surface waves \cite{Mai-07}, which are used for creating perfect lenses \cite{pendry2000negative},  superlensing \cite{Mil-05} or cloaking \cite{Nir-94,Mil-06}. On the other hand the study of the corresponding 
mathematical models raised new exciting  questions for mathematicians (see \cite{JLiReview,cas-kach-jol-17} for a review), in particular numerical analysts \cite{JLiBook,ziolkowski2001wave,Bec-15,JLiLorentz}.\\[12pt]
\noindent The organisation of the chapter is as follows. Sections \ref{sec-presentation} to \ref{sec-Gen-Lorentz} are based on the article \cite{cas-kach-jol-17}. After a brief recap on the Fourier--Laplace transform in section \ref{sec-Laplace}, we describe in section  \ref{sec-presentation} the relevant constitutive laws for dispersive electromagnetic materials, that are obtained by imposing some restrictions  to the  dependence of $\varepsilon$ and $\mu$ as functions of $\omega$, linked to natural physical requirements. In section \ref{sec-passive}, we specify ourselves to the class of passive materials which  naturally  leads to the notion of Herglotz functions. Section \ref{Generalpassive} is devoted to Maxwell's equations in such passive media for which we provide a so-called conservative augmented formulation in section \ref{Augmented}.  This  formulation is based on the well-known Herglotz-Nevanlinna representation of Herglotz functions that is presented in section \ref{Representationforula}. In section \ref{sec-abstract}, these equations are reformulated, in the absence source term, in an abstract manner as 
\begin{equation}\label{eq.schroding}
\frac{\md \, {\bf U}}{\md\, t} + \rmi\,\mathbb{A} \, {\bf U}=0,
\end{equation}
where $\bbA$ is a self-adjoint in some appropriate Hilbert space $\mathcal{H}$,
which allows  to state a uniqueness and existence result for the related initial-value  problem. Section \ref{sec-Gen-Lorentz} treats an important subclass of passive media: the generalized Lorentz media. The corresponding models are presented in section \ref{sec-Lorentz-def} and reformulated in the abstract form \eqref{eq.schroding} in section \ref{sec-abstract2}. These models  are non-dissipative  and support the propagation of non-dissipative plane waves in homogeneous media.  In section \ref{sec-disp-analysis}, we provide a complete  dispersion analysis of such plane waves. This leads us to introduce  the important notions of forward modes, backward  modes and negative index materials. The last section \ref{Lorentz-dissp} is devoted to the study of a dissipative version of these generalized  Lorentz models introduced in section \ref{Lorentz-dissp-permittivity-permeability} and  whose evolution system is given  in section  \ref{sec-disspative-evol} via a dissipative augmented formulation  rewritten in an abstract way in section \ref{sec_decay_proof}. This formulation is of the same form as \eqref{eq.schroding} with  $\bbA$ replaced by $\bbA_{\bal}$ where $\bal$ is a set of positive dissipation parameters and the operator  $\bbA_{\bal}$ is no longer self-adjoint.
The rest of the section the analysis of the large time-behavior of the solutions and follows the content of the two articles \cite{cas-jol-ros-22,cas-jol-ros-22-bis}. Our main results stated in section \ref{sec_decay} are based on the analysis of the dispersion relation for these dissipative models whose study is presented in section \ref{sec_DispersionDissipative}. Finally in section  \ref{sec_decay_proof}, we sketch the proof  that we adopted in the second article \cite{cas-jol-ros-22-bis}, proof which is  based on the spectral decomposition of the operator $\bbA_{\bal}$.

\section{Recap on the Fourier-Laplace transform in time} \label{sec-Laplace}
Let $u(t)$ be a (measurable) complex-valued, locally bounded and causal ($u(t) = 0 \mbox{ for } t < 0 \}$) function of time, which
we suppose to be {exponentially bounded} for large $t$ (for simplicity). More precisely, given $\alpha \in \R$, we introduce
\begin{equation} \label{defPB}
	PB_\alpha (\R^+) = \{ u: \R^+ \rightarrow \C / \; \exists \; (C, p) \in \R_{+} \times \N \  \mid |u(t)| \leq C \, e^{\alpha t} \, (1 + t^p),\, \forall \; t\in \mathbb{R}^+ \}.
\end{equation} 
(where $\R^+:=\{ t\in \R\mid t\geq 0\}$  and $\R_+:=\R^+\setminus \{ 0\}$).
For $\alpha = 0$, $PB_0 (\R^+)$ is the class of {polynomially bounded} functions. The Fourier-Laplace transform $\widehat{u}(\omega)$ of $u$ is  defined in the complex half space (see e.g. \cite{Say-16}, proposition 3.13):
\begin{equation} \label{defC+}
	\C_\alpha^+ = \{ \omega \in \C \;  / \; \operatorname{Im} \, \omega > \alpha \}, \quad (\mbox{where } \C_0^+ \mbox{ will be denoted by } \C^+ \mbox{ when }  \alpha = 0)
\end{equation}
by the following integral formula (we use here the convention which is usual for physicists)
\begin{equation} \label{defFLT}
	\forall \; \omega \in \C^+, \quad  \widehat{u}(\omega) = \frac{1}{\sqrt{2\pi}} \int_0^{+\infty}
	u(t) \; e^{\rmi \omega t} \; dt.
\end{equation}
Note that, with this convention, as soon as $u$ and $\partial_t u$ belong to $PB_\alpha (\R^+)$, we have
\begin{equation} \label{propFLT}
 \forall \; \omega \in \C_\alpha^+, \quad  \mbox{with } v := \partial_t u, \quad \widehat{v}(\omega) = - \rmi \omega \, \widehat{u}(\omega) + u(0), 
\end{equation}
which reduces to $\widehat{v}(\omega) = - \rmi \omega \, \widehat{u}(\omega)$ when $u(0)=0$.\\[12pt]
This transform is related to the usual Fourier transform $u(t) \rightarrow {\cal F}u(\omega)$ by 
\begin{equation} \label{Link_F-FLT}
	\forall \; \eta > \alpha, \quad \forall \; \omega \in \R, \quad \widehat{u}(\omega + \rmi\eta) = {\cal F}\big(u \, e^{-\eta t})(\omega)
\end{equation}
which proves in particular that  $\omega \in \R \mapsto \widehat{u}(\omega + \rmi\eta) \in L^2(\R)$ for any $\eta > \alpha$ and that 
\begin{equation} \label{FLT_L2}
 \int_{-\infty}^{+\infty} |\widehat u(\omega + \rmi\eta)|^2 \; d\omega = \int_0^{+\infty} |u(t)|^2 \, e^{-2\eta t} \; dt,
\end{equation}
by Plancherel's theorem. Moreover, one easily sees that
\begin{equation} \label{Analyticity}
	\forall \; u \in PB_\alpha(\R^+),  \quad \omega \mapsto \widehat{u}(\omega) \mbox{ is analytic in } \C_\alpha^+.
\end{equation}
One can expect that $\widehat{u}(\omega)$ can be extended as an analytic function in a domain of the complex plane that contains the half-space $\C_\alpha^+$. When needed, we shall use the same notation $\widehat{u}(\omega)$ for this extension. In the following $\omega$ will be referred to as the (possibly complex) {frequency}.\\[12pt]
The half-plane $ \C_\alpha^+$ in invariant under the transformation $\omega \rightarrow - \, \overline{\omega}$, which corresponds to the symmetry with respect to the imaginary axis. Laplace-Fourier transforms of {\bf real-valued} functions have a particular property with respect 
to this transformation in $PB_\alpha (\R^+)$
\begin{equation} \label{FLTrealfields}
	u(t) \in \R, \quad \forall \; t\geq 0, \quad \Longleftrightarrow \quad	\forall \; \omega \in \C^+, \quad  \widehat{u}(- \, \overline{\omega}) = \overline{\widehat{u}(\omega)}
\end{equation}
In the sequel, we assume that all the functions of time that are used in this article (for instance, any component of the electromagnetic field at a given point), belong to some $PB_\alpha(\R+)$.
\section{Maxwell's equations in dispersive media:  general properties} \label{sec-presentation}
Maxwell's equations relate the space variations of the electric and magnetic fields
${\bf E}({\bf x},t)$ and ${\bf H}({\bf x},t)$ (where ${\bf x} \in \R^3$ denotes the space variable and $t>0$ is the time) to the time variations of the corresponding electric and magnetic inductions ${\bf D}({\bf x},t)$ and ${\bf B}({\bf x},t)$:
\begin{equation} \label{Maxwell}
	\partial_t {\bf B} + {\bf rot} \, {\bf E} = 0, \quad \partial_t {\bf D} - {\bf rot} \, {\bf H} = 0, \quad {\bf x} \in \R^3, \quad t >0.
\end{equation}
These equations need to be completed by so-called {constitutive laws} that characterize the material in which electromagnetic waves propagate by relating the electric (or magnetic) field and the corresponding induction. In this paper, we shall restrict ourselves to materials which are linear, {local in space} and time-independent.\\[12pt]
\noindent In standard isotropic {dielectric media}, the fields and induction are proportional
\begin{equation} \label{Dielectric}
	{\bf D}({\bf x},t) = \varepsilon({\bf x}) \, {\bf E}({\bf x},t) , \quad  {\bf B}({\bf x},t) = \mu({\bf x}) \, {\bf H}({\bf x},t),
\end{equation}
where   $\varepsilon({\bf x}) > 0$ and $\mu({\bf x}) > 0$ are  respectively the electric permittivity and the magnetic permeability of the material at the point ${\bf x}$. In the vacuum, these coefficients are of course independent of ${\bf x}$:
$
\varepsilon({\bf x}) = \varepsilon_0 = (36\pi)^{-1} \, 10^{-9} , \mu({\bf x}) = \mu_0 = 4\pi \, 10^{-7} .
$ \\ [12pt]
In the matter, the law (\ref{Dielectric}) must be seen only as an approximation (good as soon as $\varepsilon$ and $\mu$ are ``almost'' real and constant over a broad range of frequencies) of the reality because it would violate the high frequency principle, see ({\bf HF}) below. To account these phenomena, one needs to abandon the idea that the constitutive laws are local in time and to accept e.g. that ${\bf D}({\bf x},t)$ depends
on the history of  ${\bf E}({\bf x},t)$ between $0$ and $t$, i.e. \begin{equation} \label{causal_law}{\bf D}({\bf x},t) = F\big({\bf x}
	,t \, ; \big\{{\bf E}({\bf x},\tau), 0 \leq \tau \leq t \big\} \big).\end{equation}
The above obeys a fundamental physical principle: the causality principle. Adding the {time invariance principle} by translation, i.e. that the material behaves the same way whatever the time one observes it, one infers that the function $F$ is also independent of time:
$F(x,t \, ;\cdot) = F(x \, ;\cdot)$. This implies that these laws are given by time convolution products (in the sense of distributions) with kernels supported in $\R^+$ (see \cite{Zemanian}, sections 5.10  and 5.11).
\noindent In that case, it is useful to write constitutive laws in the {frequency domain} with the {Fourier-Laplace} transforms  of the fields and inductions as follows
\begin{equation} \label{Dispersive}
	\widehat{\bf D}({\bf x}, \omega) = \varepsilon({\bf x}, \omega) \, \widehat{\bf E}({\bf x},\omega) , \quad  \widehat{\bf B}({\bf x},\omega) = \mu({\bf x}, \omega) \, \widehat{\bf H}({\bf x},\omega).
\end{equation}
where for each ${\bf x}$, $\omega \in \C_\alpha^+  \mapsto \varepsilon({\bf x}, \omega)$ 
and $\omega \in \C_\alpha^+  \mapsto \mu({\bf x}, \omega)$ are  complex valued functions of the frequency (for some $\alpha \geq 0$). Of course, these functions satisfy some particular properties: \\ [12pt]
\noindent {\bf Causality principle:} if  ${\bf E}(\bx,t)$ (or ${\bf H}(\bx,t)$) is causal ${\bf D}(\bx,t)$ (or ${\bf B}(\bx,t)$) is causal too. According to the property \eqref{Analyticity}, this leads to impose \\ [12pt]
$
\hspace*{1.5cm} {\bf (CP)} \qquad \omega \mapsto \varepsilon({\bf x}, \omega) \quad \mbox{and} \quad \omega \mapsto \mu({\bf x}, \omega) \quad \mbox{are analytic in } \C_\alpha^+.
$ \\ [12pt]
{\bf Reality principle:}  if ${\bf D}(\bx,t)$ (or ${\bf B}(\bx,t)$) is real valued, ${\bf E}(\bx,t)$ (or ${\bf H}(\bx,t)$)
is real valued to too.  According to the property (\ref{FLTrealfields}), this leads to impose (\ref{Dispersive}) and   \\ [12pt]
$
\hspace*{1.5cm}{\bf (RP)} \qquad \forall \; \omega \in \C^+_{\alpha}, \quad  {\varepsilon}({\bf x}, - \, \overline{\omega}) = \overline{{\varepsilon}({\bf x},\omega)}, \quad  {\mu}({\bf x}, - \, \overline{\omega}) = \overline{{\mu}({\bf x},\omega)}.
$  \\ [12pt]
{\bf High frequency principle:} at high frequency, due to the inertia of charge carriers, any material `` behaves as the vacuum", see e.g. \cite{landau-84}, being less and less dispersive. Mathematically, this gives  \\ [12pt]
$
\hspace*{1.5cm}  \ds {\bf (HF)} \qquad \forall \; \eta > \alpha, \mbox{ if } \; \operatorname{Im}\, \omega \geq \eta >  \alpha, \quad \lim_{|\omega| \rightarrow + \infty} \varepsilon({\bf x}, \omega) =\varepsilon_0, \quad 
\lim_{|\omega| \rightarrow + \infty} \mu({\bf x}, \omega) =\mu_0.
$  \\ [12pt]
When ${\bf(CP})$ holds, the property ${\bf (HF)}$ implies that fields and inductions have the same {time regularity}, for instance
$
t \rightarrow {\bf E}({\bf x},t) \!  \in \! H^s_{loc}(\R^+) \, \Rightarrow \,  {\bf D}({\bf x},t)\!  \in \!H^s_{loc}(\R^+)
$, see \cite{cas-kach-jol-17}, section 2.1. It also plays a role in the well-posedness of  Maxwell's equations  (see e.g. remark 3.15 of \cite{cas-kach-jol-17}).
\\[12pt]
In the matter, for given inductions ${\bf D}(\bx,t)$ and ${\bf B}(\bx,t)$, one defines the electric polatisation  ${\bf P}(\bx,t)$ (resp. ${\bf M}(\bx,t)$) as the difference between the electric field ${\bf E}(\bx,t)$ (resp. the magnetic field ${\bf H}(\bx,t)$) with the value that it would have in the vacuum: 
\begin{equation} \label{defPM}
	 {\bf P}({\bf x}, t) := {\bf D}({\bf x}, t) - \varepsilon_0 \, {\bf E}({\bf x}, t), \quad {\bf M}({\bf x}, t) := {\bf B}({\bf x}, t) - \mu_0 \,  {\bf H}({\bf x}, t).	
\end{equation}
Then, Maxwell's equations can be rewritten as
\begin{equation} \label{Maxwell2}
	\varepsilon_0 \, \partial_t {\bf E} + {\bf rot} \, {\bf H} +  \partial_t {\bf P}= 0, \quad  \mu_0 \, \partial_t {\bf H} - {\bf rot} \, {\bf E}  + \partial_t {\bf M} = 0, \quad {\bf x} \in \R^3, \quad t >0,
\end{equation}
and the constitutive laws (\ref{Dispersive}) can  be rewritten as follows:
\begin{equation} \label{Dispersive2}
	\widehat{\bf P}({\bf x}, \omega) =  \varepsilon_0 \,{\chi}_e({\bf x}, \omega) \, \widehat{\bf E}({\bf x},\omega) , \quad  \widehat{\bf M}({\bf x},\omega) =  \mu_0 \, {\chi}_m({\bf x}, \omega) \, \widehat{\bf H}({\bf x},\omega).
\end{equation}
where the electric and magnetic susceptibilities $\widehat{\chi}_e({\bf x}, \omega)$ and $\widehat{\chi}_e({\bf x}, \omega)$ are defined by 
\begin{equation} \label{defsus}
\widehat{\chi}_e({\bf x}, \omega) = \frac{\varepsilon({\bf x}, \omega)}{\varepsilon_0} -1, \quad  \widehat{\chi}_m({\bf x}, \omega) = \frac{\mu({\bf x}, \omega)}{\mu_0} -1.
\end{equation}
\begin{Example} [$L^1_{loc}(\R^+)$-convolutional media] \label{Exconv} These correspond to the case where $\widehat{\chi}_e({\bf x}, \omega)$  and $\widehat{\chi}_m({\bf x}, \omega)$ are Fourier-Laplace transforms of causal real valued functions $t \mapsto \chi_e({\bf x}, t)$ and $t \mapsto \chi_m({\bf x},t)$ in $L^1_{loc}(\R^+)$. In that case, the constitutive laws rewrite, in the time domain,
	\begin{equation} \label{Dispersive_time}
		\left\{ 	\begin{array}{ll}
			\ds {\bf D}({\bf x}, t) = \varepsilon_0 \, \Big( {\bf E}({\bf x}, t) + \int_0^t \chi_e({\bf x}, \tau) \; {\bf E}({\bf x}, t-\tau) \; d \tau \Big),\\[12pt]
			\ds {\bf B}({\bf x}, t) = \mu_0 \, \Big( {\bf H}({\bf x}, t) + \int_0^t \chi_m({\bf x}, \tau) \; {\bf H}({\bf x}, t-\tau) \; d \tau \Big).
		\end{array} \right.
	\end{equation}
\end{Example}
\begin{Example} [Local media] \label{Exloc}
We shall say that a dispersive material is {\bf local} if and only if 
$\omega \mapsto  \varepsilon({\bf x}, \omega) \mbox{ and } \omega \mapsto \mu({\bf x}, \omega)  \mbox{ are (irreducible) rational fractions}$, implying the existence of polynomials in $z$, parametrized by ${\bf x}$, $z \mapsto P_e({\bf x}, z) , Q_e({\bf x}, z) , P_m({\bf x}, z) , Q_m({\bf x}, z)$ such that 
	$$	\ds	\varepsilon({\bf x}, \omega) =   \frac{P_e({\bf x}, -\rmi \omega)}{Q_e({\bf x}, -\rmi \omega)} , \quad \mu({\bf x}, \omega) =   \frac{P_m({\bf x}, -\rmi \omega)}{Q_m({\bf x}, -\rmi \omega)}.$$
The reader will  check that $({\bf CP}, {\bf RP}, {\bf HF})$ are satisfied  if only if these polynomials satisfy $d^oP_e = d^oQ_e$ and $d^oP_m = d^oQ_m$ and have real coefficients. The  (somewhat misleading) denomination of local media is due to the fact that 
(\ref{Dispersive}) can be rewritten in terms of  ODEs:
\begin{equation} \label{Dispersivelocal}
		Q_e({\bf x}, \partial_t) \; {\bf P} = \varepsilon_0 \, P_e({\bf x}, \partial_t) \; {\bf E}, \quad Q_m({\bf x}, \partial_t) \; {\bf M} =\mu_0 \, P_m({\bf x}, \partial_t) \; {\bf H}
\end{equation}
However, using the theory of linear ODE's, it is easy to see that these are a particular case of the media described in Example 1 where the kernels are linear combinations of products of exponential and trigonometric functions (see \cite{cas-jol-ros-22}, section 1.1.2 fore more details).
\end{Example}

\section{Passive materials} \label{sec-passive}

	\noindent  Defining the electromagnetic energy as in the vacuum, i.e.
	\begin{equation} \label{defEnergy}
		{\cal E}(t) :=  \frac{1}{2}\int_{\R^3} \big( \varepsilon_0 \, |{\bf E}|^2  + \mu_0 \, |{\bf H}|^2 \, \big)({\bf x}, t)\; d{\bf x}, \quad t >0,
	\end{equation}
	we shall say that a material is {\it physically} passive \cite{Gralak-10,cas-kach-jol-17,cas-jol-ros-22} if, when ${\bf E}$, ${\bf H}$, ${\bf D}$ and ${\bf B}$ are causal fields solving \eqref{Maxwell} in the absence of source term, the corresponding electromagnetic energy does not increase between $0$ and $T$ for any $T \geq 0$, namely,
	\begin{equation} \label{propEnergy}
		{\cal E}(T) \leq {\cal E}(0).
	\end{equation}
	This definition well corresponds to the intuitive notion hidden behind the word ``passive": a passive medium cannot create energy from its own.
	\begin{Rem} \label{decroissance} One needs to be careful on the fact that  property (\ref{propEnergy}) does not imply that ${\cal E}(t)$ is a decreasing function of time. Indeed, since (\ref{propEnergy}) is supposed to hold only for causal fields, the initial time $t=0$ cannot be replaced by any other initial tim $t_0$. See also the particular case treated in \cite{cas-kach-jol-17} and more precisely the Proposition 3.40. 
	\end{Rem}
\noindent Multiplying the first equation of (\ref{Maxwell2}) by ${\bf E}$, the second one by ${\bf H}$, and integrating in space over $\R^3$ the sum of the two resulting identities, one obtains, after integration by parts to make disappear the curl terms, 
\begin{equation} \label{idEnergy}
	\ds \frac{d}{dt} \, {\cal E}(t) + \int_{\R^3} \big(  \partial_t {\bf P} \cdot {\bf E} +  \partial_t {\bf M} \cdot {\bf H} \big)({\bf x}, t) \,  d{\bf x}  = 0.
\end{equation}
Thus, for any $T > 0$,
\begin{equation} \label{varEnergy}
	{\cal E}(T) - {\cal E}(0) + \int_{\R^3} \Big[ \int_0^T \big(  \partial_t {\bf P} \cdot {\bf E} +  \partial_t {\bf M} \cdot {\bf H} \big)({\bf x}, t) \, dt \Big] \; d{\bf x}  = 0.
\end{equation}
We thus infer that a medium is physically passive if for any $({\bf E}, {\bf H}, {\bf P}, {\bf M})$ satisfying (\ref{Maxwell2}, \ref{Dispersive2}) 
\begin{equation} \label{varEnergybis}
 \int_{\R^3} \int_0^T \big(  \partial_t {\bf P} \cdot {\bf E} +  \partial_t {\bf M} \cdot {\bf H} \big)({\bf x}, t) \, dt \; d{\bf x}  \geq 0.
\end{equation}
Considering \eqref{varEnergybis}, we introduce a stronger notion of passivity  (see e.g. \cite{Ber-11,Wel-jon-Yeh-16,cas-mil-17,cas-kach-jol-17}) which does not refer to Mawxell's equations anymore, but only the constitutive laws. 
\begin{Def}(Passivity) \label{Passivity}
	\noindent  A material is {passive} if, when $({\bf E}, {\bf P})$ and $({\bf H}, {\bf M})$ are related by the constitutive laws \eqref{Dispersive2}, for any $\bf x$ and $T>0$, 
	\begin{equation} \label{varEnergy2}
	\int_0^T \partial_t {\bf P}({\bf x}, t) \cdot {\bf E}({\bf x}, t)  \, dt  \geq 0 \quad \mbox{ and } \quad \int_0^T \partial_t {\bf M}({\bf x}, t) \cdot {\bf H} ({\bf x}, t) \, dt \geq 0.
	\end{equation}
\end{Def}
\noindent Of course, a passive medium in the sense of the above definition is a fortiori physically passive. The reverse statement seems to be often admitted in the physical community. For the authors of this chapter, it is however an open question (see the section 1.3 of \cite{cas-jol-ros-22} for more details). \\ [12pt]
One of the great interests of this notion of passivity of Definition \ref{Passivity} is that it can be traduced in terms of properties of the functions $\omega \mapsto \varepsilon ({\bf x}, \omega)$ and $\omega \mapsto \mu({\bf x}, \omega)$. This leads to introduce the important
notion of {Herglotz} function (se for e.g. \cite{Nev-22,Ges-00}).
\begin{Def} {(Herglotz function)} \label{def_herg} A Herglotz function is a function  $f(\omega) : \C^+ \rightarrow \C$, analytic in $\C^+$,  whose image is included in the closure of $\, \C^+$, i.e. 
	\begin{equation} \label{propHerglotz}
		\operatorname{Im} \, \omega > 0 \quad \Longrightarrow \quad  \operatorname{Im} \, f(\omega) \geq 0 .
	\end{equation}
\end{Def} 
\noindent 
\begin{Rem} \label{rem_herg} Note that, by the open mapping theorem (see e.g. \cite{Rud-87}), as soon an Herglotz function in not constant, the inequality \eqref{propHerglotz} is strict, i.e. $\operatorname{Im} \, f(\omega) > 0 $.  \end{Rem} 
\noindent Since $ \operatorname{Im} (1/f) = -  \operatorname{Im}(f) / |f|^2$, Herglotz functions are invariant by the transformation $f \rightarrow -1/f$.   Let us mention two properties which   will be particularly useful later. 
\begin{Lem} \label{lem_poles} [Poles and zeros]  Let $f$ be a non zero Herglotz function that can be extended to the whole complex half-space into a meromorphic function, still denoted $f$, the poles and zeros of $f$ are located in the half-space $\overline{\C^-} := \big \{ \operatorname{Im} \, z \leq 0 \}$ \end{Lem}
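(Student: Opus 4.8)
The plan is to handle the poles and the zeros separately, in each case reducing the claim to the two facts already at our disposal: that a Herglotz function is analytic on $\C^+$, and that a non-constant Herglotz function has strictly positive imaginary part there (Remark \ref{rem_herg}).

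First I would dispose of the poles, which is immediate. By Definition \ref{def_herg} the function $f$ is analytic on the open upper half-space $\C^+$, so its meromorphic extension can have no poles in $\C^+$. Since $\C = \C^+ \cup \overline{\C^-}$, every pole of $f$ must therefore lie in $\overline{\C^-} = \{\operatorname{Im} z \le 0\}$. No use of the Herglotz inequality is needed for this half of the statement.

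For the zeros I would argue as follows. If $f$ is a non-zero constant there is nothing to prove, since such an $f$ has neither zeros nor poles; so assume $f$ is non-constant. By Remark \ref{rem_herg} the inequality \eqref{propHerglotz} is then strict, i.e. $\operatorname{Im} f(\omega) > 0$ for every $\omega \in \C^+$. In particular $f(\omega) \ne 0$ throughout $\C^+$, because a zero would force $\operatorname{Im} f(\omega) = 0$, contradicting the strict positivity. Hence $f$ has no zeros in $\C^+$, and, exactly as for the poles, every zero lies in $\overline{\C^-}$.

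A more symmetric way to package the zero case, which I would also mention, is to invoke the invariance of the Herglotz class under $f \mapsto -1/f$ noted just above the statement: the function $g := -1/f$ is again Herglotz (and non-constant), its poles are precisely the zeros of $f$, and the pole argument of the first step applied to $g$ places them in $\overline{\C^-}$. In either formulation the only point requiring care --- and the nearest thing to an obstacle --- is the zero case, where one must not merely use $\operatorname{Im} f \ge 0$ but the strict inequality of Remark \ref{rem_herg}; the meromorphy of the global extension is what guarantees that zeros and poles are isolated, so that the phrase ``located in $\overline{\C^-}$'' is meaningful.
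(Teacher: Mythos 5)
Your proof is correct, and it differs from the paper's in a small but genuine way on the pole half of the statement. The paper argues in the opposite order: it first places the zeros in $\overline{\C^-}$ exactly as you do, via the strict inequality $\operatorname{Im} f(\omega) > 0$ of Remark \ref{rem_herg}, and then handles the poles by applying that zero statement to the Herglotz function $-1/f$, whose zeros are precisely the poles of $f$. You instead dispose of the poles directly from Definition \ref{def_herg}: the meromorphic extension agrees on $\C^+$ with a function analytic there, so it can have no poles in $\C^+$. Your pole argument is the more elementary of the two --- it uses only analyticity, not the Herglotz inequality nor the invariance under $f \mapsto -1/f$ --- and your explicit treatment of the non-zero constant case is a point of care the paper passes over silently. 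One caution about your ``more symmetric'' packaging of the zero case via $g := -1/f$: for $g$ to qualify as Herglotz one must already know that $f$ does not vanish on $\C^+$ (otherwise $g$ fails to be analytic there), which is exactly what is being proved, so that variant is circular unless the strict-positivity argument is run first. The paper's own use of $-1/f$ escapes this objection only because it is invoked for the poles \emph{after} the zeros have been settled; your primary argument, which derives the zero statement directly from the strict inequality, is immune to it.
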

\begin{proof}
	The fact that $\operatorname{Im}\, f(\omega) > 0 $ in ${\C}^+$ imply that the zeros of $f$ belong to  $\overline{\C^-}$. For the poles, it suffices to reason with $-1/f$.
	\end{proof} 
\begin{Lem} \label{lem_positivity} [Positivity property] Let $f(\omega)$ be a non constant  Herglotz function that can be extended as a meromorphic function in some half-space 
	$\C_\alpha$ with $\alpha < 0$  (note that $\R \subset \C_\alpha$) which takes real values along the real axis:
\begin{equation} \label{hyp_real}
	\forall \; \omega \in \R \setminus {\cal P}_\R, \quad f(\omega) \in \R.
\end{equation} 
where ${\cal P}_\R$ denotes the discrete set of real poles of $f$. Then $f(\omega)$ satisfies the positivity property
\begin{equation} \label{positivity}
	\forall \; \omega \in \R \setminus {\cal P}_\R, \quad  f'(\omega)>0,
\end{equation} 
meaning that, between two consecutive poles in ${\cal P}_\R$, $\omega \mapsto f(\omega)$ is strictly increasing. 
Furthermore, the real poles and real zeros are of multiplicity one.
\end{Lem}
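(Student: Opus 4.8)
The plan is to exploit two facts in tandem: that $f$ is real on the real axis, so by the Schwarz reflection principle $f(\overline{\omega}) = \overline{f(\omega)}$ near any point of $\R \setminus {\cal P}_\R$, whence the Taylor coefficients of $f$ at a real regular point are all real; and that, by Remark~\ref{rem_herg}, a non-constant Herglotz function satisfies the \emph{strict} inequality $\operatorname{Im} f(\omega) > 0$ throughout $\C^+$. Both the positivity \eqref{positivity} and the simplicity statements will follow from a local analysis of $\operatorname{Im} f$ in a small sector of $\C^+$ centred at a real point.

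To prove \eqref{positivity}, I would fix $x_0 \in \R \setminus {\cal P}_\R$. Since $f$ is holomorphic and non-constant near $x_0$ with real Taylor coefficients, write $f(\omega) = f(x_0) + a_k\,(\omega - x_0)^k + O\big((\omega-x_0)^{k+1}\big)$, where $k \geq 1$ is the order of the first non-vanishing derivative and $a_k \in \R \setminus \{0\}$. Setting $\omega = x_0 + \epsilon\, \rme^{\rmi \theta}$ with $\epsilon \to 0^+$ and $\theta \in (0,\pi)$ (so that $\omega \in \C^+$), one obtains $\operatorname{Im} f(\omega) = a_k\, \epsilon^k \sin(k\theta)\,(1 + o(1))$. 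If $k \geq 2$, then $\theta \mapsto \sin(k\theta)$ changes sign on $(0,\pi)$, so one may pick $\theta$ making $\operatorname{Im} f(\omega) < 0$ for small $\epsilon$, contradicting $\operatorname{Im} f > 0$ in $\C^+$. Hence $k = 1$; and then $\sin\theta > 0$ on $(0,\pi)$ forces $a_1 = f'(x_0) > 0$, which is exactly \eqref{positivity}. Since $f'$ is real on $\R \setminus {\cal P}_\R$ (again by reflection), strict monotonicity between consecutive real poles follows.

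The multiplicity claims are then immediate. At a real zero $x_0$ of $f$ one has $f(x_0) = 0$ but $f'(x_0) > 0$ by \eqref{positivity}, so the zero is simple. For the real poles I would invoke the invariance of the Herglotz class under $f \mapsto -1/f$ noted just before the lemma: the function $g := -1/f$ is again a non-constant Herglotz function, meromorphic in $\C_\alpha$ and real on the real axis away from its poles, so the preceding argument shows that its real zeros are simple; since the zeros of $g$ are precisely the poles of $f$ (a pole of order $p$ of $f$ being a zero of order $p$ of $g$), the real poles of $f$ are simple as well.

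The one genuinely delicate point is the passage from the naive first-order estimate $\operatorname{Im} f(x_0 + \rmi y) = y\, f'(x_0) + O(y^2)$, which only yields $f'(x_0) \geq 0$, to the strict inequality. This is precisely where the non-constancy of $f$ (through the strict positivity of $\operatorname{Im} f$ in $\C^+$) and the sectorial sign analysis are indispensable; without them a constant real value would be a spurious counterexample.
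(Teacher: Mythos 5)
Your proof is correct and follows essentially the same route as the paper's: real Taylor coefficients at a real regular point, a sectorial analysis of $\operatorname{Im} f$ in $\C^+$ exploiting the sign change of $\sin(k\theta)$ for $k \geq 2$ together with the strict positivity $\operatorname{Im} f>0$ of Remark~\ref{rem_herg}, and the reduction $f \mapsto -1/f$ to handle the real poles. The only cosmetic differences are that you obtain the reality of the Taylor coefficients via Schwarz reflection rather than by induction along the real line, and you run the sector argument once on the first non-vanishing coefficient instead of the paper's two steps (first $f'(\omega_0)\geq 0$ from the vertical direction, then exclusion of $f'(\omega_0)=0$ by contradiction).
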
 
\begin{proof} Let $ \eqref{positivity}$. 
As $f$ is analytic at $\omega_0$ and real valued on $\R \setminus {\cal P}_\R$, one shows  by induction via the Taylor expansion of  $f$ at $\omega_0$ expressed on the real  line  that $f^{(n)}(\omega_0)\in \bbR, \, \forall  n\in \mathbb{N}$. \\[6pt]
Let $\delta > 0$. When $\delta \rightarrow 0$, as $f(\omega_0), \,  f'(\omega_0) \in \R$, 
	$$
	f(\omega_0 + \rmi \, \delta) = f(\omega_0) + \rmi \, \delta \, f'(\omega_0) + o(\delta) \quad \mbox{  implies } \quad	\operatorname{Im}\, f(\omega + \rmi \, \delta)= \, \delta \, f'(\omega) +o(\delta).
	$$
As  $\operatorname{Im} \,  f(\omega + \rmi \, \delta)> 0$, one has necessary $f'(\omega_0)\geq0$. To conclude to \eqref{positivity}, we need to  prove that $f'(\omega_0)\neq 0$. To this aim, one assumes by contradiction  that $f'(\omega_0)=0$. As $f'$ is analytic on an open neighborhood $\mathcal{U}_{\omega_0}$  of  $\omega_0$, it is not zero on $\mathcal{U}_{\omega_0}$.  (Indeed, because of the analytic continuation principle,  $f$ is not constant on $\mathcal{U}_{\omega_0}$ since  $f$ is meromorphic and  not constant on $\bbC_{\alpha}$ by assumption). Thus, its  zeros in $\mathcal{U}_{\omega_0}$  are isolated. Hence,   the sequence $(f^{(m)}(\omega_0))_{m\geq 2}$ is different from $0$ and there exists  $n\geq 2$  such that 
$$
f(\omega)-f(\omega_0)= \frac{f^{(n)}(\omega_0)}{n!} (\omega-\omega_0)^n +o((\omega-\omega_0)^n) \  \mbox{ as }  \ \omega \to \omega_0 \ \mbox{ with }   \ \frac{f^{(n)}(\omega_0)}{n!}\in \bbR \setminus \{ 0\}.
$$
Thus,  for $\omega\in \bbC^+$ given by  $\omega- \omega_0= \rho \rme^{\rmi \theta}$ with $\rho>0$ sufficiently small and $\theta\in (0, \pi)$, one gets
$$
\operatorname{Im}f(\omega)=\frac{f^{(n)}(\omega_0)}{n!} \rho^n \sin(n \, \theta)+ o(\rho^n) \ \mbox{ as }  \ \omega \to \omega_0.
$$
As $n \geq 2$, $\sin (n \theta)$ describes $(-1,1)$ when $\theta$ varies in $(0, \pi)$ which leads to a contradiction since $\operatorname{Im}f(\omega)$ would take a negative value for some $\omega\in \bbC^+$ which is impossible since $f$ is a Herglotz function. Hence, one concludes to \eqref{positivity}. \\[6pt]
Thus, as by  \eqref{positivity},  $f'(\omega)>0$ for $\omega \in \R \setminus {\cal P}_\R$,  the real zeros of  $f$ are of multiplicity one. Applying the same reasoning to  to Herglotz function $-1/f$ (which satisfies also the  assumptions of this Lemma) gives that the real poles  (i.e.  the zeros of $-1/f$) of $f$ are also of multiplicity one.
	\end{proof}
\noindent The fundamental result of this section is given by the \\ [-18pt]
\begin{Thm}
A necessary and sufficient condition for a material obeying the constitutive laws \eqref{Dispersive} to be passive is that
\begin{equation} 
\omega \mapsto \omega \, \varepsilon ({\bf x}, \omega) \mbox{ and } \omega \mapsto \omega \, \mu({\bf x}, \omega) \mbox{ are Herglotz functions.}
\end{equation}
\end{Thm}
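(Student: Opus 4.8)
The plan is to recast the time-domain passivity inequality of Definition \ref{Passivity} as a single positivity condition in the frequency domain, and then to recognise that condition as exactly the Herglotz property of $\omega\mapsto\omega\,\varepsilon(\bx,\omega)$ (and, identically, of $\omega\mapsto\omega\,\mu(\bx,\omega)$). Since $\varepsilon$ and $\mu$ act as scalars at fixed $\bx$, it suffices to treat one Cartesian component of the electric relation, the magnetic one being handled verbatim; I fix $\bx$ and drop it from the notation. First I would deal with the fact that passivity is required for every finite $T>0$ and not only in the limit $T\to+\infty$. Given a causal field $\bE$, set $\bE_T:=\bE\,\mathbf{1}_{[0,T]}$ and let $\bP_T$ be the polarisation driven by $\bE_T$ through \eqref{Dispersive2}. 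Causality of the convolution kernel gives $\bP_T(t)=\bP(t)$ for $t\le T$, whence
\begin{equation}
\int_0^T \partial_t\bP\cdot\bE\,dt=\int_0^T\partial_t\bP_T\cdot\bE_T\,dt=\int_{0}^{+\infty}\partial_t\bP_T\cdot\bE_T\,dt ,
\end{equation}
the last equality because $\bE_T$ vanishes after $T$. Thus the family of finite-time inequalities is equivalent to the single inequality $\int_0^{+\infty}\partial_t\bP_T\cdot\bE_T\,dt\ge0$ as $\bE_T$ ranges over all real, causal, compactly supported fields.

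Next I would pass to frequencies by Plancherel's theorem \eqref{FLT_L2}. Using $\widehat{\partial_t\bP_T}=-\rmi\omega\,\widehat{\bP_T}$ from \eqref{propFLT} (note $\bP_T(0)=0$), the law $\widehat{\bP_T}=\varepsilon_0\,\widehat\chi_e\,\widehat{\bE_T}=(\varepsilon-\varepsilon_0)\widehat{\bE_T}$ from \eqref{defsus}, and the reality of $\bE_T$, the boundary values on $\R$ give the (automatically real) identity
\begin{equation}
\int_0^{+\infty}\partial_t\bP_T\cdot\bE_T\,dt=\int_{\R}\Imag\big[\omega\,\varepsilon(\omega)\big]\,|\mathcal{F}\bE_T(\omega)|^2\,d\omega ,
\end{equation}
where I used $\Real[-\rmi\omega\varepsilon_0]=0$ and $\Real[-\rmi\omega\varepsilon(\omega)]=\omega\,\Imag\varepsilon(\omega)=\Imag[\omega\varepsilon(\omega)]$ on the real axis. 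This is the key identity: passivity holds if and only if $\int_\R\Imag[\omega\varepsilon(\omega)]\,|\mathcal F\bE_T|^2\,d\omega\ge0$ for all such test fields.

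The sufficiency direction is then immediate: if $g(\omega):=\omega\,\varepsilon(\omega)$ is Herglotz, its boundary values satisfy $\Imag g(\omega)\ge0$ for real $\omega$, so the integrand is nonnegative and the inequality holds. For the necessity direction I would argue by concentration. By (RP) the weight $\Imag[\omega\varepsilon(\omega)]$ is even, consistently with $|\mathcal F\bE_T|^2$ being even for real causal $\bE_T$; choosing $\bE_T(t)=\cos(\omega_0 t)\,\psi(t)$ with $\psi$ a real causal bump of increasing support makes $|\mathcal F\bE_T|^2$ an approximate identity concentrating at $\pm\omega_0$, which forces $\Imag[\omega_0\varepsilon(\omega_0)]\ge0$ for a.e. $\omega_0\in\R$. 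It then remains to promote this boundary positivity to the open half-plane: $\Imag g$ is harmonic in $\C^+$ (analyticity there coming from (CP)), its boundary trace is $\ge0$, and by (HF) $g(\omega)\sim\varepsilon_0\,\omega$ as $|\omega|\to\infty$ so that $\Imag g\to+\infty$ in the prescribed direction; a minimum principle of Phragm\'en--Lindel\"of type then yields $\Imag g\ge0$ throughout $\C^+$, i.e. $g$ is Herglotz.

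I expect the main obstacle to be this last step, together with the regularity bookkeeping that legitimises the Plancherel computation: one must justify that $\varepsilon$ admits boundary values on $\R$ to which the identity applies, that $\partial_t\bP_T,\bE_T\in L^2$ (guaranteed by (HF), which makes $\omega\,\widehat\chi_e(\omega)$ bounded), and, above all, that passivity forces $\varepsilon$ to be analytic in \emph{all} of $\C^+$, ruling out poles in the open upper half-plane, since (CP) a priori only gives analyticity in $\C_\alpha^+$ for some $\alpha\ge0$. The harmonic minimum-principle argument must be run with enough care about growth at infinity and near possible real singularities to close the equivalence.
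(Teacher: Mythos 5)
Your proposal follows essentially the same route as the paper's sketch: reduce to compactly supported fields, use Plancherel to identify the passivity integral with $\int_\R \operatorname{Im}\big(\omega\,\varepsilon(\omega)\big)\,|\widehat{\bf E}(\omega)|^2\,d\omega$, deduce the boundary positivity on $\R$ (sufficiency immediately, necessity by concentration/density), and then lift that positivity into $\C^+$ by a harmonic-function argument. The only point where the paper is cleaner is the final step, which you correctly flag as the main obstacle: rather than running a Phragm\'en--Lindel\"of minimum principle on $\omega\,\varepsilon(\omega)$, which grows linearly at infinity, the paper applies the ordinary maximum principle to $u := \operatorname{Im}\big(\omega\,\chi_e(\omega)\big)$ --- which, under the $W^{1,1}$ kernel hypothesis of Example \ref{Exconv}, is harmonic in the upper half-plane, nonnegative on the real axis and tends to $0$ at infinity --- and only afterwards adds back the term $\varepsilon_0 \operatorname{Im}\,\omega \geq 0$ to conclude $\operatorname{Im}\big(\omega\,\varepsilon(\omega)\big) > 0$ in $\C^+$.
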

\noindent A general (but very technical) proof is done  in \cite{Zemanian} in the context  of  distributions valued in a Banach spaces.
A simpler proof of this theorem is done in detail in the case of Example \ref{Exconv} where the causal kernels $(\chi_e, \chi_m)$ belong to $W^{1,1}(\R)$ in which case $\omega \mapsto \omega \, \varepsilon ({\bf x}, \omega)$ and $\omega \mapsto \omega \, \mu({\bf x}, \omega)$ are analytic in $\C^+$ and continuous on $\overline{\C^+}$ (which implies directly ({\bf CP}) with $\alpha = 0$) and tend to $0$ at $\infty$ (by Riemann-Lebesgue theorem). 
The key steps of the proof (for $\varepsilon ({\bf x}, \omega)$) are then (we remain rather formal below, technical details are treated in \cite{cas-kach-jol-17}): \\ [5pt] 
(i) One first shows that $\operatorname{Im} \big(\omega \, \varepsilon ({\bf x}, \omega)\big)\geq 0$  for $\omega \in \R$. Indeed, consider for instance ${\bf E}$ compactly supported in time. Using both the causality ({\bf CP}) and reality ({\bf PR})  principles  as well as  Plancherel's theorem, we get (for $T$ large enough  so that $[0,T]$ contains the support of $\bE$):
	$$
\left|	\begin{array}{lll}
\ds 	\int_0^T \partial_t {\bf P}({\bf x}, t) \cdot {\bf E}({\bf x}, t)  \, dt & = & \ds  - \int_{\R} \rmi \, \omega \,  \widehat{\bf P}({\bf x}, \omega) \cdot \overline{\widehat{\bf E}({\bf x}, \omega)}   \, d \omega \\ [12pt]
	& = & \ds - \int_{\R} \rmi \, \omega \, \chi_e({\bf x}, \omega) \, \big|\widehat{\bf E}({\bf x}, \omega)\big|^2 \, d \omega 
	\\ [12pt]
	\quad \quad (\mbox{as } {\bf E}, {\bf P} \mbox{ are real}) & = & \ds \int_{\R} \operatorname{Im}  \big(\omega \, \chi_e({\bf x}, \omega)\big) \;\, \big|\widehat{\bf E}({\bf x}, \omega)\big|^2 \, d \omega .
	\end{array} \right.
	$$
	As the above in true for all compactly supported fields, by density
	$$
	\forall \; \omega \in \R, \quad \operatorname{Im}  \big(\omega \, \varepsilon ({\bf x}, \omega) \big) =	\operatorname{Im}  \big(\omega \, \chi_e ({\bf x}, \omega) \big) \geq 0
	$$
\noindent \hspace*{-0.2cm} (ii) One then shows that $\operatorname{Im} \big(\omega \, \varepsilon ({\bf x}, \omega)\big) > 0$  in $\C^+$.  Indeed, let us set 
$$u(x,y) := \operatorname{Im}  \big( (x+\rmi y) \, \chi_e (\cdot, x+\rmi y)\big) \mbox{ for } (x,y) \in  \R \times \R^+.$$ 
\noindent \hspace*{-0.2cm} The function $u$ is harmonic in the upper-half plane $\bbR\times \bbR_+$ (as the imaginary part of an analytic function in $\bbC^+$), positive  $\partial (  \R \times \R^+)= \R \times \{ 0\}$ (i.e. on the real line) and tends to $0$ at $\infty$. By the maximum principle $u(x,y) \geq 0$ in $ \R^2_+$, which implies $\operatorname{Im}  \big(\omega \, \varepsilon ({\bf x}, \omega)\big) = u(x,y) + \varepsilon_0\, \operatorname{Im} \, \omega > 0$.

\section{Maxwell's equations in general passive media} \label{Generalpassive}
In this section, for the simplicity of the presentation, we assume that the permittivity and permeability are independent of ${\bf x}$, i. e. that the medium is homogeneous
\begin{equation} \label{homedium} 
	\varepsilon({\bf x}, \omega) = \varepsilon(\omega), \quad 	\mu({\bf x}, \omega) = \mu(\omega).
	\end{equation}
	However, all what follows in this section extends to the heterogeneous case without any difficulty.
\subsection{A representation of $\varepsilon(\omega)$ and $\mu(\omega)$ in passive media}  \label{Representationforula}
In this section, we assume the familiarity of the reader with basics of measure theory on $\mathbb{R}$ \cite{Rud-87}. What follows is based on the version of the well-known Nevanlinna's representation theorem for Herglotz functions  satisfying the symmetry property 
\begin{equation}  \label{symmetry} 
\forall \; \omega \in \C^+, \quad f(-\overline{\omega}) = - \overline{f(\omega)}
	\end{equation} 
	\begin{Thm}\label{thm.herglotz}
		Let $f(\omega)$ be a Herglotz function satisfying \eqref{symmetry}, then there exists  a unique positive regular measure $\nu$ on the set $\mathcal{B}(\bbR)$ of Borelians of $\R$ satisfying
	\begin{equation} \label{propmes} 
			\forall \, E \in \mathcal{B}(\R), \quad \nu(-E) = \nu(E) \quad \mbox{(evenness),} 
	\quad  \int_{\R} \
				\frac{\md  \nu(\xi)}{1+\xi^2} < + \infty, \quad \mbox{(finiteness}),
			\end{equation} 
such that $f(\omega)$ admits the following integral representation
	\begin{equation}\label{expepsmu}
\forall \; \omega \in \C^{+}, \quad 	f(\omega)= \omega \Big( \alpha +  \displaystyle \int_{\R}\frac{\md \nu( \xi)}{\xi^2 - \omega^2} \; \Big), \quad \mbox{where }  \alpha:= \lim_{y \rightarrow +\infty}f(\rmi \, y)/{\rmi \, y}
	\end{equation}
	\end{Thm}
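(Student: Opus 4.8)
The plan is to deduce this from the classical Herglotz--Nevanlinna representation theorem and then to exploit the odd symmetry \eqref{symmetry} to bring it into the stated form. Recall that every Herglotz function $g$ admits a \emph{unique} representation
\[
g(\omega) = a + b\,\omega + \int_{\R}\Big(\frac{1}{\xi-\omega}-\frac{\xi}{1+\xi^2}\Big)\,\md\mu(\xi),
\]
with $a\in\R$, $b\geq 0$ and $\mu$ a positive regular Borel measure satisfying $\int_\R(1+\xi^2)^{-1}\,\md\mu(\xi)<+\infty$. For fixed $\omega\in\C^+$ the bracketed integrand is $O(\xi^{-2})$ as $|\xi|\to+\infty$ and bounded near each finite $\xi$ (since $\operatorname{Im}\omega>0$), so the integral converges absolutely. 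I would apply this to $f$, obtaining data $(a,b,\mu)$.

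The first real step is to show that the symmetry \eqref{symmetry} forces $a=0$ and $\mu$ even. I would introduce $h(\omega):=-\overline{f(-\overline{\omega})}$; since $-\overline{\omega}\in\C^+$ whenever $\omega\in\C^+$ and conjugation flips the sign of the imaginary part twice, $h$ is again Herglotz, and \eqref{symmetry} says precisely $f=h$. Computing the Nevanlinna data of $h$ by substituting $\zeta=-\overline{\omega}$ in the representation of $f$, conjugating, and then pushing the measure forward under $\xi\mapsto-\xi$ (call the image $\mu^-$), one finds
\[
h(\omega) = -a + b\,\omega + \int_{\R}\Big(\frac{1}{\xi-\omega}-\frac{\xi}{1+\xi^2}\Big)\,\md\mu^-(\xi).
\]
By the uniqueness part of the representation, $f=h$ yields $a=-a$, hence $a=0$, and $\mu=\mu^-$, i.e. $\mu$ is even; this is exactly the evenness condition in \eqref{propmes}.

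With $\mu$ even and $a=0$, I would symmetrise the integrand rather than split it (the two pieces are separately non-integrable). Since $\int_\R G(\xi)\,\md\mu=\int_\R G(-\xi)\,\md\mu$ for every $\mu$-integrable $G$, replacing the integrand $G_\omega(\xi)$ by $\tfrac12\big(G_\omega(\xi)+G_\omega(-\xi)\big)$ makes the odd term $\xi/(1+\xi^2)$ cancel and leaves
\[
\tfrac12\Big(\frac{1}{\xi-\omega}-\frac{1}{\xi+\omega}\Big)=\frac{\omega}{\xi^2-\omega^2},
\]
which is again $O(\xi^{-2})$, hence integrable. This gives $f(\omega)=\omega\big(b+\int_\R(\xi^2-\omega^2)^{-1}\,\md\mu(\xi)\big)$; setting $\nu:=\mu$ and $\alpha:=b$ produces \eqref{expepsmu}, and the denominator $\xi^2-\omega^2$ never vanishes for real $\xi$ and $\omega\in\C^+$, so the integral is well defined. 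Dividing by $\rmi y$ and letting $y\to+\infty$, dominated convergence (with dominating function $C(1+\xi^2)^{-1}$) kills the integral and identifies $\alpha=\lim_{y\to+\infty}f(\rmi y)/(\rmi y)=b$, confirming $\alpha\geq 0$. Uniqueness of $\nu$ is inherited directly from the uniqueness of $\mu$ in the Nevanlinna representation.

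The main obstacle is really upstream of this argument: if one does not wish to quote the classical representation, one must build it --- e.g.\ via the Poisson/Herglotz kernel on a disc followed by a Helly selection argument and Stieltjes inversion --- which is the genuinely technical part. Granting that theorem, the only delicate points are the honest bookkeeping in the computation of the Nevanlinna data of $h$ (the signs under conjugation and reflection) and the justification that symmetrisation is legitimate despite the non-integrability of the individual summands; both are routine once the absolute convergence of the \emph{combined} integrands is kept in view.
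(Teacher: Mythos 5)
Your proof is correct, and it follows the route the paper itself relies on: the paper gives no in-text proof of Theorem \ref{thm.herglotz} but defers to \cite{cas-kach-jol-17} (Theorem 4.5 and Appendix B), where the result is obtained in the same way, by specializing the classical Herglotz--Nevanlinna representation and using the symmetry \eqref{symmetry} to force $a=0$ and the evenness of the measure, then symmetrizing the kernel. The only point worth making explicit is that uniqueness of $\nu$ requires the converse of your symmetrization step --- any even $\nu$ satisfying \eqref{propmes} and \eqref{expepsmu} produces Nevanlinna data $(0,\alpha,\nu)$ for $f$, so uniqueness of that data applies --- but this is exactly your identity $\frac{\omega}{\xi^2-\omega^2}=\frac{1}{2}\big(\frac{1}{\xi-\omega}-\frac{1}{\xi+\omega}\big)$ read backwards, so the gap is only one of bookkeeping.
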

\noindent	A complete proof of this theorem can be found in \cite{cas-kach-jol-17} (Theorem 4.5 and appendix B). It is worthwhile mentioning that 	$\nu$ is related to $f$  (for any $a < b$) as follows
		\begin{equation}  \label{measure}
				 \nu(a)=\lim_{\eta\to 0^{+}}\eta \;  \operatorname{Im} \,  f(a+  \rmi \eta), \quad \quad 
			\nu\big([a,b]\big)+ \nu\big( ]a,b[ \big) =\lim_{\eta\rightarrow 0^{+}} \frac{2}{\pi} \int_{a}^{b}  \operatorname{Im}  \, f(x+ \rmi \, \eta) \, dx .
	\end{equation}
	The above formula emphasize the key role of the limit of the imaginary part $\operatorname{Im}  \, f$ when one approaches the real axis from above.
	\begin{Rem} \label{remNevanlinna2} 
		The formulas (\ref{measure}) provide the measure of any interval $[a,b)$, $(a,b]$, $(a,b)$ or $[a,b]$. Thus it defines completely $\nu$ as a borelian measure \cite{Rud-87}.
	\end{Rem}	
	\noindent As, in a passive media, $\omega \, \varepsilon(\omega)$ and $\omega \, \mu(\omega)$ are Herglotz functions satisfying \eqref{symmetry} (because of  $({\bf RP})$), one immediately deduces from Theorem \ref{thm.herglotz}  the following
	\begin{Cor}\label{cor.herglotz}
		Let $\varepsilon(\omega)$ and $\mu(\omega)$ be the permittivity and  permeability of a homogeneous 
		passive medium. Then, there exists two unique regular  positive  measures $\nu_e$ and $\nu_m$ on $\mathcal{B}(\bbR)$ satisfying \eqref{propmes} such that 
		\begin{equation}\label{expepsmubis}
			\varepsilon(\omega)=\varepsilon_0 \; \Big( 1+ \displaystyle \int_{\R}\frac{\md \nu_e( \xi)}{{\xi^2 - \omega^2}} \Big) , \quad \mu(\omega)=\mu_0 \; \Big( 1+ \displaystyle \int_{\R}\frac{\md \nu_m( \xi)}{{\xi^2 - \omega^2}} \Big)\ \mbox{ for } \omega \in \C^{+}.
		\end{equation}
	\end{Cor}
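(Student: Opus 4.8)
The plan is to read off the corollary as a direct specialization of Theorem \ref{thm.herglotz}, applied separately to the two functions $\omega \mapsto \omega\,\varepsilon(\omega)$ and $\omega \mapsto \omega\,\mu(\omega)$. The only genuine work is to confirm that both functions meet the hypotheses of that theorem, to identify the additive constant $\alpha$ through the high-frequency principle, and to perform a harmless rescaling of the resulting measures so as to bring the representation into the normalized form \eqref{expepsmubis}.

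First I would set $g_e(\omega) := \omega\,\varepsilon(\omega)$ and $g_m(\omega) := \omega\,\mu(\omega)$. By the fundamental result of Section \ref{sec-passive}, passivity of the medium is exactly equivalent to $g_e$ and $g_m$ being Herglotz functions on $\C^+$; since the medium is passive, both are Herglotz. Next I would verify that each obeys the symmetry \eqref{symmetry}. Using the reality principle $(\textbf{RP})$, which gives $\varepsilon(-\overline\omega) = \overline{\varepsilon(\omega)}$, one computes $g_e(-\overline\omega) = (-\overline\omega)\,\varepsilon(-\overline\omega) = -\overline{\omega}\,\overline{\varepsilon(\omega)} = -\overline{g_e(\omega)}$, and likewise for $g_m$. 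Hence $g_e$ and $g_m$ fall squarely within the scope of Theorem \ref{thm.herglotz}.

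Applying that theorem produces unique positive regular measures $\tilde\nu_e$ and $\tilde\nu_m$ on $\mathcal{B}(\R)$ obeying \eqref{propmes}, together with
\begin{equation}\label{eq:cor-intermediate}
  g_e(\omega) = \omega\Big(\alpha_e + \int_\R \frac{\md\tilde\nu_e(\xi)}{\xi^2 - \omega^2}\Big), \qquad \alpha_e = \lim_{y\to +\infty} \frac{g_e(\rmi y)}{\rmi y} = \lim_{y\to +\infty}\varepsilon(\rmi y),
\end{equation}
and the analogous identity for $g_m$. The decisive step is to evaluate the constants: taking $\omega = \rmi y$ with $y \to +\infty$ stays in a half-plane $\operatorname{Im}\omega \geq \eta > 0$, so the high-frequency principle $(\textbf{HF})$ forces $\alpha_e = \varepsilon_0$ and $\alpha_m = \mu_0$. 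Dividing \eqref{eq:cor-intermediate} by $\omega$ (licit on $\C^+$, where $\omega \neq 0$) and factoring out $\varepsilon_0$, I would then define $\nu_e := \tilde\nu_e/\varepsilon_0$ and $\nu_m := \tilde\nu_m/\mu_0$; these are again positive regular measures, and since multiplication by a fixed positive constant preserves both the evenness and the finiteness conditions of \eqref{propmes}, they satisfy \eqref{propmes}. This yields precisely the representation \eqref{expepsmubis}.

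Finally, uniqueness of $\nu_e$ and $\nu_m$ is inherited from the uniqueness clause of Theorem \ref{thm.herglotz}: any measure realizing \eqref{expepsmubis} for $\varepsilon$ produces, after scaling by $\varepsilon_0$, a representation \eqref{expepsmu} for $g_e$, which is unique; as scaling by $\varepsilon_0 > 0$ is a bijection on positive measures, $\nu_e$ is unique, and similarly for $\nu_m$. I do not expect any real obstacle here, since all the analytic depth resides in Theorem \ref{thm.herglotz} and in the passivity characterization invoked above; the only points that demand a moment's care are the verification of \eqref{symmetry} from $(\textbf{RP})$ and the identification of the constant $\alpha$ from $(\textbf{HF})$.
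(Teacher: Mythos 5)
Your proposal is correct and follows exactly the paper's route: the paper likewise obtains the corollary by applying Theorem \ref{thm.herglotz} to the Herglotz functions $\omega\,\varepsilon(\omega)$ and $\omega\,\mu(\omega)$, whose symmetry \eqref{symmetry} comes from $(\textbf{RP})$, with the constant $\alpha$ identified as $\varepsilon_0$ (resp. $\mu_0$) via $(\textbf{HF})$ and the measures rescaled accordingly. Your write-up merely makes explicit the details (symmetry check, evaluation of the limit along the imaginary axis, uniqueness under rescaling) that the paper leaves as ``one immediately deduces.''
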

\subsection{Augmented PDE model for Maxwell's equations in passive media} \label{Augmented}
At least from the theoretical point of view, one of the interests of the representation formulas \eqref{expepsmu} is that they permit to rewrite the Mawell's equations in a passive media as a ``standard" evolution problem, modulo the introduction of an additional variable $\xi \in \R$,  namely the integration variable in \eqref{expepsmubis}, and additional auxiliary unknowns, that are functions of $({\bf x}, t, \xi)$ and permit to represent the constitutive law in terms of ODEs in time. Such an approach was developed in \cite{Tip}, \cite{GralakTip}, \cite{Figotin},  \cite{cas-kach-jol-17} following the pioneering
work of Lamb \cite{Lamb}.\\ [12pt]
From and the definition \eqref{defsus} of the susceptibilities $\chi_e$ and $\chi_m$, we have
$$
 \chi_e (\omega)  = \int_{\R}\frac{\md \nu_e( \xi)}{{\xi^2 - \omega^2}} , \quad  \chi_m (\omega)  =  \int_{\R}\frac{\md \nu_m( \xi)}{{\xi^2 - \omega^2}} .
$$
Therefore, the constitutive laws $	\widehat{\bf P}({\bf x}, \omega) =  \varepsilon_0 \,{\chi}_e(\omega)\, \widehat{\bf E}({\bf x},\omega)$ and $\widehat{\bf M}({\bf x}, \omega) =  \mu_0 \,{\chi}_m(\omega)\, \widehat{\bf H}({\bf x},\omega)$ in the frequency domain, see \eqref{Dispersive2} can be rewritten as 
\begin{equation} \label{Dispersive2bis}
\left|	\begin{array}{lll}
\ds \widehat{\bf P}({\bf x}, \omega) = \varepsilon_0 \int_{\R} \widehat\bbP({\bf x}, \omega ;\xi) \,  \md \nu_e( \xi), & \quad  \big(\xi^2 - \omega^2\big) \, \widehat\bbP({\bf x}, \omega ;\xi) = \varepsilon_0 \,  \widehat{\bf E}({\bf x}, \omega),
\\ [12pt]
\ds \widehat{\bf M}({\bf x}, \omega) =\mu_0  \int_{\R} \widehat\bbM({\bf x}, \omega ;\xi) \,  \md \nu_m( \xi), & \quad  \big(\xi^2 - \omega^2\big) \, \widehat\bbM({\bf x}, \omega ;\xi) = \mu_0 \, \widehat{\bf H}({\bf x}, \omega).
\end{array} \right.
\end{equation}
Using the property  \eqref{propFLT} of the Fourier-Laplace transform, these can be rewritten in time as 
\begin{equation} \label{Dispersive2time}
	\left|	\begin{array}{lll}
		\ds {\bf P}({\bf x}, t) = \varepsilon_0 \int_{\R} \bbP({\bf x}, t ;\xi) \,  \md \nu_e( \xi), & \quad   \partial_t^2 \bbP({\bf x}, t ;\xi) + \xi^2 \,  \bbP({\bf x}, t ;\xi) \; \; = {\bf E}({\bf x}, t),
	\\ [12pt]
		\ds {\bf M}({\bf x}, t) = \mu_0 \int_{\R} \bbM({\bf x}, t ;\xi) \,  \md \nu_m( \xi), & \quad   \partial_t^2 \bbM({\bf x}, t ;\xi) + \xi^2 \, \bbM({\bf x}, t ;\xi) =  {\bf H}({\bf x}, t),
	\end{array} \right.
\end{equation}
where the ODEs defining $ \bbP$ and $ \bbM$ must be completed with vanishing initial data
\begin{equation} \label{CIPM} 
\bbP({\bf x}, 0 ;\xi) =  \partial_t \bbP({\bf x}, 0 ;\xi) = 0, \quad \bbM({\bf x}, 0 ;\xi) = \partial_t \bbM({\bf x}, 0 ;\xi) =0.
\end{equation}
As a consequence, the 3D Cauchy problem for Maxwell equation in a passive medium can be formulated as follows 
 \\ [12pt]
{Find } $\ds  \left\{ \begin{array}{lll} {\bf E}({\bf x}, t) : \R^3 \times \R^+ \rightarrow \R^3, &  \quad 	\bbP({\bf x}, t ;\xi) : \R^3  \times \R^+ \times \R \rightarrow \R^3, \\[12pt]
{\bf H}({\bf x}, t) : \R^3 \times \R^+ \rightarrow \R^3,	  &  \quad \bbM({\bf x}, t ;\xi) : \R^3  \times \R^+ \times \R \rightarrow \R^3, \end{array} \right. $
\\[12pt]
that satisfy the integro-differential system 
	\begin{equation} \label{Lorentzsystemgene}
		\hspace*{-0.3cm}\left\{	\begin{array}{lll}
			\ds 
			\ds	\varepsilon_0 \, \partial_t {\bf E} + {\bf rot} \, {\bf H} +   \varepsilon_0 \int_{\R} \partial_t \bbP (\cdot ,\cdot;\xi) \, \md \nu_e(\xi) = 0, & \quad ({\bf x}, t) \in \R^3 \times \R_+,\\[12pt]
			\ds	\mu_0 \, \partial_t {\bf H} - {\bf rot} \, {\bf E} + \mu_0 \int_{\R} \partial_t {\bbM}(\cdot ,\cdot;\xi)  \, \md \nu_m(\xi) = 0, & \quad ({\bf x}, t) \in \R^3 \times \R_+, \\[12pt]
			\partial_t^2 \bbP  (\cdot ,\cdot;\xi)  + \xi^2 \, \bbP (\cdot ,\cdot;\xi)  =    {\bf E}, & \quad ({\bf x}, t ;\xi) \in \R^3  \times \R_+ \times \R, \\[12pt] \partial_t^2 \bbM  (\cdot ,\cdot;\xi)  + \xi^2 \, \bbM  (\cdot ,\cdot;\xi)  =  {\bf H}, & \quad ({\bf x}, t ;\xi) \in \R^3  \times \R_+ \times \R ,
		\end{array} \right.
	\end{equation}
	completed by the initial conditions \eqref{CIPM}  as well as 
\begin{equation} \label{CIEH} 
	{\bf E}({\bf x}, 0) = 	{\bf E}_0({\bf x}) , \quad 	{\bf H}({\bf x}, 0) = 	{\bf H}_0({\bf x}).
\end{equation}
\begin{Rem}
	Owing to the integral representation of ${\bf P}$ and ${\bf M}$ in \eqref{Dispersive2time}, the fields $\bbP$ and $\bbM$ can be reinterpreted as polarization and magnetization densities respectively.
	\end{Rem}
\noindent One can easily obtain an energy conservation result for the augmented system \eqref{Lorentzsystemgene}: 
\begin{Thm} \label{thm.Energygene} Any smooth enough solution of  (\ref{Lorentzsystemgene}) satisfies the energy identity  
	\begin{equation} \label{energyidentitygene}
			\ds \frac{d}{dt} \, {\cal E}_{cons}(t) = 0,  \quad \mbox{where } \quad {\cal E}_{cons}(t) :=  {\cal E}(t) + {\cal E}_{e}(t) + {\cal E}_{m}(t),
	\end{equation}
	with the electromagnetic energy $ {\cal E}$ defined by and the additional energies ${\cal E}_{e}$ and ${\cal E}_{m}$ by 
		\begin{equation} \label{additionalenergies}
		\left\{	\begin{array}{l}
\ds	{\cal E}_{e}(t) :=  \frac{ \varepsilon_0}{2}\int_{\R} \int_{\R^3} \Big( \, |\partial_t \bbP({\bf x}, t ;\xi)|^2  +  \xi^2 \, |\bbP({\bf x}, t ;\xi)|^2 \, \Big) \; d{\bf x} \; \md \nu_e(\xi),\\[15pt]
			\ds  
			{\cal E}_{m}(t) :=   \frac{ \mu_0}{2}  \int_{\R} \int_{\R^3} \Big( \, |\partial_t \bbM({\bf x}, t ;\xi)|^2  +  \xi^2 \, |\bbM({\bf x}, t ;\xi)|^2 \, \Big) \; d{\bf x} \; \md \nu_m(\xi), 
		\end{array} \right.
	\end{equation}
\end{Thm}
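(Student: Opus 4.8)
The plan is to establish the identity by the classical \textbf{``multiply by the unknowns and integrate''} technique, taking advantage of the fact that \eqref{Lorentzsystemgene} couples a Maxwell block to a family of harmonic-oscillator ODEs indexed by the parameter $\xi$. I would treat the two blocks separately and then combine them so that the coupling terms cancel.

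\medskip

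\noindent\textbf{Step 1 (Maxwell block).} First I would take the scalar product of the first equation of \eqref{Lorentzsystemgene} with $\bE$, of the second with $\bH$, integrate each over $\R^3$, and add. The two curl contributions combine into $\int_{\R^3}\big(\mathbf{rot}\,\bH\cdot\bE - \mathbf{rot}\,\bE\cdot\bH\big)\,\md\bx = -\int_{\R^3}\div(\bE\times\bH)\,\md\bx$, which vanishes for a smooth, spatially decaying solution by the divergence theorem (exactly as in the passage leading to \eqref{idEnergy}). Using $\varepsilon_0\,\partial_t\bE\cdot\bE = \tfrac12\,\varepsilon_0\,\partial_t|\bE|^2$ and the analogous identity for $\bH$, this yields
\begin{equation*}
\frac{d}{dt}\,{\cal E}(t) + \varepsilon_0\int_{\R^3}\!\int_{\R} \partial_t\bbP\cdot\bE\;\md\nu_e(\xi)\,\md\bx + \mu_0\int_{\R^3}\!\int_{\R} \partial_t\bbM\cdot\bH\;\md\nu_m(\xi)\,\md\bx = 0.
\end{equation*}

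\medskip

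\noindent\textbf{Step 2 (oscillator block).} For the third equation of \eqref{Lorentzsystemgene} I would take the scalar product with $\partial_t\bbP$, obtaining the pointwise identity $\tfrac12\,\partial_t\big(|\partial_t\bbP|^2 + \xi^2\,|\bbP|^2\big) = \bE\cdot\partial_t\bbP$. Integrating over $\bx\in\R^3$, multiplying by $\varepsilon_0$, and integrating against $\md\nu_e(\xi)$ reproduces precisely $\frac{d}{dt}\,{\cal E}_e(t)$ on the left-hand side and the \emph{same} electric coupling integral as in Step 1 on the right-hand side. The fourth equation, paired with $\partial_t\bbM$ and treated identically with $\mu_0$ and $\md\nu_m$, produces $\frac{d}{dt}\,{\cal E}_m(t)$ together with the magnetic coupling integral.

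\medskip

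\noindent\textbf{Step 3 (cancellation).} Adding the identity of Step 1 to the two identities of Step 2, the electric and magnetic coupling integrals appear with opposite signs and cancel, leaving $\frac{d}{dt}\big({\cal E}+{\cal E}_e+{\cal E}_m\big)=0$, which is \eqref{energyidentitygene}. I expect the main obstacle to be purely technical rather than conceptual: one must justify differentiating under the $\int_{\R}\,\cdot\,\md\nu_e(\xi)$ sign and interchanging the $\md\bx$ and $\md\nu_e(\xi)$ integrations (Fubini), as well as the vanishing of the curl boundary term. For a ``smooth enough'' and suitably decaying solution these are controlled by the finiteness and regularity of $\nu_e,\nu_m$ guaranteed by \eqref{propmes}, so no initial-condition information is needed for the conservation statement itself.
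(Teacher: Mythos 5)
Your proof is correct and follows essentially the same route as the paper: the paper also multiplies the oscillator ODEs by $\partial_t\bbP$ and $\partial_t\bbM$ to get the pointwise energy identities, integrates them against $\md\nu_e$ and $\md\nu_m$ to express $\partial_t{\bf P}\cdot{\bf E}$ and $\partial_t{\bf M}\cdot{\bf H}$ as time derivatives of ${\cal E}_e$ and ${\cal E}_m$, and then substitutes into the Maxwell-block identity \eqref{idEnergy} (which the paper had already derived in Section \ref{sec-passive}, whereas you re-derive it in your Step 1 — an immaterial difference). The cancellation of the coupling terms and the resulting conservation law are exactly as in the paper.
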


\begin{proof} From the ODE's in \eqref {Dispersive2time}, multiplied respectively by $\partial_t \bbP$ and $\partial_t \bbM$, we have
$$
		\left|	\begin{array}{lll}
\partial_t \Big( \frac{1}{2} \, |\partial_t \bbP({\bf x}, t ;\xi)|^2 +  \frac{1}{2} \,\xi^2 \,  |\bbP({\bf x}, t ;\xi)|^2 \; \Big) = \partial_ t \bbP({\bf x}, t ;\xi) \cdot  {\bf E}({\bf x}, t)
		\\ [12pt]
\partial_t \Big( \frac{1}{2}  \,|\partial_t \bbM({\bf x}, t ;\xi)|^2 +  \frac{1}{2} \, \xi^2 \,  |\bbM({\bf x}, t ;\xi)|^2 \;  \Big) =  \partial_ t \bbM({\bf x}, t ;\xi) \cdot  {\bf H}({\bf x}, t)
\end{array} \right.
$$
Thus, after integration over $\xi$, using the formulas for ${\bf P}$ and ${\bf M}$ in \eqref {Dispersive2time}, we obtain 
$$
\left|	\begin{array}{lll}
\ds  \partial_t {\bf P}({\bf x}, t)  \cdot {\bf E}({\bf x}, t) =  \frac{ \varepsilon_0 }{2} \int_{\R}  \partial_t \Big( \, |\partial_t \bbP({\bf x}, t ;\xi)|^2  +  \xi^2 \, |\bbP({\bf x}, t ;\xi)|^2 \, \Big)  \; \md \nu_e(\xi),
	\\ [12pt]
\ds  \partial_t {\bf M}({\bf x}, t)   \cdot {\bf H}({\bf x}, t)= \frac{\mu_0}{2} \int_{\R}  \partial_t  \Big( \, |\partial_t \bbM({\bf x}, t ;\xi)|^2  +  \xi^2 \, |\bbM({\bf x}, t ;\xi)|^2 \, \Big)  \; \md \nu_e(\xi).
\end{array} \right.
$$
Finally, if suffices to substitute the above into the identity \eqref{idEnergy} to obtain \eqref{energyidentitygene}.
\end{proof}
\noindent The identity \eqref{energyidentitygene} permits to recover the physical passivity of the medium, i.e.  in the energetic sense \eqref{propEnergy}. Indeed, for any $T > 0$, 
$$
 {\cal E}(T) \leq  {\cal E}_{cons}(T) =  {\cal E}_{cons}(0) =  {\cal E}(0),
$$
where the first inequality results from the positivity of ${\cal E}_{e}$ and ${\cal E}_{m}$, the second equality from \eqref{energyidentitygene} and the last one from \eqref{CIPM} .
\subsection{Abstract reformulation. Existence and uniqueness results.} \label{sec-abstract} 
For technical reasons, see remark \ref{remcondsup}, we shall need  in this section the following assumption (that is stronger than the finiteness assumption in \eqref{propmes}) 
\begin{equation} \label{finiteness2}
\boldsymbol{\nu}_e := \int_{\R} \md \nu_{e}(\xi) < + \infty, \quad \boldsymbol{\nu}_m := \int_{\R} \md \nu_{m}(\xi) < + \infty.
\end{equation} 
Modulo the introduction of the additional unknowns $\dot\bbP := \partial_t \bbP$ and $\dot\bbM := \partial_t \bbM$, (\ref{Lorentzsystem}) can be rewritten as a first order evolution problem for ${\bf U} = \big( {\bf E}, {\bf H}, \bbP, \dot\bbP, \bbM, \dot\bbM\big)$ 
\begin{equation}\label{eq.schro}
	\frac{\md \, {\bf U}}{\md\, t} + i\,\mathbb{A} \, {\bf U}=0.
\end{equation}
More precisely, the unknown ${\bf U}$ is searched as a function of $t$ with values in the 
 Hilbert space:
\begin{equation} \label{spaces}
\left\{ \begin{array}{lll}
	\boldsymbol {\mathcal{H}}= \bL^2(\mathbb{R}^3) \times \bL^2(\mathbb{R}^3)   \times  \boldsymbol {\mathcal{  V}_{e}} \times \boldsymbol {\mathcal{H}_{e}}\times  \boldsymbol {\mathcal{V}_{m}} \times \boldsymbol { \mathcal{H}_{m} }, & \bL^2(\mathbb{R}^3)= L^2(\R^3)^3, & \quad (i)
\\ [12pt]
\boldsymbol{\mathcal{H}}_{s} = L^2\big(\R, \bL^2(\mathbb{R}^3); \md \nu_{s} \big), \
\boldsymbol {\mathcal{V}}_{s}= L^2\big(\R, \bL^2(\mathbb{R}^3) ;\,  \xi^2 \,  \md \nu_{s} \big), &   s=e ,m.& \quad (ii)
\end{array} \right.
\end{equation}
We assume  that the spaces $( \bL^2(\mathbb{R}^3), \boldsymbol{\mathcal{H}}_{s}, \boldsymbol{\mathcal{H}}_{s})$ are equipped with their natural norm and we equip $\boldsymbol {\mathcal{H}}$ with a norm which is (up to a $1/2$ factor) given by the  energy ${\cal E}_{tot}$, see \eqref{energyidentitygene}:
\begin{equation}\label{eq.defprodscal}
	\left| \; 	\begin{array}{lllll}
		\ds \|{\bf U}\|^2_{\boldsymbol{\cal H}} & =  &  &\ds \varepsilon_0 \, \int _{\R^3} |{\bf E} (\bx)|^2  \, \md{\bf x}    & \ds +  \varepsilon_0  \int_{\R} \int_{\R^3} \big(\xi^2 \, | \bbP(\bx,\xi)|^2 +  |\dot \bbP(\bx,\xi)|^2\big) \, \md {\bf x}\,  \md \nu_e(\xi) \\ [14pt] 
		& &+ &
		\ds  \mu_0 \,\int_{\R^3}  |{\bf H}(\bx)|^2  \, \md{\bf x} &\ds + \mu_0  \int_{\R} \int_{\R^3} \big(\xi^2 \, | \bbM(\bx,\xi)|^2 +  |\dot \bbM(\bx,\xi)|^2\big) \, \md {\bf x}\,  \md \nu_m(\xi).
	\end{array} \right.
\end{equation}
The operator $\bbA$ is an unbounded operator on $\boldsymbol{\mathcal{H}}$ with domain
$$
\left\{ \begin{array}{lll}
D(\bbA) =  H({\bf rot}, \R^3) \times   H({\bf rot}, \R^3)   \times  \boldsymbol{\mathcal{D}_{e}} \times  \boldsymbol {\mathcal{\dot D}_{e}} \times  \boldsymbol{\mathcal{D}_{m}} \times \boldsymbol {\mathcal{\dot D}_{m}}, & \quad ,
	\\ [12pt]
\boldsymbol{\mathcal{D}}_{s} := \big \{ \bbP \in  \boldsymbol {\mathcal{V}_{s} } \,  /  \, \xi^2 \, \bbP \in  \boldsymbol{\mathcal{H}_{s}} \big\},  \;  \boldsymbol {\mathcal{\dot D}_{s}} := \boldsymbol{\mathcal{H}_{s}} \cap  \boldsymbol {\mathcal{V}}_{s},
 &  \quad s=e ,m.
\end{array} \right.
$$ 
where $ H({\bf rot}, \R^3) :=\{\bu \in \bL^2(\mathbb{R}^3)\mid \nabla \times \bu \in  \bL^2(\mathbb{R}^3)  \}$.
 $\bbA$ is defined in block form by
\begin{equation}\label{eq.opA}
\bbA := \ \rmi\, \begin{pmatrix}
	0 &\varepsilon_0^{-1}\,\mbox{\bf rot} & 0&-\mathbb{M}_e & 0 & 0\\[2pt]
	- \mu_0^{-1}\,\mbox{\bf rot} & 0 &0 & 0& 0& -\mathbb{M}_m \\[2pt]
	0 & 0 &0  & 1&0 &0\\[2pt]
	\mathbb{M}_e^* & 0 &-\, \xi^2  &0 &0 &0\\[2pt]
	0 & 0 &0  & 0 &0 & 1\\[2pt]
	0 &\mathbb{M}_m^*& 0 & 0 & -\, \xi^2   & 0
\end{pmatrix},
\end{equation}
where $\boldsymbol{\cal M}_s \in B(\boldsymbol{\cal H}_s, \bL^2(\mathbb{R}^3) )$ and its adjoint  $\boldsymbol{\cal M}_s^*  \in B(\bL^2(\mathbb{R}^3) ,\boldsymbol{\cal H}_s), s = e,m$ are defined by 
$$
\forall \; (\mathbb{X}, {\bf F}) \in \boldsymbol{\cal H}_s \times \bL^2(\mathbb{R}^3) , \quad  \big({\mathbb M}_s \mathbb{X}\big)({\bf x})  := \int_{\bbR} \bbX ({\bf x}, \xi) \;  \rmd  \nu_s(\xi), \quad \big({\mathbb M}_s^* {\bf F}\big)({\bf x}, \xi) = {\bf F}({\bf x}).
$$

\begin{Rem} \label{remcondsup} Notice that  (\ref{finiteness2}) is needed for defining  $\boldsymbol{{\cal M}_s}$.  Indeed, the condition (\ref{finiteness2}) means that $1 \in L^2(\R, \md \nu_s(\xi))$ with
$L^2(\R, \md \nu_s(\xi))$-norm equal to $\boldsymbol{\nu}_s$. Thus, by Cauchy-Schwarz,
		$$
\big| \big({\mathbb M}_s \bbP\big)({\bf x})\big|^2  \leq \boldsymbol{\nu}_s^2  \, \Big(\int_{\R} \bbP ({\bf x}, \xi) \md \nu_{s}(\xi)\Big)^2 \quad\Longrightarrow \quad  \|{\mathbb M}_s  \bbP\|_{\bL^2(\bbR^3)} \leq  \boldsymbol{\nu}_s \, \|\bbP\|_{\boldsymbol{\cal H}_s}.
$$
Similarly, for the adjoint operator ${\mathbb M}_s^*$, one observes that $\big\|{\mathbb M}_s^* {\bf E}\big\|_{\boldsymbol{\cal H}_s} =\boldsymbol{\nu}_s \, \big\| {\bf E} \|_{\bL^2(\mathbb{R}^3) } $. 
\end{Rem}
\noindent We leave to the reader the exercise to show that
the operator $\bbA$ is self-adjoint in $\boldsymbol{\mathcal{H}}$ for the inner product associated to the norm \eqref{eq.defprodscal}. \\ [12pt]
		 From semi-group theory, or  Hille-Yosida's theorem \cite{Paz-83}, $\bbA$ is thus the generator
		of a $\mathcal{C}^0$ unitary group and  the evolution problem associated to (\ref{eq.schro}) completed with the initial condition ${\bf U}(0) = {\bf U}_0 \in D(\bbA)$ admits a unique solution
\begin{equation} \label{regU}
		{\bf U}(t) :=  e^{-\rmi\bbA t} \, {\bf U}_0  \in C^1(\R^+; {\cal H}) \cap C^0(\R^+; D(\bbA)). 
\end{equation}
		In particular, going back to the initial conditions and translating \eqref{regU} in terms of the original unknown $({\bf E}, {\bf H}, \bbP, \bbM)$, we deduce that for $({\bf E}_0, {\bf H}_0) \in H({\bf rot}, \R^3)^2$ the problem (\ref{Lorentzsystemgene}, \ref{CIPM}, \ref{CIEH}) admits a unique solution satisfying
\begin{equation} \label{regEHPM}
\left\{\begin{array}{lll}
	{\bf E},  {\bf H} \in C^1(\R^+;  \bL^2(\mathbb{R}^3)\big)  \cap C^0\big(\R^+;  H({\bf rot}, \R^3)\big) \\ [12pt]
	\bbP \in C^2(\R^+; \boldsymbol{\mathcal{H}}_e)  \cap C^1(\R^+; \boldsymbol{\mathcal{V}}_e)  \cap C^0(\R^+; \boldsymbol{\mathcal{D}}_e),  \\ [12pt]
	\bbM \in C^2(\R^+; \boldsymbol{\mathcal{H}}_m)  \cap C^1(\R^+; \boldsymbol{\mathcal{V}}_m)  \cap C^0(\R^+; \boldsymbol{\mathcal{D}}_m), 
	\end{array}  \right.
\end{equation}		

\begin{Rem} Since the operator $e^{-i\bbA t}$ is unitary in $\boldsymbol{\cal H}$, $\|{\bf U}(t)\|_{\boldsymbol{\cal H}}$ is constant in time which is nothing but the conservation of energy (cf. theorem \ref{thm.Energygene}).
\end{Rem}
\begin{Rem} The spectrum of the operator $\bbA$ is described in detail in \cite{cas-kach-jol-17}, section 4.3.5. \end{Rem}
\begin{Rem} \label{rem_techass} The technical assumption \eqref{finiteness2} is needed to put our problem in the framework of the semi-group theory by defining $\bbA$ as  a self-adjoint operator. However, without \eqref{finiteness2}, the existence result could be proved with another approach, Galerkin's method for instance. 
\end{Rem}

\section{Generalized Lorentz media}\label{sec-Gen-Lorentz}
\subsection{Definition of the models} \label{sec-Lorentz-def}
\noindent Lorentz materials  are obtained  with measures made of finite (even) sums  of Dirac measures:
\begin{equation}\label{LorentzMeasures}
	\nu_e = \frac{1}{2} \; \sum_{j = 1}^{N_e} {\Omega_{e,j}^2} \, \big( \delta_{\omega_{e,j}} + \delta_{-\omega_{e,j}}\big), \quad \nu_m = \frac{1}{2} \; \sum_{\ell = 1}^{N_m} {\Omega_{m,\ell}^2} \, \big( \delta_{\omega_{m,\ell}} + \delta_{-\omega_{m,\ell}}\big),
\end{equation}
where the $\{ \pm \, \omega_{e,j} \}$ and $\{ \pm \, \omega_{m,\ell} \}$ are called respectively the electric and magnetic resonances:
\begin{equation}\label{resonances}
0 \leq \omega_{e,1} <  \omega_{e,2} < \cdots < \omega_{e,N_e}, \quad 0 \leq \omega_{m,1} <  \omega_{m,2} < \cdots < \omega_{m,N_m}.
\end{equation}
In this case, the functions $\varepsilon(\omega)$ and $\mu(\omega)$ are rational functions 
\begin{equation} \label{Lorentzlaws}
	\varepsilon(\omega)	= \varepsilon_0 \; \Big( 1 + \sum_{j = 1}^{N_e} \frac{\Omega_{e,j}^2}{\omega_{e,j}^2 -\omega^2}\Big), \quad \mu(\omega)	= \mu_0 \; \Big( 1 + \sum_{\ell = 1}^{N_m} \frac{\Omega_{m,\ell}^2}{\omega_{m,\ell}^2 -\omega^2}\Big).
\end{equation}
\begin{Rem} \label{LorentzDrude} Standard Lorentz models correspond to $N_e = N_m = 1$ and are called Drude models if $\omega_{e,1} = \omega_{m,1} = 0$. By extension, each simple term in the sums \eqref{Lorentzlaws} is called a Lorentz term or Drude term, depending on the fact that the associated resonance is $0$ or not. 	\end{Rem}
\begin{figure}[h!] 
	\centering
		\includegraphics[height=5.5cm]{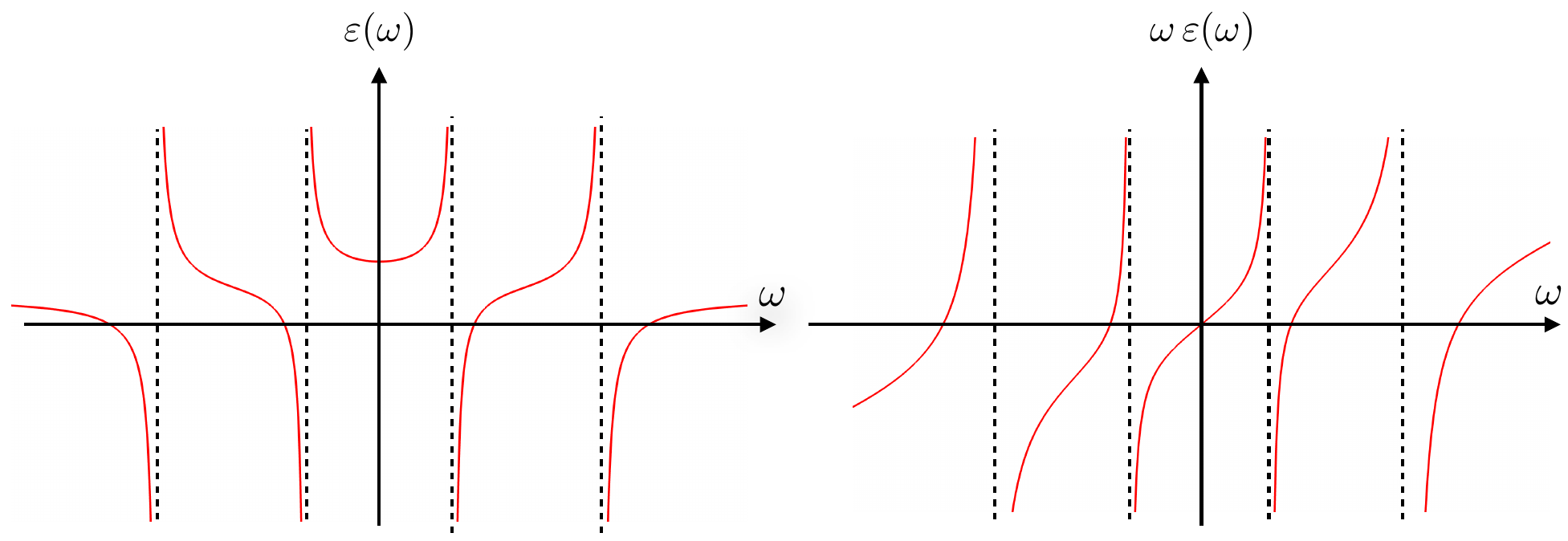} 
	\caption{graph of $\omega \mapsto \varepsilon(\omega)$ (left) and  positivity property of $\omega \mapsto \omega \, \varepsilon(\omega)$ (right) for $N_e = 2$ and $\omega_{e,1},  \,\omega_{e,2}> 0$.}
	\label{Fig1}
	\end{figure}
\noindent Clearly, the poles and zeros of $\varepsilon(\omega)$ are real. Apart from 0, that is a double pole if $\varepsilon(\omega)$ contains a Drude term, all other poles are  simple and interlace with the (simple) zeros on the real axis along which $\varepsilon(\omega)$ is real. We represent on figure \ref{Fig1}  the typical graph of $\omega \rightarrow \varepsilon (\omega)$ (when $\omega_{e,1}, \,\omega_{e,2} > 0$) and illustrate the positivity property of $\omega \, \varepsilon(\omega)$, see Lemma \ref{lem_positivity} applied to $\omega \, \varepsilon(\omega)$. \\ [10pt]
\noindent	Lorentz media are local media in the sense of Example \ref{Exloc}. Therefore, the corresponding evolution problem and can be rewritten  coupling of Maxwell's equations in the vacuum with a system of linear second order ODE's  (harmonic oscillators). More precisely, the general model of section \ref{Augmented}, see \eqref{Lorentzsystemgene}, takes the particular form: \\ [12pt]
{Find } $\ds  \left\{ \begin{array}{lll} {\bf E}({\bf x}, t) : \R^3 \times \R^+ \rightarrow \R^3, &  \quad 	\bbP_{\!j}({\bf x}, t) : \R^3  \times \R^+  \rightarrow \R^3,  1 \leq j \leq N_e\\[12pt]
	{\bf H}({\bf x}, t) : \R^3 \times \R^+ \rightarrow \R^3,	  &  \quad \bbM_\ell({\bf x}, t ;\xi) : \R^3  \times \R^+\! \times \R \rightarrow \R^3, 1 \leq \ell  \leq N_m,  \end{array} \right. $ 
\\[12pt]
satisfying the PDE system
	\begin{equation} \label{Lorentzsystem}
		\left\{	\begin{array}{lll}
			\ds	\varepsilon_0 \, \partial_t {\bf E} + {\bf rot} \, {\bf H} + \varepsilon_0 \, \sum_{j=1}^{N_e}\Omega_{e, j}^2  \, \partial_t  \bbP_j= 0, & \quad  \partial_t^2 \bbP_j + \omega_{e, j}^2 \,  \bbP_j = \,  {\bf E},\\[18pt]
		\ds \mu_0 \, \partial_t {\bf H} - {\bf rot} \, {\bf E}  + \mu_0 \,  \sum_{\ell = 1}^{N_m} \Omega_{m, \ell}^2 \, \partial_t \bbM_\ell = 0, 	 &  \quad \partial_t^2 \bbM_\ell + \omega_{m, \ell}^2 \, \bbM_\ell =  \,  {\bf H}.
		\end{array} \right.
	\end{equation}
	~\\[-4pt]
	completed with the initial conditions \eqref{CIEH} for $(\bE, \bH)$, and for the additional unknowns, 
\begin{equation} \label{CIPM2}  \bbP_{\!j}({\bf x},0) = \partial_t \bbP_{\!j}({\bf x},0)=0 \mbox{ and } \bbM_\ell ({\bf x},0) = \partial_t\bbM_\ell ({\bf x},0)=0 .
	\end{equation} 
Due to \eqref{LorentzMeasures}, the formula \eqref{Dispersive2time}  for the total polarization  and magnetization  becomes
\begin{equation}\label{Dispersive2time-disc}
{\bf P}=\eps_0\, \sum_{j=1}^{N_e} \Omega_{e,j}^2 \bbP_j, \quad {\bf M}=\mu_0\, \sum_{\ell=1}^{N_m} \Omega_{m,\ell}^2 \bbM_\ell.
\end{equation}
The conserved total energy ${\cal E}_{tot}(t)$  is still given by \eqref{energyidentitygene}, with this time, 
	\begin{equation} \label{additionalenergiesbis}
	\left\{	\begin{array}{l}
		\ds	{\cal E}_{e}(t) :=  \frac{1}{2} \, \varepsilon_0 \ \sum_{j=1}^{N_e}\Omega_{e, j}^2 \int_{\R^3} \Big( \, |\partial_t \bbP_{\!j}({\bf x}, t)|^2  +  \omega_{e,j}^2 \, |\bbP_{\!j}({\bf x}, t)|^2  \, \Big) \; d{\bf x},\\[15pt]
		\ds  
	\ds	{\cal E}_{m}(t) :=  \frac{1}{2} \, \varepsilon_0 \ \sum_{\ell=1}^{N_m}\Omega_{m, \ell}^2 \int_{\R^3} \Big( \, |\partial_t \bbM_\ell({\bf x}, t)|^2  +  \omega_{e,\ell}^2 \, |\bbM_\ell({\bf x}, t)|^2  \, \Big) \; d{\bf x}.
	\end{array} \right.
	\end{equation}
\begin{Rem} 
	More general materials  are obtained with infinite sums of Dirac discrete measures: 
	\begin{equation}\label{DiscreteMeasures}
		\nu_e = \frac{1}{2} \; \sum_{\ell = 1}^{+\infty} {\Omega_{e,j}^2} \,  \big( \delta_{\omega_{e,j}} + \delta_{-\omega_{e,j}}\big), \quad \nu_m = \frac{1}{2} \; \sum_{\ell = 1}^{+\infty} \, {\Omega_{m,\ell}^2} \big( \delta_{\omega_{m,\ell}} + \delta_{-\omega_{m,\ell}}\big).
	\end{equation}
	where $\omega_{e,\ell}$ and $\omega_{m,\ell}$ are two sequences of positive real numbers satisfying 
	\begin{equation}\label{Additional}
		\sum_{j = 1}^{+\infty} \frac {\Omega_{e,j}^2 }{1 + \omega_{e,j}^2}  < + \infty, \quad \quad 
		\sum_{\ell = 1}^{+\infty} \frac {\Omega_{m,\ell}^2}{1 + \omega_{m,\ell}^2}  < + \infty, 
	\end{equation}
	which corresponds the  finiteness condition \ref{propmes} in Corollary \ref{cor.herglotz}. $\varepsilon(\omega)$ and $\mu(\omega)$ are meromorphic functions defined by the following series,  whose convergence (outside poles) is ensured by (\ref{Additional}):
	\begin{equation} \label{InfiniteLorentzlaws}
		\varepsilon(\omega)	= \varepsilon_0 \; \Big( 1 + \sum_{j = 1}^{+ \infty} \frac{\Omega_{e,j}^2}{\omega_{e,j}^2 -\omega^2}\Big), \quad \mu(\omega)	= \mu_0 \; \Big( 1 + \sum_{\ell = 1}^{+ \infty} \frac{\Omega_{m,\ell}^2}{\omega_{m,\ell}^2 -\omega^2}\Big).
	\end{equation}
	An example of such a function is  $\varepsilon(\omega)=\varepsilon_0(1+a \, \omega^{-1}\tan \omega)$. Such functions appear naturally in the mathematical theory of metamaterials via high contrast homogenization \cite{Zhikov}, \cite{bouchitte2010homogenization}, \cite{Che-2018}. 
\end{Rem} 
\subsection{The abstract model} \label{sec-abstract2} 
As in section \ref{sec-abstract},  (\ref{Lorentzsystem}) can be rewritten as a first order evolution problem 
\begin{equation}\label{eq.schrobis}
	\frac{\md \, {\bf U}}{\md\, t} + i\,\mathbb{A} \, {\bf U}=0, \quad \mbox{with } {\bf U} = \big( {\bf E}, {\bf H}, \bbP, \dot\bbP, \bbM, \dot\bbM\big),
\end{equation}
 in the Hilbert space $\boldsymbol{\cal H}$ still defined by \eqref{spaces}(i) except that this time,  $\boldsymbol{\mathcal{H}}_{s} = 	\boldsymbol {\mathcal{V}}_{s} \equiv \big(\bL^2(\R^3)\big)^{N_s}$ for $s = e,m$, where the norm in $\boldsymbol{\cal H}$ is
\begin{equation}\label{eq.defprodscal2} 
	\left| \; 	\begin{array}{lll}
	\ds \|{\bf U}\|^2_{\boldsymbol{\cal H}} & =  & \ds  \varepsilon_0 \, \int_{\R^3}  |{\bf E}(\bx)|^2  \, \md{\bf x}    +  \varepsilon_0  \int_{\R^3} \sum_j \big(\omega_{e,j}^2 \,  | \bbP_j(\bx)|^2 +  |\dot \bbP_j(\bx)|^2\big) \, \md {\bf x}\  \\ [18pt] 
		& + & 
		\ds  \mu_0 \,\int_{\R^3}   |{\bf H}(\bx)|^2  \, \md{\bf x} + \mu_0  \int_{\R^3}  \sum_\ell \big(\omega_{m,\ell}^2 \, | \bbM_\ell(\bx)|^2 +  |\dot \bbM_\ell(\bx)|^2\big) \, \md {\bf x}\,.
	\end{array} \right.
\end{equation}
 $\bbA$ and its domain are still defined by \eqref{eq.opA}  and \eqref{spaces}(i) with this time $\boldsymbol{\mathcal{D}_{s}} =\boldsymbol{\mathcal{D}_{s}}  = \boldsymbol{\mathcal{H}}_{e}$, and 
\begin{equation}\label{eq.opAbis}
\bbA := \ \rmi\, \begin{pmatrix}
	0 &\varepsilon_0^{-1}\,\mbox{\bf rot} & 0&-{\mathbb M}_e & 0 & 0\\[2pt]
- \mu_0^{-1}\,\mbox{\bf rot} & 0 &0 & 0& 0& -{\mathbb M}_m \\[2pt]
0 & 0 &0  & 1&0 &0\\[2pt]
{\mathbb M}_e^* & 0 &-\,\mbox{diag} \; \omega_{e,j}^2 &0 &0 &0\\[2pt]
0 & 0 &0  & 0 &0 & 1\\[2pt]
0 &{\mathbb M}_m^*& 0 & 0 & -\mbox{diag} \; \omega_{m,\ell}^2  & 0
\end{pmatrix},
\end{equation}
where $\mathbb{M}_s \in B(\boldsymbol{\cal H}_s, \bL^2(\R^3) )$ and its adjoint  $\mathbb{M}_s^*  \in B(\bL^2(\R^3) ,\boldsymbol{\cal H}_s), s = e,m$, are defined by 
$$
\forall \; (\mathbb{X}, {\bf F}) \in \boldsymbol{\cal H}_s \times  \bL^2(\mathbb{R}^3), \quad   \big({\mathbb M}_s \mathbb{X}\big)({\bf x}) := \sum_{q=1}^{N_s} \Omega_{s,q}^2 \, \mathbb{X}_q ({\bf x}), \quad \big({\mathbb M}_s^* {\bf F}(\bx)\big)_q = {\bf F}({\bf x}).
$$
\subsection{Dispersion analysis} \label{sec-disp-analysis}
This section is the only one for which the homogeneity of the medium, i. e. \eqref{homedium}, is required. We consider here local media (in the sense of  the definition of Example \ref{Exloc} where the permittivity  and permeability functions are rational functions of the forE
\begin{equation} \label{locallaw} \ds	\varepsilon(\omega) =   \frac{P_e( -\rmi \omega)}{Q_e( -\rmi \omega)} , \quad \mu (\omega) =   \frac{P_m(-\rmi \omega)}{Q_m(-\rmi \omega)},\end{equation} 
	were  $(P_e, Q_e, P_m, Q_m)$ are real polynomials with $d^oP_e = d^oQ_e$ and $d^oP_m = d^oQ_m$. We denote   by $\mathcal{P}_e$  (resp. $\mathcal{P}_m$)  the set of poles of the rational function $\varepsilon$ (resp. $\mu$) and by $\mathcal{Z}_e$  (resp. $\mathcal{Z}_m$) the corresponding set of zeros.
\\ [6pt]
The dispersion analysis of a homogeneous local medium  consists in looking at particular solutions under the form of plane waves, i. e. 
 in  the form
\begin{equation} \label{planewaves}
		{\bf E}(x,t) = \mathbb{E} \; \rme^{ \rmi ({\bf k} \cdot {\bf x} - \omega \; t)} , \quad 
		{\bf H}(x,t) = \mathbb{H} \; \rme^{\rmi ({\bf k} \cdot {\bf x} - \omega \; t)},
\end{equation}
where the wave vector is real, ${\bf k} \in \R^3$, and the frequency can be complex, $\omega \in \C$.  The constant vectors $(\mathbb{E}, \mathbb{H}) \in \R^3 \times \R^3$ describe how the electromagnetic field is {\it polarized}.
We assume here that $\omega\in \bbC\setminus (\calP_e\cup \calP_m  \cup \calZ_e\cup\calZ_m)$ and $\bk\neq 0$. Going back to Maxwell's equations \eqref{Maxwell} and the constitutive laws \eqref{Dispersive} and \eqref{locallaw}, one sees that \eqref{planewaves} provides a non trivial solution if and only if 
there exists $(\mathbb{D}, \mathbb{B}) \in \C^3 \times \C^3$  such that
\begin{equation} \label{Maxwell_freq}
\left\{\begin{array} {lll}
	\omega \, \mathbb{B}  + {\bf k} \times \mathbb{E}  = 0, & \quad \omega \, \mathbb{D}- {\bf k} \times \mathbb{E} = 0,  \\ [12pt]
	\mathbb{D} = \varepsilon(\omega) \, \mathbb{E}, & \quad \mathbb{B} = \mu(\omega) \, \mathbb{H}.
	\end{array} \right. 
\end{equation}
from which one deduces  after an  elimination of $(\mathbb{D}, \mathbb{B})$,  \eqref{planewaves}  is a  solution if and only if  
\begin{equation} \label{relationpolar}
{\bf k} \cdot \mathbb{E} = 0, \quad {\bf k} \cdot \mathbb{H} = 0, \quad \mathbb{E} \cdot \mathbb{H} = 0,
\end{equation}
and $(\omega, {\bf k})$ are related by the dispersion relation of the medium, namely 
\begin{equation} \label{relationdispersion}
{\cal D}(\omega) = |{\bf k}|^2, \quad \mbox{where } {\cal D}(\omega)  = \omega^2 \, \varepsilon(\omega) \mu(\omega),
\end{equation}
where $\varepsilon$ and $\mu$ are given by  \eqref{locallaw}.\\[6pt]
In the sequel, we shall make two assumptions on poles and zeros on the funtions $\varepsilon(\cdot)$ and $\mu(\cdot)$. 
\begin{Ass} \label{ireducible}
 $ \quad \displaystyle \calP_e \cap \calZ_m = \emptyset \mbox{ and }   \calP_m \cap \calZ_e= \emptyset. $
\end{Ass}
\begin{Ass} \label{ireduciblebis}
 $\quad 0\notin \calP_e\cup \calP_m.$
\end{Ass}
\begin{Rem}
The Assumption \ref{ireducible} is not restrictive in practise in  the sense explained in the section 3.2.1 of \cite{cas-kach-jol-17}. Avoiding  Assumption \ref{ireduciblebis} would require further developments which are not essential for the purpose of this chapter and would increase unnecessarily its length.
\end{Rem}
\noindent The above two assumptions mean that there is no possible simplification in the writing of the rational function $\cal{D}$  as a product $\omega^2 \, \varepsilon(\omega) \, \mu(\omega) $.
More precisely, Assumption \ref{ireducible} implies that the set of poles  of $\cal{D}$ is $\calP_e \cup \calP_m $ whereas, since Assumption  \ref{ireduciblebis} implies that $0$ is a zero with multiplicity at least $2$ of  $\cal{D}$,  the set of zeros  of $\cal{D}$ is $\calZ_e \cup \calZ_m  \cup \{ 0\}$.\\[6pt]
\noindent As a consequence, seen as an equation in $\omega$ parametrized by ${\bf k}$, the dispersion equation (\ref{relationdispersion}) can be seen as a polynomial equation of degree $N_{\mathcal{D}} =2 +d^oQ_e +d^oQ_m$ 
whose coefficients are affine functions  in $|{\bf k}|^2$ and whose higher order term is independent of $|{\bf k}|$. As a consequence, this equation admits $N$ continuous branches of solutions (see e.g. \cite{Hin-10}, Proposition 4.1.19): 
\begin{equation} \label{branches}
	|{\bf k}| \in \R^+\rightarrow \omega_j \big(|{\bf k}| \big) \in \C, \quad 1 \leq j \leq N_{\mathcal{D}},
\end{equation}
Moreover, the sign of the imaginary parts of these solutions is known: 
\begin{Thm} \label{disp_pass} In a passive local medium, the solutions $\omega_j \big(|{\bf k}|\big)$, see   \eqref{branches}, of the dispersion relation (\ref{relationdispersion}) satisfy $$\operatorname{Im}  \, \omega_j\big(|{\bf k}| \big)\leq 0, \quad \forall \; {\bf k} \in \R^3, \quad 1 \leq j \leq N_{\mathcal{D}}.$$ 
\end{Thm}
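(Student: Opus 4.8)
The plan is to read the dispersion relation \eqref{relationdispersion} through the lens of the passivity characterization of Section~\ref{sec-passive}. Setting $g(\omega) := \omega\,\varepsilon(\omega)$ and $h(\omega) := \omega\,\mu(\omega)$, these are non-constant rational \emph{Herglotz} functions, and the dispersion relation becomes simply
\begin{equation*}
g(\omega)\, h(\omega) = |{\bf k}|^2 \geq 0 .
\end{equation*}
The entire argument then reduces to one clean fact: the product of two Herglotz functions cannot take a non-negative real value anywhere in the open upper half-plane $\C^+$. So I would argue by contradiction, assuming that some branch value $\omega = \omega_j(|{\bf k}|)$ has $\operatorname{Im}\,\omega > 0$, and derive a contradiction.

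The first, slightly delicate, step is to make sure that at such a point the equality ${\cal D}(\omega)=g(\omega)h(\omega)=|{\bf k}|^2$ genuinely holds, rather than $\omega$ being merely a root of the cleared-denominator polynomial. This is where Assumption~\ref{ireducible} and Lemma~\ref{lem_poles} come in: under Assumption~\ref{ireducible} the poles of ${\cal D}$ are exactly ${\cal P}_e\cup{\cal P}_m$, and since $g$ and $h$ are Herglotz, Lemma~\ref{lem_poles} places all their poles in $\overline{\C^-}$. Hence ${\cal D}$ has no pole in $\C^+$, so a branch value $\omega\in\C^+$ is necessarily a bona fide solution with $g(\omega),h(\omega)$ finite.

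The heart of the proof is then the positivity of the imaginary parts. As $g$ and $h$ are non-constant Herglotz functions, Remark~\ref{rem_herg} yields $\operatorname{Im} g(\omega)>0$ and $\operatorname{Im} h(\omega)>0$ for $\omega\in\C^+$. Writing $g(\omega)=r_1\,\rme^{\rmi\alpha}$ and $h(\omega)=r_2\,\rme^{\rmi\beta}$ with $r_1,r_2>0$ and $\alpha,\beta\in(0,\pi)$, the product is $g(\omega)h(\omega)=r_1 r_2\,\rme^{\rmi(\alpha+\beta)}$ with argument $\alpha+\beta\in(0,2\pi)$. Such a number has strictly positive modulus and argument lying strictly between $0$ and $2\pi$, so it can never be a non-negative real. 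This contradicts $g(\omega)h(\omega)=|{\bf k}|^2\geq 0$, forcing $\operatorname{Im}\,\omega_j(|{\bf k}|)\leq 0$ for every branch and every ${\bf k}$.

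The reduction of \eqref{relationdispersion} to $gh=|{\bf k}|^2$ and the polar-coordinate computation are routine. The step I expect to require the most care is the second one: confirming that each algebraic branch $\omega_j$ that happens to land in $\C^+$ is an actual solution of ${\cal D}(\omega)=|{\bf k}|^2$ with $g$ and $h$ finite, with no spurious roots introduced by clearing the denominators. This bookkeeping is precisely what Assumption~\ref{ireducible} together with Lemma~\ref{lem_poles} is there to guarantee.
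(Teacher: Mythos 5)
Your proof is correct and follows essentially the same route as the paper: both argue by contradiction, invoke Remark \ref{rem_herg} to get $\operatorname{Im}\big(\omega\,\varepsilon(\omega)\big)>0$ and $\operatorname{Im}\big(\omega\,\mu(\omega)\big)>0$ at a hypothetical solution $\omega\in\C^+$, and then observe that the product of two such numbers cannot equal the non-negative real $|{\bf k}|^2$. The only cosmetic differences are that you carry out this last step in polar form (arguments in $(0,\pi)$ sum to something in $(0,2\pi)$) whereas the paper splits the dispersion relation into real and imaginary parts and compares signs, and that you explicitly rule out spurious roots coming from clearing denominators via Lemma \ref{lem_poles}, a bookkeeping point the paper passes over silently.
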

\begin{proof}
	By continuity, is suffices to prove the result for ${\bf k} \neq 0$. Assume that (\ref{relationdispersion}) admits for some solution $\omega$ with $\operatorname{Im}  \, \omega > 0$. Taking the real and imaginary part of (\ref{relationdispersion}), we thus get
	$$
	\begin{array}{l}
		(a) \quad		\operatorname{Re} \big(\omega \, \varepsilon(\omega)\big) \, \operatorname{Re} \big(\omega \, \mu(\omega)\big) = |{\bf k}|^2 +\operatorname{Im}  \big(\omega \, \varepsilon(\omega)\big) \, \operatorname{Im} \big(\omega \, \mu(\omega)\big) \\[12pt]
		(b) \quad		\operatorname{Re}\big(\omega \, \varepsilon(\omega)\big) \, \operatorname{Im}  \big(\omega \, \mu(\omega)\big) + \operatorname{Im}  \big(\omega \, \varepsilon(\omega)\big) \,  \operatorname{Re} \big(\omega \, \mu(\omega)\big) = 0
	\end{array}
	$$
By  passivity  and ${\bf (HF)}$,  $\omega \, \varepsilon(\omega)$ and $\omega \, \mu(\omega)$ are non constant Herglotz function, hence (by Remark  \ref{rem_herg})  we know   that $\operatorname{Im}  \big(\omega \, \varepsilon(\omega)\big) > 0$ and $\operatorname{Im} \big(\omega \, \mu(\omega))> 0$. Thus, we deduce from (a) that $\operatorname{Re} \big(\omega \, \varepsilon(\omega)\big)$ and $\operatorname{Re} \big(\omega \, \mu(\omega)\big)$ do not vanish and have the same sign. However,  (b) says that they are of opposite signs. This is a contradiction.
\end{proof}
\noindent When $\omega = \omega_R + \rmi \, \omega_I, (\omega_R,\omega_I) \in \R^2$,  plane wave solution (\ref{planewaves}) can be rewritten
\begin{equation} \label{planewave2}
	{\bf E}(x,t) = \mathbb{E} \; \rme^{ \rmi ({\bf k} \cdot {\bf x} -\omega_R \; t )}  \; e^{ \, \omega_I \, t}. 
\end{equation}
It corresponds to a wave propagating in the direction of the vector ${\bf k}$ at the phase~velocity~$\omega_R / |{\bf k}|$ with an amplitude which varies in time proportionally to $e^{\, \omega_I \, t}$. According to Theorem \ref{disp_pass}, $\omega_I \leq 0$, thus the amplitude decays in time, which corresponds to the intuitive notion of passivity. When $\omega_I = 0$, the amplitude is constant : the wave{\ purely propagative}. This justifies the
\begin{Def}{\bf (Non-dissipative media)}. A local medium is {non-dissipative} if and only˜if~all plane waves are purely propagative, i. e.~all solutions of the dispersion relation (\ref{relationdispersion}) are {real}.\end{Def}
\begin{Thm} \label{thm.disp_Lor} The Lorentz media \eqref{Lorentzlaws} are non dissipative.
\end{Thm}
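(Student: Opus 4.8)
The plan is to upgrade the one-sided sign bound already furnished by Theorem \ref{disp_pass} to an equality, using a reflection symmetry that is special to Lorentz media. First I would isolate the decisive structural feature of \eqref{Lorentzlaws}: both $\varepsilon(\omega)$ and $\mu(\omega)$ are real rational functions of $\omega$ — in fact even functions depending only on $\omega^2$ — so they take real values on the real axis away from their poles. This is precisely what distinguishes a Lorentz medium from a generic dissipative local medium, and it is the property that should force non-dissipativity. It follows that $\mathcal{D}(\omega) = \omega^2 \varepsilon(\omega)\mu(\omega)$ is itself a real rational function, hence satisfies the Schwarz reflection identity $\overline{\mathcal{D}(\omega)} = \mathcal{D}(\overline{\omega})$ for every $\omega \in \bbC$ outside its poles.

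Next I would make the dispersion relation \eqref{relationdispersion} into a genuine polynomial equation by clearing denominators: writing $\mathcal{D} = \mathcal{N}/\mathcal{Q}$ with $\mathcal{N}, \mathcal{Q}$ coprime real polynomials, the equation $\mathcal{D}(\omega) = |\mathbf{k}|^2$ is equivalent to $\mathcal{N}(\omega) - |\mathbf{k}|^2\,\mathcal{Q}(\omega) = 0$, a polynomial of degree $N_{\mathcal{D}}$ with real coefficients since $|\mathbf{k}|^2$ is real. Its roots are exactly the branch values $\omega_j(|\mathbf{k}|)$ of \eqref{branches}. Because this polynomial has real coefficients, its roots come in complex-conjugate pairs with equal multiplicities.

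Finally I would invoke Theorem \ref{disp_pass}. Lorentz media are passive — their measures \eqref{LorentzMeasures} are positive and even, and $(\mathbf{HF})$ holds — so every solution satisfies $\operatorname{Im}\,\omega_j(|\mathbf{k}|) \leq 0$. Suppose, for contradiction, that some solution $\omega_0$ had $\operatorname{Im}\,\omega_0 < 0$. Then $\overline{\omega_0}$ would also be a solution, with $\operatorname{Im}\,\overline{\omega_0} > 0$, contradicting Theorem \ref{disp_pass}. Hence $\operatorname{Im}\,\omega_j(|\mathbf{k}|) = 0$ for every branch and every $\mathbf{k}$, so all plane waves are purely propagative and the medium is non-dissipative.

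I do not expect a serious obstacle: once the reflection symmetry of $\mathcal{D}$ is noticed, the argument is essentially bookkeeping built on the already-proven Theorem \ref{disp_pass}. The only point requiring care is the clearing of denominators, where one must confirm that no genuine solution is lost or spuriously created, i.e. that the poles of $\varepsilon$ and $\mu$ (the zeros of $\mathcal{Q}$) are not roots of $\mathcal{N} - |\mathbf{k}|^2\mathcal{Q}$; this follows from Assumptions \ref{ireducible}--\ref{ireduciblebis}, which guarantee that $\mathcal{D}$ admits no cancellation and that its poles are genuine. The conceptual crux is simply recognizing that the reality of $\varepsilon$ and $\mu$ on the real axis is the feature that promotes the one-sided bound $\operatorname{Im}\,\omega_j \leq 0$ to the equality $\operatorname{Im}\,\omega_j = 0$.
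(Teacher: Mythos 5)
Your proof is correct and follows essentially the same strategy as the paper's: exhibit a symmetry of $\mathcal{D}$ peculiar to the non-dissipative Lorentz laws that pairs any solution with $\operatorname{Im}\,\omega < 0$ with a solution having $\operatorname{Im}\,\omega > 0$, then contradict Theorem \ref{disp_pass}. The only difference is which symmetry you invoke: the paper uses evenness, $\mathcal{D}(-\omega)=\mathcal{D}(\omega)$, so the partner of $\omega$ is $-\omega$, whereas you use the Schwarz reflection $\overline{\mathcal{D}(\omega)}=\mathcal{D}(\overline{\omega})$ coming from the reality of the coefficients (conjugate-root pairing after clearing denominators); under the reality principle $(\mathbf{RP})$ these two properties of $\mathcal{D}$ are equivalent — each symmetry is the other composed with $\omega\mapsto-\overline{\omega}$ — so the two proofs are interchangeable, your denominator-clearing step being careful bookkeeping (coprimality of numerator and denominator alone suffices for it) that the paper silently skips.
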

\begin{proof} A specificity of Lorentz media is that the functions $\varepsilon(\omega)$ and $\mu(\omega)$ are even, thus the function ${\cal D}(\omega)$ (cf. \eqref{relationdispersion}) are even in $\omega$. Thus if $\omega$ a solution of \eqref{relationdispersion}, $- \omega$ is solution. Thus, if there would exist a solution with $\operatorname{Im}  \, \omega \neq 0$,  there would exist a solution with $\operatorname{Im}  \, \omega > 0$ which, according to Theorem \ref{disp_pass}, would contradict the passive nature of Lorentz media. \end{proof} 
\noindent  Lorentz media are somewhat universal  local models because  Theorem \ref{thm.disp_Lor} has a reciprocal.
\begin{Thm} \label{disp_Lor} Any passive local medium which is non dissipative is a Lorentz medium. 
\end{Thm}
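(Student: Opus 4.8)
The plan is to prove the statement at the level of poles: passivity already pins the poles of $\varepsilon$ and $\mu$ to the closed lower half-plane (Lemma \ref{lem_poles}), and I will show that non-dissipativity forces them onto the real axis, after which the Herglotz--Nevanlinna representation of Corollary \ref{cor.herglotz} collapses to a finite sum of Dirac masses, i.e. exactly the Lorentz structure \eqref{LorentzMeasures}--\eqref{Lorentzlaws}. For the algebraic set-up, write $\varepsilon = P_e(-\rmi\omega)/Q_e(-\rmi\omega)$ and $\mu = P_m(-\rmi\omega)/Q_m(-\rmi\omega)$ with $d^oP_e=d^oQ_e=:d_e$ and $d^oP_m=d^oQ_m=:d_m$, so that the dispersion relation \eqref{relationdispersion} reads $N(\omega)-|{\bf k}|^2 M(\omega)=0$, where $M=Q_e(-\rmi\omega)\,Q_m(-\rmi\omega)$ has degree $d_e+d_m$ and $N$ has degree $N_{\mathcal{D}}=d_e+d_m+2$. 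By Assumptions \ref{ireducible} and \ref{ireduciblebis} there is no cancellation in $\mathcal{D}=\omega^2\varepsilon\mu$, so the poles of $\mathcal{D}$ are exactly $\calP_e\cup\calP_m$, i.e. the roots of $M$.

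The key step links these poles to the dispersion branches \eqref{branches}. As $\lambda=|{\bf k}|^2\to+\infty$, the equation $N(\omega)=\lambda M(\omega)$ has, among its $N_{\mathcal{D}}$ roots, exactly two escaping to infinity (since $\deg N-\deg M=2$), while the remaining $d_e+d_m$ bounded roots converge, with multiplicity, to the roots of $M$, i.e. to the poles of $\mathcal{D}$. Hence every pole $\omega_p$ of $\mathcal{D}$ is the finite limit, along $|{\bf k}|\to+\infty$, of some continuous branch $\omega_j(|{\bf k}|)$. If the medium is non-dissipative, each $\omega_j(|{\bf k}|)$ is real for every ${\bf k}$, so its finite limit $\omega_p$ is real. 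Therefore all poles of $\mathcal{D}$, and thus all elements of $\calP_e$ and $\calP_m$, lie on the real axis. (Theorem \ref{disp_pass} already confines the branches to $\operatorname{Im}\omega\le 0$; non-dissipativity upgrades this to $\operatorname{Im}\omega_p=0$ on the limiting poles.)

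It then remains to deduce the Lorentz form from the reality of the poles. Since $\omega\varepsilon$ is a non-constant rational Herglotz function whose poles are all real, a short local analysis near a real pole $\omega_0$ — imposing $\operatorname{Im}\big(\omega\varepsilon(\omega)\big)\ge 0$ on the Laurent tail $r\,(\omega-\omega_0)^{-m}$ as $\omega\to\omega_0$ inside $\C^+$ — forces $m=1$ and $r$ real with $r<0$; this is the rational incarnation of Lemma \ref{lem_positivity}. Thus $\varepsilon$ has only simple real poles with negative real residues and is real on $\R$ off these poles, so in the representation \eqref{expepsmubis} the measure $\nu_e$ has no absolutely continuous part and reduces to a finite positive combination of Dirac masses supported on $\calP_e$; the evenness $\nu_e(-E)=\nu_e(E)$ pairs these into $\{\pm\omega_{e,j}\}$, yielding precisely \eqref{LorentzMeasures} and the rational form \eqref{Lorentzlaws}. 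The same argument applies verbatim to $\mu$, which concludes the proof.

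I expect the main obstacle to be the second paragraph: making rigorous that \emph{every} pole of $\mathcal{D}$ is genuinely attained as a finite limit of a (real) branch, with correct multiplicity bookkeeping and the guarantee that no pole is missed by the two escaping branches. A Rouch\'e estimate on a small circle around $\omega_p$ — comparing $N-\lambda M$ with $\lambda M$ for large $\lambda$ — is what makes this precise; it simultaneously shows that a pole with $\operatorname{Im}\omega_p<0$ would produce genuine solutions of \eqref{relationdispersion} with $\operatorname{Im}\omega<0$, i.e. a dissipative branch, which is the contrapositive formulation of the reality conclusion.
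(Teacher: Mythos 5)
Your first two paragraphs are sound and essentially reproduce the paper's own argument for pole reality: the paper obtains the convergence of $m$ real branches to each pole via an implicit-function-theorem expansion \eqref{eq.pole}, where you use a Rouch\'e/Hurwitz comparison of $N-\lambda M$ with $\lambda M$; both work and the bookkeeping you worry about is exactly what the paper delegates to its Lemma A.1/Proposition 4.1 citations. The genuine gap is in the third paragraph, at the words ``and is real on $\R$ off these poles''. This does not follow from anything you established. Reality, simplicity and negativity of residues of the poles do \emph{not} force a rational Herglotz function to be real on the real axis: the function $f(\omega)=\varepsilon_0\omega+\rmi\sigma$ with $\sigma>0$ (a conductive medium, cf.\ Remark \ref{rem_comments}) is rational, Herglotz, satisfies $({\bf RP})$ and $({\bf HF})$, and has no poles at all, so it satisfies every condition you derived (vacuously), yet $\operatorname{Im} f\equiv\sigma>0$ on $\R$ and its Nevanlinna measure is $(\sigma/\pi)$ times Lebesgue measure --- purely absolutely continuous, not a finite sum of Dirac masses. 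Structurally: in the partial fraction decomposition $\omega\varepsilon(\omega)=\varepsilon_0\omega+c+\sum_j r_j/(\omega-p_j)$, your constraints pin the $p_j$ to be real and the $r_j$ to be real and negative, and $({\bf RP})$ forces $c$ to be purely imaginary, $c=\rmi\sigma$ with $\sigma\geq 0$; nothing in your argument excludes $\sigma>0$, and $\sigma>0$ destroys the conclusion. Note also that your appeal to Lemma \ref{lem_positivity} is backwards: reality on $\R$ is a \emph{hypothesis} of that lemma, see \eqref{hyp_real}, not a conclusion one can extract from it.

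This is precisely where the paper uses the half of the non-dissipativity hypothesis that your proof never touches: evaluating the dispersion relation \eqref{relationdispersion} at $\bk=0$ shows that the \emph{zeros} of $\mathcal{D}$ are real as well. With all zeros and all poles of $\mathcal{D}$ real, and the behaviour at infinity pinned by $({\bf HF})$ (i.e.\ $\omega\varepsilon(\omega)\sim\varepsilon_0\omega$ with $\varepsilon_0$ real), the factorized form of numerator and denominator shows that $\omega\varepsilon$ and $\omega\mu$ are real-valued on $\R$ off the poles; only then does the Stieltjes inversion formula \eqref{measure} kill the absolutely continuous part and reduce $\nu_e$, $\nu_m$ to finitely many Dirac masses, which is your final step. (Alternatively, one could close the gap within your framework: the symmetric pairing of poles gives $\sum_j r_j/(0-p_j)=0$, hence $\omega\varepsilon(\omega)\to\rmi\sigma$ as $\omega\to 0$, and Assumption \ref{ireduciblebis} --- no pole of $\varepsilon$ at the origin --- forces $\sigma=0$. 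But some such argument, or the paper's zero argument, must actually be supplied; as written, your proof asserts reality where it needs to be proved.)
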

\begin{proof}
First, as the medium is non-dissipative, the  solutions of the dispersion relation are real for any $\bk$, thus in particular for $\bk=0$. This means that the zeros of $\mathcal{D}$ are real. \\[4pt]
Now if $p \in \calP_e\cup \calP_m$ is a pole of $\mathcal{D}$  of multiplicity  $m > 0$. We can write
	$$
\omega^2 \, \varepsilon(\omega) \, \mu(\omega) \sim A_p\;\mathrm{e}^{ \rmi \, \theta_p} (\omega - p)^{-m}, \ \mbox{ for some } A_p>0, \;  \mbox{ and } \theta_p\in [0, 2\pi).
	$$ 
Rewriting the dispersion equation (\ref{relationdispersion}) as $A_p \, \mathrm{e}^{\rmi \, \theta_p} \; (\omega - p)^{-m} \big( 1 + O(\omega - p)\big) = |{\bf k}|^2$,  one deduces using the implicit function theorem (see lemma A.1 and proposition 4.1 of \cite{cas-jol-ros-22-bis} for  a rigorous justification) that  for $|{\bf k}|\rightarrow +\infty$, (\ref{relationdispersion}) admits $m$ branches of solutions $\omega_{p,\ell}(|{\bf k}|)$ given by
\begin{equation}\label{eq.pole}
\omega_{p,\ell}(|{\bf k}|) = p + A_{p}^{\frac{1}{m}} \; |{\bf k}|^{-\frac{2}{m}}\; \rme^{\rmi \, \big(\frac{2\ell \, \pi+\theta_p}{m} \big)}\ \; \big(1 + o(1) \big), \quad (|\bk| \to + \infty) \quad  \ell=1,\ldots, m.
\end{equation}
As $\omega_{p,\ell}(|{\bf k}|) \in \mathbb{R}$ and tends to $p$ when $|\bk| \to + \infty$,  $p\in \mathbb{R}$. Hence,  all the poles of $\mathcal{D}$  are real. \\[6pt]
 Thus, using ${\bf (HF)}$, and the factorized form of the numerator and denominator of $\omega \varepsilon(\cdot)$ (resp. $\omega \mu(\cdot)$), we deduce that $\omega \varepsilon(\cdot)$ (resp. $\omega \mu(\cdot)$)  is   real-valued on $\bbR \setminus \calP_e $ (resp. $\bbR \setminus \calP_m$).\\[6pt]
We prove now that the permittivity  $\varepsilon(\cdot)$ follows a Lorentz 	law  \eqref{Lorentzlaws} for some $N_e\geq 0$.  The proof is similar for the permeability  $\mu(\cdot)$.  As the medium is passive, $\omega \varepsilon(\cdot)$ is a Herglotz function which can be represented by   \eqref{expepsmu} (with $a=\varepsilon_0$ by $\bf (HF)$) where  its 
Borel  measure $\nu_e$ satisfies \eqref{propmes}   and can be computed via the identities  \eqref{measure}.
Using the fact that $\omega \varepsilon(\cdot)$  is real-valued on $\bbR \setminus \calP_e $,  one deduces from   \eqref{measure}
for any interval that contains no pole, i.e. $[a,b]\subset \bbR \setminus \calP_e $, $\nu_e\big([a,b]\big)=0$. Thus the support of $\nu$ is included in $\calP_e$ and $\nu_e$ is composed of a  finite number of Dirac masses  in \eqref{expepsmubis} which permits to conclude.
\end{proof}
\noindent We now analyse in more detail the properties of the dispersion relation \eqref{relationdispersion} for Lorentz media. \\ [12pt]
First of all, by evenness, the set of solutions of \eqref{relationdispersion} $({\bf k}$ being given) can be relabelled as 
\begin{equation} \label{solLorentz}
\left\{ 	\begin{array}{lll}
	\big\{ \pm \, \omega_{n}\big( |\bk| ), 1 \leq n \leq N+1,  N = N_e + N_m \big \}, \\ [12pt]
	\; 0 \leq \omega_{1}( |\bk|) \leq \omega_{2}( |\bk| ) \leq \cdots \leq  \omega_{N+1}( |\bk| ).
	\end{array} \right.
	\end{equation}
We introduce  the subsets of $\R^+$ attained by the functions $\omega_{n}\big( |\bk| \big)$.
\begin{equation} \label{defSn}  
{\cal S}_n := \mbox{closure } \big\{ \omega_{n}( |\bk| ), \bk \in \R ^3\}, \quad \mbox{ a closed sub interval of $\R^+$. }
\end{equation}
To describe our results, we introduce the set of positive poles and  zeros of the function ${\cal D}(\cdot)$, repeated with their multiplicity (which can be shown to be at most 2),
\begin{equation} \label{defZn} 
\left\{ 	\begin{array}{lll}	
	{\cal Z} :=\big\{ z \in \R^+ /  \, {\cal D}(\omega) = 0 \big\} \equiv \big\{ 0=z_1 < z_2 \leq \cdots \leq z_{N+1} \big \}, \\ [12pt]
	{\cal P} :=\big\{ p \in \R^+ /\,  {\cal D}(\omega)^{-1} = 0 \big\} \equiv \big\{ p_1 \leq p_2 \leq \cdots < p_{N+1} = + \infty\big\}, 
		\end{array} \right.
\end{equation}
~\\
where, as $\ds \lim_{\omega \rightarrow + \infty} {\cal D}(\omega) = + \infty$, we have considered that $+\infty$ is a pole of ${\cal D}(\omega) $.
\begin{Rem} Of course, 
${\cal P} = \{ \omega_{e,1}, \cdots, \omega_{e,N_e} \} \cup \{ \omega_{m,1}, \cdots, \omega_{m,N_m} \} $.
\end{Rem} 
\noindent As  a consequence of the fact the poles and  zeros  of $\omega \varepsilon(\cdot)$ (resp. $\omega \mu(\cdot)$) are simple and interlace along the real axis, see section \eqref{sec-Lorentz-def}, one deduces  that
\begin{equation}\label{eq.zero-pole-mult}
{\cal Z} \cup {\cal P} \mbox{ contains at most two consecutive poles or zeros counted with multiplicity}.
\end{equation}
We point out  that  even if,  by  Assumption \ref{ireduciblebis},  $z_1=0$  is a zero of multiplicity $2$ of $\mathcal{D}$,  it is  counted  only once  since  by evenness   we restrict our analysis  to  the positive  solutions  of \eqref{relationdispersion}. 
\begin{Thm} \label{thm_dispersion}  Under Assumptions \ref{ireducible} and \ref{ireduciblebis},  each function $|\bk|  \in \R^+ \mapsto \omega_{n}\big( |\bk| \big)$ is analytic on $ \R^{+,*}$ and strictly monotonous, the function $\omega_1(|\bk|)$ and $\omega_{N+1}\big( |\bk| \big)$ being strictly increasing, 
	\begin{equation} \label{limits_omega_n}
	(i)	 \quad  \omega_{n} (0)= z_n, \qquad  (ii) \quad \lim_{ |\bk|  \rightarrow + \infty} \omega_{n}\big( |\bk| \big) = p_n,
\end{equation}
		which means that  ${\cal S}_n = [z_n, p_n] \mbox{ or } [p_n, z_n]$. Moreover, the interiors of the intervals ${\cal S}_n$ are disjoint, in other words,
	\begin{equation} \label{disjoint}
\forall \; n \leq N, 	\quad \sup \, {\cal S}_n \leq \inf \, {\cal S}_{n+1}.
\end{equation}		
		\end{Thm}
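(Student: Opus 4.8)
The plan is to study the real rational function $\omega \mapsto {\cal D}(\omega) = f(\omega)\,g(\omega)$ on the open half-line $(0,+\infty)$, where I set $f(\omega) := \omega\,\varepsilon(\omega)$ and $g(\omega) := \omega\,\mu(\omega)$. By passivity these are non-constant Herglotz functions satisfying the hypotheses of Lemma \ref{lem_positivity}, hence real and \emph{strictly increasing} between consecutive poles, with simple real poles and zeros. Since Assumption \ref{ireduciblebis} guarantees that $0$ is not a pole, $f(0)=g(0)=0$, so ${\cal D}(0)=0$. The crucial observation is a sign/monotonicity dichotomy: off the poles ${\cal D}' = f'g + f g'$ with $f'>0$ and $g'>0$, so wherever $f,g>0$ one has ${\cal D}'>0$, and wherever $f,g<0$ one has ${\cal D}'<0$. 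Since the equation ${\cal D}(\omega)=|{\bf k}|^2$ only concerns the region $\{{\cal D}\geq 0\}=\{f,g\ \text{of the same sign}\}$, on every connected component of $\{{\cal D}>0\}$ the signs of $f$ and $g$ are constant (neither can vanish there), and ${\cal D}$ is therefore \emph{strictly monotone} with ${\cal D}'\neq 0$.

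Next I would describe each such component precisely. Being strictly monotone, ${\cal D}$ tends at each endpoint either to $0$ (a zero of ${\cal D}$) or to $+\infty$ (a pole of ${\cal D}$); an increasing component must then be of the form $[z,p)$ and a decreasing one of the form $(p,z]$, with $z$ a zero and $p$ a pole. In both cases ${\cal D}$ maps the open component analytically and bijectively onto $(0,+\infty)$ with nonvanishing derivative. The analytic inverse function theorem then produces, for each component, a branch $|{\bf k}| \mapsto {\cal D}^{-1}(|{\bf k}|^2)$ that is analytic on $\R^{+,*}$ and strictly monotone in $|{\bf k}|$ (increasing on an increasing component, decreasing on a decreasing one); its value at $|{\bf k}|=0$ is the zero-endpoint $z$, and its limit as $|{\bf k}|\to+\infty$ is the pole-endpoint $p$. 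In particular the two extreme components $(0,p_1)$ and $(z_{N+1},+\infty)$, on which $f,g>0$, yield the two strictly increasing branches $\omega_1$ and $\omega_{N+1}$.

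It remains to count the components and to match them with the labelling \eqref{solLorentz}. For every $|{\bf k}|>0$ the dispersion equation is, after clearing denominators, a polynomial equation of even degree $2(N+1)$ whose roots are all real by Theorem \ref{thm.disp_Lor}; by evenness they split into pairs $\pm\omega$ and none equals $0$, so there are exactly $N+1$ positive roots, as in \eqref{solLorentz}. Each positive root lies in $\{{\cal D}>0\}$, hence in a component, each component contributes exactly one root (bijectivity), and that root is simple (${\cal D}'\neq 0$); therefore the number of components is exactly $N+1$. Ordering them from left to right identifies the $n$-th component with the branch $\omega_n$, which proves \eqref{limits_omega_n}(i)--(ii) and that ${\cal S}_n=[z_n,p_n]$ or $[p_n,z_n]$ is its closure. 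Finally, distinct components are disjoint intervals separated either by a region $\{{\cal D}<0\}$ or by a single common endpoint --- a double pole (where $f,g$ share a resonance) or a double zero (where $\varepsilon,\mu$ share a zero), cases permitted by Assumptions \ref{ireducible}, \ref{ireduciblebis} and controlled by \eqref{eq.zero-pole-mult} --- so in all cases $\sup {\cal S}_n \leq \inf {\cal S}_{n+1}$, which is \eqref{disjoint}.

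The main obstacle is the second step: establishing cleanly that every component of $\{{\cal D}>0\}$ is bounded by exactly one zero and one pole of ${\cal D}$ and that ${\cal D}$ sweeps $(0,+\infty)$ monotonically across it. This is where the Herglotz positivity of $f$ and $g$ (Lemma \ref{lem_positivity}) and the interlacing/multiplicity bound \eqref{eq.zero-pole-mult} do the real work; the analyticity and endpoint limits are then routine consequences of the analytic inverse function theorem, and the disjointness \eqref{disjoint} is a matter of bookkeeping the shared double poles and zeros at the common endpoints.
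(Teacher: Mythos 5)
Your proof is correct, but it is organized quite differently from the paper's. Both arguments rest on the same cornerstone, namely Lemma \ref{lem_positivity} applied to $\omega\varepsilon$ and $\omega\mu$, which via ${\cal D}'=(\omega\varepsilon)'\,\omega\mu+\omega\varepsilon\,(\omega\mu)'$ forces ${\cal D}'\neq 0$ (with a fixed sign) wherever ${\cal D}>0$. From there the paper works \emph{branch by branch}: it takes the continuous root branches $\omega_n(|\bk|)$ of the polynomial family as given, proves simplicity of roots, gets analyticity from Hartogs plus the analytic implicit function theorem, monotonicity by differentiating ${\cal D}(\omega_n(|\bk|))=|\bk|^2$, the limits \eqref{limits_omega_n}(ii) through the asymptotic expansion \eqref{eq.pole} near each pole together with an induction matching $\ell_n$ with $p_n$, and the non-overlap \eqref{disjoint} by a separate contradiction argument. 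You instead invert ${\cal D}$ \emph{globally} on each connected component of $\{{\cal D}>0\}$: once each component is shown to be an interval joining a zero to a pole on which ${\cal D}$ is an analytic, strictly monotone bijection onto $(0,+\infty)$, the analyticity, monotonicity, endpoint limits and the disjointness \eqref{disjoint} of the closures ${\cal S}_n$ all fall out structurally, with no asymptotics and no contradiction argument; the count of components is supplied by Theorem \ref{thm.disp_Lor} (exactly $N+1$ positive, simple roots for each $|\bk|>0$), which the paper also uses for the labelling \eqref{solLorentz}. Two remarks. First, the one step you leave as ``bookkeeping'' --- that the $n$-th component's endpoints are precisely $z_n$ and $p_n$ in the ordering \eqref{defZn}, double zeros and double poles being shared by two adjacent components --- is genuinely routine: the zero-endpoints (resp.\ pole-endpoints) are nondecreasing across the ordered components and form exactly the multiset $\{z_n\}$ (resp.\ $\{p_n\}$), so the matching is forced; no gap there. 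Second, what the paper's heavier implicit-function-plus-asymptotics route buys is reusability: the same machinery carries over verbatim to the dissipative case (Theorem \ref{thm_dispersion_diss}), where ${\cal D}_{\bal}$ is complex-valued on $\R$ and your monotone-inversion argument on real intervals is no longer available.
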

\begin{proof}
The proof of this result is given in section 3.3 of \cite{cas-kach-jol-17}.  We recall it for the reader. \\[4pt]
{\bf Step 1: The rational dispersion relation \eqref{relationdispersion}  has only simple roots}.\\[4pt]
Let $\bk\neq 0$  and $\omega=\omega_n(|\bk|)$ for some $n\in \{ 1, \ldots, N+1\}$.  Thus $\omega \in \bbR^+ \setminus \calP$ satisfies  \eqref{relationdispersion}, i.e.
$$
\mathcal{D}(\omega)=(\omega\varepsilon)(\omega) \, (\omega \mu)(\omega)=|\bk|^2>0.$$ 
Hence, $\omega>0$ and   $ \varepsilon(\omega)$ and $ \mu(\omega)$ have the same sign and do not vanish. Furthermore, one has
\begin{equation}\label{sign}
\mathcal{D}'(\omega)=(\omega\, \varepsilon  \,  \omega\, \mu)'(\omega)=(\omega \, \varepsilon)'(\omega) \, \omega \mu(\omega)+\omega \, \varepsilon(\omega) \, (\omega \, \mu)'(\omega) .
\end{equation}
By Lemma \ref{lem_positivity}, 
$\omega \varepsilon$  and $\omega \mu$ satisfy the positivity property \eqref{positivity}: $(\omega \, \varepsilon)'(\omega) >0$ and $(\omega \, \mu)'(\omega) >0$. 
This implies (since $\omega>0$) that 
$\mathcal{D}'(\omega)\neq 0$ , i. e.  \eqref{relationdispersion}  has only  simple roots. Thus we can index them by increasing values as follows 
\begin{equation}\label{simplicity}
\omega_1(|\bk|)<\omega_2(|\bk|)<\ldots< \omega_{N+1}(|\bk|),  \quad \forall\; \bk\neq 0. 
\end{equation}
{\bf Step 2: Proof of the analyticity of the dispersion curves $|\bk|\mapsto \omega_{n}(|\bk|)$ on $\mathbb{R}^{+,*}$}.\\[4pt]
The function $G:(w,\zeta )\mapsto \mathcal{D}(w)-\zeta$ is analytic on a vicinity of $(\omega,|\bk|^2)$: indeed, it is analytic in $|\bk|$ and $\omega$ separately which allows us to conclude   by Hartog's theorem, see \cite{Muj-86}, Theorem 36.8 page 271. 
Thus, as $G( \omega_{n}(|\bk|), |\bk|^2)=0$ and $\partial_{w} G( \omega_{n}(|\bk|), |\bk|^2)= \mathcal{D}'( \omega_{n}(|\bk|)) \neq 0$ (by \eqref{sign}), 
 one  shows  easily via the analytic implicit function theorem (see e.g.  \cite{Fri-2002}, Theorem 7.6 page 34),  that the continuous function $|\bk| \to  \omega_{n}(|\bk|) $ is indeed  analytic on $\mathbb{R}^{+,*}$.\\[6pt]
{\bf Step 3: Proof of the strict monotonicity of $|\bk|\mapsto \omega_{n}(|\bk|)$.}\\[6pt]
As $\omega_n$ is analytic on $\bbR^{+,*}$, differentiating in $|\bk|$  the dispersion relation ${\cal D}\big((\omega_n(|{\bf k}|)\big)= |\bk|^2$ yields
\begin{equation} \label{identity}
\omega'_n \big(|{\bf k}| \big) \, {\cal D}' \big( \, \omega_n(|{\bf k}|)\big)= 2 \, |{\bf k}|>0.
\end{equation}
Hence, one has  for  all $|\bk|\neq 0$, $\omega'_n(|{\bf k}| )\neq 0$.  As $\omega'_n$ is  a continuous function, it has to keep the same sign for  all $|\bk|\neq 0$  given by the sign of $ \mathcal{D}'( \omega'_n(|{\bf k}|))$. 
Thus, using \eqref{sign} for $\omega = \omega_n(|{\bf k}| )$, $(\omega \, \varepsilon)'(\omega) >0$ and $(\omega \, \mu)'(\omega) >0$ and   the fact that $ \operatorname{sign}\varepsilon(\omega) = \operatorname{sign} \mu(\omega)$, we deduce that
\begin{equation}\label{eq.refsingbis}
 \operatorname{sign} \omega'_n \big(|{\bf k}| \big)=\operatorname{sign}\varepsilon\big( \omega_n(|{\bf k}| ) \big)=\operatorname{sign}\mu\big( \omega_n(|{\bf k}|)\big). 
\end{equation}
Thus, if $\varepsilon( \omega_n(|{\bf k}| )) $ and $\mu( \omega_n(|{\bf k}| ) )$ are both positive   (resp. both  negative), $ \omega_n$ is strictly increasing (resp. strictly decreasing) on $\bbR^+$.\\[6pt]
{\bf Step 4: Proof of the limits \eqref{limits_omega_n} }\\[6pt]
As $\omega_n(\cdot)$ is continuous at $0$,  $\omega_n(|\bk|)$ converges to $z_n=\omega_n(0)$ as $|\bk|\to 0$. This proves \eqref{limits_omega_n}(i). \\[6pt] 
 On the other hand, as $\omega_n$ is strictly monotonous, $\omega_n$  converge to a limit   $\ell_n$  (finite or infinite) when $|\bk|\to +\infty$, where, from \eqref{simplicity}, $\ell_1 \leq \ell_2 \leq  \cdots \leq\ell_{N+1}$.
Since $\omega_n(|\bk|)$ satisfies  \eqref{relationdispersion}, $\omega_n(|\bk|)^2\, \varepsilon(|\bk|) \mu(|\bk|) \to + \infty$ as $|\bk|\to +\infty$. Thus $\ell_n\in  \calP$, i.e. $\{\ell_1, \ell_2, \cdots, \ell_{N+1}\} \subset {\cal P}$.  It remains to prove that $\ell_n=p_n$. For this we observe that 
\begin{itemize} \item For $n \leq N$, i.e. $p_n <+\infty$, its multiplicity as a pole of $\mathcal{D}$ is $m = 1, 2$. Thus, as in the proof, 
 reasoning as in the proof of Theorem \ref{disp_Lor}, see \eqref{eq.pole}, one knows that there exists  for  $|\bk|$ large enough, $m$ distinct   branches   $\tilde \omega_j(\cdot)$ for $j\in \{ 1, m\}$ of solutions  such that  $$\tilde \omega_j(|\bk|)\to p_n, \quad \mbox{as } |\bk|\to +\infty .$$
\item  If $n=N+1$, since  by ${\bf (HF)} $,  $\omega^2 \varepsilon(\omega) \mu(\omega) \sim \omega^2 \varepsilon_0 \mu_0$ when $\omega\to+ \infty$, one expects the existence of one branch $\tilde \omega_{N+1}(\cdot)$ of solutions such that (the existence of this branch is proved rigorously via the implicit function, see lemma A.1 and proposition 4.1 of \cite{cas-jol-ros-22-bis})
$$\tilde \omega_{N+1}(|\bk|) = c \, |\bk| + o\big(|\bk|\big), \quad  |\bk|  \rightarrow + \infty, \ \  \mbox{  where $c=(\varepsilon_0\mu_0)^{\frac{1}{2}}$}.$$
\end{itemize} 
 As $\ell_1 \leq \ell_2 \leq  \cdots \leq\ell_{N+1}$,  the above shows by induction (left to the reader) that $\ell_n=p_n$ for $n=1,\ldots, N+1$. 
\\[6pt]
{\bf Step 5: Nature of the monotonicity of $\omega_1(\cdot)$ and  $\omega_{N+1}(\cdot)$.}\\[6pt]
By Assumption \ref{ireduciblebis}, $0\notin \calP_e\cup \calP_m $  which implies  that there is no Drude term (as defined in Remark \ref{LorentzDrude}) in $\varepsilon$ and $\mu$. Hence $\varepsilon(\omega)\sim \varepsilon(0)>0$ and  $\mu(\omega)\sim \mu(0)>0$ when $\omega \to 0^+$ (see \eqref{Lorentzlaws}), which implies,  by \eqref{eq.refsingbis}, 
that  the first band is strictly increasing.\\[6pt]
 As $p_{N+1}=+\infty$,  $\omega_{N+1}$ is strictly increasing and $S_{n+1}=[z_{N+1},+\infty[$.
\\[4pt]
{\bf Step 6: Proof of the non-overlapping property \eqref{disjoint}.}\\[4pt]
Assume by contradiction that $\sup \, {\cal S}_n > \inf \, {\cal S}_{n+1}$ for some $n\in \{1,\ldots, N\}$. Then,  the intervals $S_n$ and $S_{n+1}$ overlaps and there exists $k$ and $k'$ such that $\omega_{n}(|\bk|)=\omega_{n+1}(|\bk'|)$. Thus, the dispersion relation  \eqref{relationdispersion} implies that $|\bk|^2=\mathcal{D}(\omega_{n}(|\bk|))=\mathcal{D}(\omega_{n+1}(|\bk'|))=|\bk'|^2$.  Thus $|\bk|=|\bk'|$ and one concludes that $\omega_{n+1}(|\bk|)=\omega_{n}(|\bk|)$ which is impossible because of \eqref{simplicity}.
\end{proof}
\noindent		
When  $\omega = \omega_{n}( |\bk|)$, $\omega/ |\bk| > 0$ is by definition the {\it phase velocity} of the plane wave \eqref{planewaves}  while  $\omega'_{n}( |\bk| )$ is the corresponding 
while {\it group velocity} which, by Theorem \ref{thm_dispersion}, has a constant sign when $|\bk|$ varies. The corresponding dispersion curve $\Gamma_n$ as the (monotonous) graph:
\begin{equation} \label{def_Gamman}
	\Gamma_n := \big\{  \big(|\bk|, \omega_{n}( |\bk| )\big) , |\bk|  \in \R^+\}.
\end{equation}	
Theorem \ref{thm_dispersion}  implies,  for $\bk \neq 0$, one has the strict inequality   $\omega_{n}\big( |\bk| \big) < \omega_{n+1}\big( |\bk| \big)$ but, even more, the only possibility of non empty intersection between two dispersion curves occurs between $\Gamma_n$ and $\Gamma_{n+1}$ in the case of a double zero of ${\cal D}(\omega)$, $z_n = z_{n+1}$, in which case $\Gamma_n \cap \Gamma_{n+1}= \{z_n\})$.
\begin{equation} \label{defS}  
\hspace*{-5.5cm}\mbox{We define the set of {\it propagative} frequencies as } \quad 	{\cal S} := \bigcup_{n=1}^{N+1}  {\cal S} _n. 
\end{equation}
\begin{Lem} \label{lem_caracS} The set ${\cal S}$ is characterized by 
	\begin{equation} \label{charS}
		{\cal S} =  \operatorname{closure}  \big\{ \omega \in \R^+ \setminus {\cal P} \, \mid \, \varepsilon(\omega) \mu(\omega)>  0 \big \}. 
	\end{equation}	
\end{Lem}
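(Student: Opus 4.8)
The plan is to reduce the statement to the elementary equivalence that, for a positive frequency which is not a pole, the dispersion relation \eqref{relationdispersion} admits a solution with a \emph{real} nonzero wave vector precisely when $\varepsilon(\omega)\mu(\omega) > 0$. Indeed, for $\omega > 0$ one has $\operatorname{sign}{\cal D}(\omega) = \operatorname{sign}\big(\varepsilon(\omega)\mu(\omega)\big)$, since ${\cal D}(\omega) = \omega^2\,\varepsilon(\omega)\mu(\omega)$ with $\omega^2 > 0$; and ${\cal D}(\omega) = |\bk|^2$ has a real $\bk \neq 0$ if and only if ${\cal D}(\omega) > 0$. Writing $A := \big\{\omega \in \R^+\setminus{\cal P} \mid \varepsilon(\omega)\mu(\omega) > 0\big\}$, the claim \eqref{charS} becomes $ {\cal S} = \overline{A}$.

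First I would prove $A \subseteq {\cal S}$. Given $\omega \in A$, set $|\bk| := \sqrt{{\cal D}(\omega)} > 0$; then $\omega$ is a positive root of ${\cal D}(\cdot) = |\bk|^2$, hence $\omega = \omega_n(|\bk|)$ for some $n$ by the relabeling \eqref{solLorentz} of all positive solutions, so $\omega \in {\cal S}_n \subseteq {\cal S}$. Since each ${\cal S}_n$ is a closed interval (see \eqref{defSn}) and the union is finite, ${\cal S}$ is closed, whence $\overline{A} \subseteq {\cal S}$.

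For the reverse inclusion ${\cal S} \subseteq \overline{A}$, I would fix $n$ and examine $R_n := \{\omega_n(|\bk|) : \bk \in \R^3,\ \bk \neq 0\}$. For $\bk \neq 0$ one has ${\cal D}(\omega_n(|\bk|)) = |\bk|^2 > 0$, forcing $\omega_n(|\bk|) > 0$ (as ${\cal D}(0)=0$), $\omega_n(|\bk|) \notin {\cal P}$ (a pole would give ${\cal D} = \infty$), and $\varepsilon\mu > 0$ there; thus $R_n \subseteq A$. By Theorem \ref{thm_dispersion} the map $|\bk| \mapsto \omega_n(|\bk|)$ is continuous with $\omega_n(0) = z_n$, so $z_n$ is a limit point of $R_n$ and ${\cal S}_n = \overline{\{\omega_n(|\bk|) : \bk \in \R^3\}} = \overline{R_n} \subseteq \overline{A}$. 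Taking the union over $n$ gives ${\cal S} \subseteq \overline{A}$, and combined with the first step, ${\cal S} = \overline{A}$, which is exactly \eqref{charS}.

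The only delicate points, which I would handle carefully rather than as genuine obstacles, are the bookkeeping at the band endpoints: the zeros $z_n$ (attained at $\bk = 0$, where $\varepsilon\mu$ may vanish) and the poles $p_n$ (reached only as $|\bk| \to +\infty$) belong to ${\cal S}$ but not to $A$, and are recovered precisely by the closure operation, thanks to the continuity and monotonicity of $\omega_n$ from Theorem \ref{thm_dispersion}. One should also check that the relabeling \eqref{solLorentz} genuinely exhausts \emph{all} positive roots of ${\cal D}(\cdot) = |\bk|^2$, so that the assignment $\omega \mapsto \omega_n(|\bk|)$ used in the first step is legitimate.
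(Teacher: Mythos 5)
Your proposal is correct and follows essentially the same route as the paper: the paper's proof establishes the set identity $\big\{\omega \in \R^+\setminus{\cal P} \mid \varepsilon(\omega)\mu(\omega)>0\big\} = \big\{\omega_n(|\bk|) : \bk\neq 0,\ 1\leq n\leq N+1\big\}$ (your two inclusions are exactly the two directions of this equality) and then takes closures, invoking the continuity of $\omega_n$ at $0$ just as you do for the endpoint bookkeeping. Your version merely spells out the details the paper leaves implicit, so there is nothing to add.
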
 
\begin{proof}
Let $\bk\in \bbR^3$. Then, by definition, $\omega_n(|\bk|)$ for  $n=1, \ldots, N+1$ are   the non negative solutions of the dispersion relation $\omega^2 \varepsilon(\omega) \mu(\omega) = |\bk|^2$. Thus, we have
\begin{equation}\label{eq.definterior}
  \big\{ \omega \in \R^+ \setminus {\cal P} \, \mid \, \varepsilon(\omega) \mu(\omega)>  0 \big \}=\big\{ \omega_n(|\bk|), \ \bk \in \bbR^3\setminus \{ 0\} \mbox{ and } n\in \{1,\ldots, N\}  \big\}.
  \end{equation}
By virtue of the definition of $\mathcal{S}$ (see \eqref{defS} and  \eqref{defSn}) and the continuity of the function $\omega_n$ at $0$, one obtains  \eqref{charS} by taking the closure of the above relation.
\end{proof}
\noindent 
	It can be shown, see (Theorem 4.16, remark 4.18 of \cite{cas-jol-ros-22-bis}), that ${\cal S}$ is related to the spectrum $\sigma(\bbA)$ of the operator $\bbA$ appearing in the abstract formulation of section \ref{sec-abstract2}, see \eqref{eq.opAbis}, by
	\begin{equation*} \label{defSmS}  
	\sigma(\bbA) =	{\cal S} \cup \big(- {\cal S} \big) .
	\end{equation*}
For this reason, the sets ${\cal S} _n$ are called spectral bands.
	By opposition, we define the set of {\it non propagative} or {\it evanescent} frequencies as 
	\begin{equation} \label{defG}  
	{\cal G} := \R^+ \setminus 	{\cal S} .
	\end{equation}
	By \eqref{defS}, ${\cal G}$ is  a finite union ($\leq N$) of open bounded intervals,  called {\it spectral gaps}.\\ [12pt] 
	According to theorem \ref{thm_dispersion}, we have
	$$
	[1, \cdots ,N+1] = {\cal N}_+ \cup {\cal N}_-, \quad {\cal N}_+ := \big\{ n \, \mid \, \omega_n'> 0 \mbox{ on } \bR^{+,
*} \big\}, \quad {\cal N}_- := \big\{ n \, \mid \, \omega_n'< 0 \mbox{ on } \bR^{+,
*} \big\}, 
	$$ 
	which allows us to decompose ${\cal S}$ as 
\begin{equation} \label{decompo_spectre}
		{\cal S} = {\cal S}_+ \cup {\cal S}_-, \quad {\cal S}_+ := \bigcup_{n \in {\cal N}_+} {\cal S}_n, \quad {\cal S}_- := \bigcup_{n \in {\cal N}_-} {\cal S}_n
\end{equation}
		where by definition 
${\cal S}_+$ is the set of {\it positive} (or {\it forward}) frequencies and  ${\cal S}_-$ is the set of {\it negative} (or {\it backward}) frequencies: the terminology {\it  positive/forward} or  {\it negative/backward} refers to the sign on the group velocity. Accordingly, the ${\cal S}_n$ for $n \in {\cal N}_+$ are called positive spectral bands and the ${\cal S}_n$ for $n \in {\cal N}_-$ are called negative spectral bands.\\ [12pt]
Note that the last band, which is unbounded,  is always positive. Due to Assumption \ref{ireduciblebis}, the  first band is also positive since, as $z_1=0$, it is of the form $[0, p_1]$ .
\begin{Def} [Negative index material]
A material will be called {\it negative} material (or {\it negative index} material) if ${\cal S}_-$ is not empty, i.e. if there exists {\it at least} one negative spectral band.	
\end{Def}
\noindent According to Theorem \ref{thm_dispersion}, the fact that a material is negative corresponds to the existence of $n \in [1,\cdots, N]$ such that $z_n > p_n$. In fact, this property can be easily characterized in terms of  the functions $\varepsilon(\omega)$ and $\mu(\omega)$ along the real axis. 
\begin{Thm} \label{thm_dispersionbis} The sets ${\cal S}^\pm$ are characterized by 
\begin{equation} \label{charSpm}
	\left\{	\begin{array}{lll}
{\cal S}_+ =  \operatorname{closure}   \big\{ \omega \in \R^+ \setminus {\cal P} \, \mid \, \varepsilon(\omega) >  0 \mbox{ and } \mu(\omega)> 0 \big \},
\\[12pt]
{\cal S}_- =  \operatorname{closure} \big\{ \omega \in \R^+ \setminus {\cal P} \, \mid \, \varepsilon(\omega) < 0 \mbox{ and } \mu(\omega) < 0 \big \}.
\end{array} \right.
	\end{equation}
In particular, a Lorentz material is a negative material if and only if there exists a non empty open sub-interval of $\R^+$ along which $\varepsilon(\omega) > 0 \mbox{ and } \mu(\omega)$ are negative together. 
\end{Thm}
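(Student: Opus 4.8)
The plan is to read off \eqref{charSpm} as a sign-refinement of the characterization of the full propagative set already obtained in Lemma \ref{lem_caracS}. The decisive ingredient is the identity \eqref{eq.refsingbis} established inside the proof of Theorem \ref{thm_dispersion}, which says that along each band the sign of the group velocity $\omega_n'$ coincides with the common sign of $\varepsilon$ and $\mu$ evaluated on that band. Consequently, the partition $\{1,\dots,N+1\}={\cal N}_+\cup{\cal N}_-$ according to the sign of $\omega_n'$ is \emph{the same} as the partition of the set $\{\varepsilon\mu>0\}$ according to the sign of $\varepsilon$ (equivalently of $\mu$), and taking closures will convert this observation into \eqref{charSpm}.

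First I would start from the interior identity \eqref{eq.definterior} from the proof of Lemma \ref{lem_caracS}, namely that $\{\omega\in\R^+\setminus{\cal P}\mid\varepsilon(\omega)\mu(\omega)>0\}$ is exactly the disjoint union over the bands $n$ of the open sets $\{\omega_n(|\bk|)\mid\bk\neq 0\}$, which by the strict monotonicity and the limits \eqref{limits_omega_n} are precisely the interiors of the intervals ${\cal S}_n$. Next, on each such open band I would invoke \eqref{eq.refsingbis}: since $\varepsilon\mu>0$ there, $\varepsilon$ and $\mu$ share a common sign, and this sign equals $\operatorname{sign}\omega_n'$, which is constant along the band by Theorem \ref{thm_dispersion}. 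Hence the interior of ${\cal S}_n$ is contained in $\{\varepsilon>0,\ \mu>0\}$ when $n\in{\cal N}_+$ and in $\{\varepsilon<0,\ \mu<0\}$ when $n\in{\cal N}_-$; conversely every point of $\{\varepsilon\mu>0\}$ lies in one such band, so the two sign-sets decompose as
\begin{equation*}
\{\varepsilon>0,\ \mu>0\}=\bigcup_{n\in{\cal N}_+}\operatorname{int}{\cal S}_n,\qquad \{\varepsilon<0,\ \mu<0\}=\bigcup_{n\in{\cal N}_-}\operatorname{int}{\cal S}_n.
\end{equation*}
Taking closures of these finite unions — using that each ${\cal S}_n$ is a nondegenerate interval (so that $\overline{\operatorname{int}{\cal S}_n}={\cal S}_n$, because $z_n\neq p_n$) and that closure commutes with finite unions — yields ${\cal S}_+=\overline{\{\varepsilon>0,\ \mu>0\}}$ and ${\cal S}_-=\overline{\{\varepsilon<0,\ \mu<0\}}$, which is exactly \eqref{charSpm}.

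For the final assertion I would observe that, by the characterization just obtained, ${\cal S}_-\neq\emptyset$ is equivalent to $\{\omega\in\R^+\setminus{\cal P}\mid\varepsilon(\omega)<0,\ \mu(\omega)<0\}\neq\emptyset$; since this set is cut out by strict inequalities on the continuous functions $\varepsilon,\mu$ away from the discrete pole set ${\cal P}$, it is open, so it is nonempty if and only if it contains a nonempty open subinterval of $\R^+$ on which $\varepsilon$ and $\mu$ are both negative. The only delicate point I expect is the bookkeeping around endpoints versus interiors: one must check that $\varepsilon$ cannot secretly change sign within an open band, which holds because any sign change of $\varepsilon$ requires a zero or pole of $\varepsilon$, hence (using Assumptions \ref{ireducible} and \ref{ireduciblebis}, so that no cancellation occurs in ${\cal D}=\omega^2\,\varepsilon\,\mu$) a zero or pole of ${\cal D}$, and these are precisely the band endpoints $z_n,p_n$ and never interior points. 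Everything else is a routine closure argument.
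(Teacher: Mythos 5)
Your proof is correct and follows essentially the same route as the paper's: both combine the interior identity \eqref{eq.definterior} from Lemma \ref{lem_caracS} with the sign identity \eqref{eq.refsingbis} from the proof of Theorem \ref{thm_dispersion} to split the set $\{\varepsilon\mu>0\}$ band by band according to the sign of the group velocity, and then conclude by taking closures. The only difference is that you spell out details the paper leaves implicit (the non-degeneracy $z_n\neq p_n$ ensuring $\overline{\operatorname{int}{\cal S}_n}={\cal S}_n$, and the openness argument proving the final assertion about negative materials), which is harmless.
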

\begin{proof}
Let $n\in \{ 1, \ldots, N+1\}$ be fixed. For $\bk\neq 0$, $\omega_n(|\bk|)$ is a  positive solution of the dispersion relation \eqref{relationdispersion}. As $\omega_n$ is continuous (and even analytic) function on $(0, \infty)$, $\varepsilon(\omega_n(|\bk|))$ and $\mu(\omega_n(|\bk|))$ does not vanish on $(0,\infty)$ and have the same sign. As it has been point out in   \eqref{eq.refsingbis}, this  sign determines the sign of $\omega'_n(|\bk|)$ which is thus also constant on $(0,\infty)$. So that, by virtue of  \eqref{eq.definterior}, one gets that
$$
\{ \omega \in \R^+ \setminus {\cal P} \, \mid \,  \pm  \, \varepsilon(\omega) >  0 \mbox{ and } \pm \,  \mu(\omega)> 0 \big \}=\big\{ \omega_n(|\bk|), \ \bk \in \bbR^3\setminus \{ 0\} \mbox{ and } n\in \{1,\ldots,  {\cal N}_\pm  \}  \big\}
$$
By virtue of the definition of $\mathcal{S}_{\pm}$ (see \eqref{decompo_spectre} and  \eqref{defSn}) and the continuity of the function $\omega_n$ at $0$, one obtains  \eqref{charSpm} by taking the closure of the above relation.
\end{proof}
 \noindent We summarize the results in section 2.5 by the figure \ref{Fig_Disp}.
	\begin{figure}[h!] 
	\centerline{
		\includegraphics[width=0.6 \textwidth]{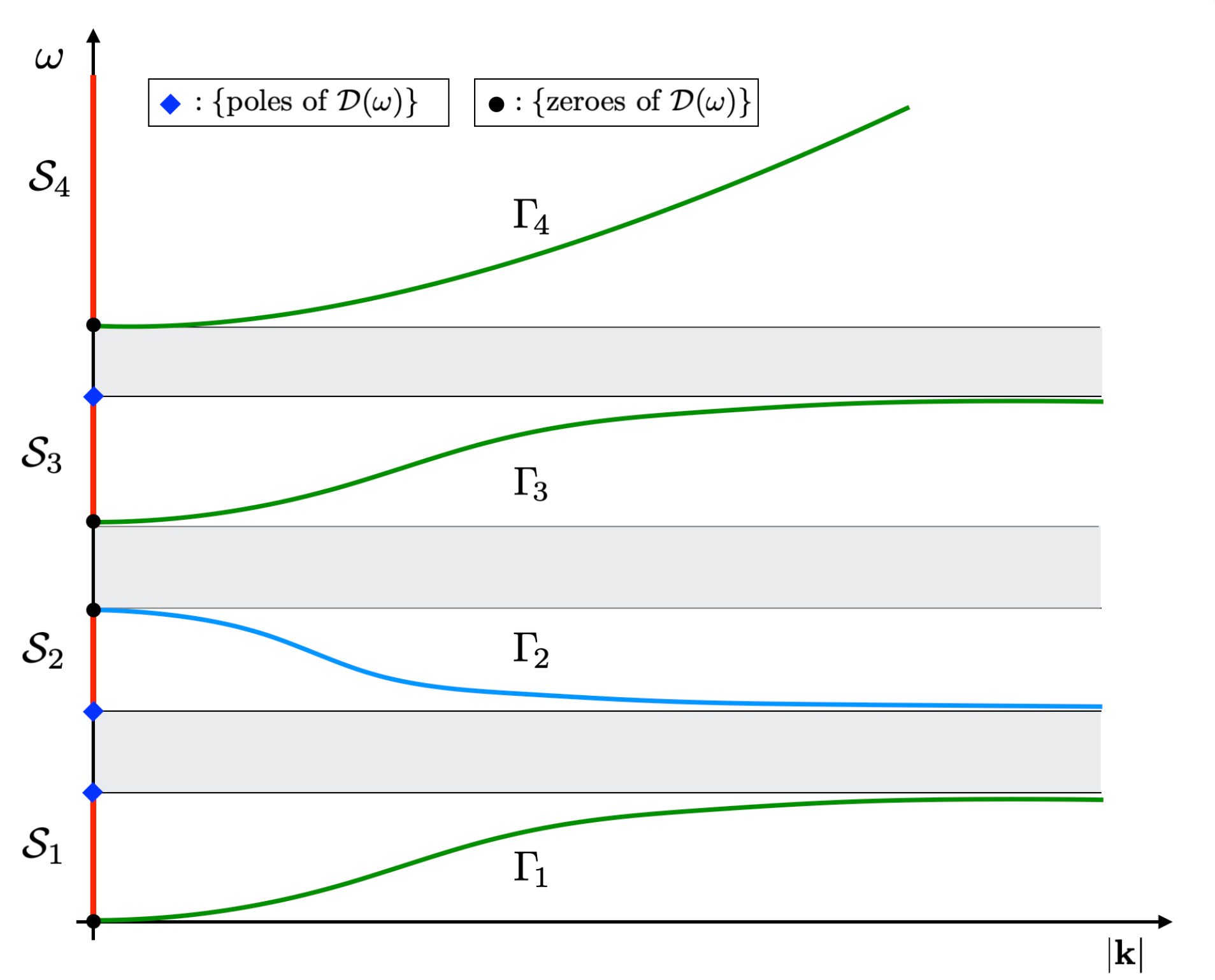}}
	\caption{Plot of dispersion curves for $N=3$. The spectral bands ${\cal S}_n$ are the projections of the dispersion curves $\Gamma_n$ on the $\omega$-axis. There is one negative band: ${\cal S}_2$.  }
	\label{Fig_Disp}
\end{figure}
\begin{Example}[Drude model] The Drude model is the simplest Lorentz model, corresponding to $N_e = N_m = 1$ and $\omega_{e,1} = \omega_{e,1} = 0$. This leads to 
(with $\Omega_{e,1} \equiv \Omega_{e}$ and $\Omega_{m,1} \equiv \Omega_{m}$)
\begin{equation} \label{Drudelaws}
	\varepsilon(\omega)	= \varepsilon_0 \; \Big( 1 - \frac{\Omega_{e}^2}{\omega^2}\Big), \quad \mu(\omega)	= \mu_0 \; \Big( 1 - \frac{\Omega_{m}^2}{\omega^2}\Big).
\end{equation}
Let us assume, without loss of generality, that $\Omega_e \leq \Omega_m$. As $\varepsilon(\omega)$ and $\mu(\omega)$ are negative for $0 \leq \omega < \Omega_e$, the Drude material is a negative material . Moreover, when $\Omega_e <\Omega_m$, one can check that the medium has one spectral gap, namely ${\cal G} = ]\Omega_e, \Omega_m[$, and two spectral bands:
$$
{\cal S}_1 = [0, \Omega_e] \mbox{ (a negative spectral band)}, \quad {\cal S}_2 = [\Omega_m, + \infty] \mbox{ (a positive spectral band)}, \quad 
$$
In the {\it critical} limit case  $\Omega_e =  \Omega_m$, the gap disappears and the two spectral bands touch.   The fact that ${\cal S}_1$  is a negative band seems  to  contradict Theorem \ref{thm_dispersion}. Indeed, there is no contradiction, since for the Drude material   $0\in \calP_e\cap \calP_m$. This means that Assumption \ref{ireduciblebis}, one of the assumptions of Theorem \ref{thm_dispersion},  does not hold for this material.
\end{Example}

\section{Dissipative Lorentz models} \label{Lorentz-dissp}
\subsection{Dielectric permittivity and magnetic permeabilty} \label{Lorentz-dissp-permittivity-permeability}
Dissipative Lorentz media are deduced from the non dissipative Lorentz laws \eqref{Lorentzlaws} by adding some imaginary part  to the denominators in the fraction expansion of $\varepsilon(\omega)	$ and $ \mu(\omega)$. This involves  (electric and magnetic) damping coefficients $\alpha_{e,j} \geq 0$ and  $\alpha_{m,\ell} \geq 0$. We shall note 
\begin{equation} \label{Damping coeff}
	\boldsymbol{\alpha}_e := ( \alpha_{e,j}) \in \R^{N_e}, \quad \boldsymbol{\alpha}_m:= ( \alpha_{m,\ell}) \in \R^{N_m}, \quad \boldsymbol{\alpha} := (\boldsymbol{\alpha}_e, \boldsymbol{\alpha}_m) \in \R^N.
\end{equation}
The corresponding permittivity and permeability, that we choose to index by $\boldsymbol{\alpha}$ to distinguish them form \eqref{Lorentzlaws} in the non dissipative case, are then given by:
\begin{equation} \label{DissipativeLorentzlaws}
	\varepsilon_\bal(\omega)	= \varepsilon_0 \; \Big( 1 + \sum_{j = 1}^{N_e} \frac{\Omega_{e,j}^2}{\omega_{e,j}^2 - \rmi \, \alpha_{e,j} \, \omega -\omega^2}\Big), \quad \mu_\bal(\omega)	= \mu_0 \; \Big( 1 + \sum_{\ell = 1}^{N_m} \frac{\Omega_{m,\ell}^2}{\omega_{m,\ell}^2 - \rmi \, \alpha_{m,\ell} \, \omega -\omega^2}\Big).
\end{equation}
In this section we shall assume that  $\boldsymbol{\alpha} \neq 0$ i.e. that at least one damping coefficient is positive. 
\noindent	The fact that $\omega \, \varepsilon_\bal(\omega)$  and $\omega \, \mu_\bal(\omega)$ are Herglotz functions can be checked directly by hand. \\ [12pt]By the way, one can make explicit the corresponding Herglotz measures $\md \nu_e(\xi)$ and $\md \nu_m(\xi)$, cf. \eqref{expepsmu},  which, contrary to the non dissipative case, are no longer pure point measures associated to resonances as in the non dissipative case, but the sum of Dirac measures and an absolutely continuous measure with respect to Lebesgue measure. More precisely, 
one can check that 
\begin{equation} \label{measuresbis}
	\left\{\begin{array}{llclc}
\ds 	\md \nu_e(\xi) & = & \ds \sum_{\alpha_{e,j}=0} \frac{ \Omega_{e,j}^2 }{2} (\delta_{\omega_{e,j}}  +  \delta_{-\omega_{e,j}} ) & + & \ds \sum_{{\alpha_{e,j}\neq0}}  \Omega_{e,j}^2 \, \nu_{e,j}(\xi) \, \md \xi, \\ [24pt] 
\ds	\md \nu_m(\xi)& = & \ds \sum_{\alpha_{m,\ell}=0} \frac{ \Omega_{m,\ell}^2 }{2}(\delta_{\omega_{m,\ell}} + \delta_{-\omega_{m,\ell}} )  & + &  \ds \sum_{\alpha_{m,\ell}=0}  \Omega_{m,\ell}^2 \, \nu_{m,\ell}(\xi) \, \md \xi,
	\end{array} \right.
\end{equation}
where the densities $\nu_{e,j}(\xi)$ and $\nu_{m,\ell}(\xi)$ are given by 
\begin{equation} \label{densities}
 \nu_{e,j}(\xi) = \frac{1}{\pi} \, \frac{\alpha_{e,j} \,  \xi^2}{(\xi^2 -\omega_{e,j}^2 )^2 +  \, \alpha_{e,j}^2 \, \xi^2} , \quad  \nu_{m,\ell}(\xi) = \frac{1}{\pi} \, \frac{\alpha_{m,\ell} \,  \xi^2}{(\xi^2 -\omega_{m,\ell}^2 )^2 +  \, \alpha_{m,\ell}^2 \, \xi^2}
\end{equation}
\begin{Rem} It is an exercise for the reader to recover (\ref{measuresbis}, \ref{densities}) from the general theory (see \eqref{measure}). It is also an exercise to check that, in the sense of measures, one has  the following weak convergence:
$$
\nu_{e,j}(\xi) \, \md \xi \; \longrightarrow \; \frac{1}{2}\big(\delta_{\omega_{e,j}} + \delta_{-\omega_{e,j}}\big) \  (\alpha_{e,j} \rightarrow 0), \quad \nu_{m,\ell}(\xi) \, \md \xi \; \longrightarrow \; \frac{1}{2}\big(\delta_{\omega_{m,\ell}} + \delta_{-\omega_{m,\ell}}\big)  \ ( \alpha_{m,\ell}\rightarrow 0),
$$
\end{Rem} 
\noindent In the spirit of section \ref{Augmented}, one would use the (non local) conservative augmented formulation \eqref{Lorentzsystemgene}. However, since this formulation corresponds to the conservation of an energy, namely ${\cal E}_{cons}(t)$, it is not adapted to put in evidence the dissipative nature of the material and proving the long time decay of the  electromagnetic energy, which will be the main objective on this section. 
That is why we are going to use an alternative augmented formulation. 
\subsection{Dissipative augmented formulation}\label{sec-disspative-evol}
\noindent 
This formutation is somewhat more natural than \eqref{Lorentzsystemgene}: it is directly issued from the rational form \eqref{DissipativeLorentzlaws} of $	\varepsilon_\bal(\omega)$ and $ \mu_\bal(\omega)$. This leads to the following system of equations where each additional unknown, ${\bf P}_j $ or ${\bf M}_\ell$, is associated in each of the denominators in \eqref{DissipativeLorentzlaws}:
	\begin{equation} \label{DissipativeLorentzsystem}
	\left\{	\begin{array}{lll}
		\ds	\varepsilon_0 \, \partial_t {\bf E} + {\bf rot} \, {\bf H} + \varepsilon_0 \, \sum_{j=1}^{N_e} \Omega_{e, j}^2  \, \partial_t \bbP_j= 0, & \quad  \partial_t^2 \bbP_\ell  + \alpha_{e,j} \, \partial_t \bbP_j + \omega_{e,j}^2 \,  \bbP_j = \,  {\bf E},\\[18pt]
		\ds \mu_0 \, \partial_t {\bf H} - {\bf rot} \, {\bf E}  + \mu_0 \,  \sum_{\ell = 1}^{N_m} \Omega_{m, \ell}^2 \, \partial_t \bbM_\ell = 0, 	 &  \quad \partial_t^2 \bbM_\ell + \alpha_{\ell,m} \, \partial_t \bbM_\ell + \omega_{m, \ell}^2 \, \bbM_\ell =  \,  {\bf H}.
	\end{array} \right.
\end{equation}
This system is completed by the divergence free initial conditions
\begin{equation} \label{CI}
	\left\{\begin{array}{l}
		\mathbf{E}(\cdot, 0) =  \mathbf{E}_0, \  \mathbf{H}(\cdot, 0) =  \mathbf{H}_0, \ \bbP(\cdot, 0)= \ \partial_t \bbP(\cdot, 0) =0,   \ \bbM(\cdot, 0)= \ \partial_t \bbM(\cdot, 0) =0  \\ [8pt] 
		\mbox{ with }  \ \mathbf{E}_0, \,  \bH_0 \in \bL^2(\mathbb{R}^3)  \ \mbox{ and } \  \nabla \cdot \mathbf{E}_0=\nabla \cdot \mathbf{H}_0=0,
\end{array} \right.
\end{equation}
where  $\bL^2(\mathbb{R}^3)=L^2(\mathbb{R}^3)^3$ and
\begin{equation*} \label{notPM} \bbP=(\bbP_j)\quad \mbox{and} \quad {\bbM}=(\bbM_\ell) \quad \mbox{ with } \quad  (\bbP_j):=(\mathbf{P}_j)_{j=1}^{N_e} \ \mbox{ and } \ (\bbM_\ell):=(\bbM_\ell)_{\ell=1}^{N_m}.
\end{equation*}
The reader will notice that one can assume without any loss of generality that the couples $(\alpha_{e,j}, \omega_{e,j})$ (resp. $(\alpha_{m,\ell}, \omega_{m,\ell})$) are all distinct the ones from the others.\\[6pt] 
\noindent Note that the above dissipative Lorentz models are purely local and differ from non dissipative Lorentz models \eqref{Lorentzsystem} only by adding a first order term in the ODE's for the auxiliary unknowns.
The total energy  ${\cal E}_{dis}(t)={\cal E}(t)+ {\cal E}_{e}(t)+{\cal E}_{m}(t)$ of the solution of the Cauchy problem associated to \eqref{DissipativeLorentzsystem}, still defined  as ${\cal E}_{cons}(t)$  by (\ref{defEnergy}, \ref{additionalenergiesbis}),  is this time no longer conserved but is a decreasing function of time, thanks to the following energy identity (whose proof follows the one of \eqref{thm.Energygene}):
	\begin{equation} \label{Dissipationenergy}
		\frac{d}{dt} {\cal E}_{dis}(t) + \Delta_{\boldsymbol{\alpha}}(t) = 0, \quad \Delta_{\boldsymbol{\alpha}}(t)  = \Delta_{\boldsymbol{\alpha},e}(t)  + \Delta_{\boldsymbol{\alpha},m}(t),
		\end{equation}
		where the positive dissipation functions $\Delta_{\boldsymbol{\alpha},e}(t)$ and $\Delta_{\boldsymbol{\alpha},m}(t)$ are given by 
	\begin{equation} \label{dissipationfunctions}
	\left\{	\begin{array}{l}
		\ds	\Delta_{\boldsymbol{\alpha},e}(t) :=   \varepsilon_0 \ \sum_{j=1}^{N_e} \alpha_{e,j}\, \Omega_{e, j}^2 \int_{\R^3} |\partial_t {\bf P}_{\!j}({\bf x}, t)|^2  \; d{\bf x},\\[15pt]
		\ds  
		\ds	\Delta_{\boldsymbol{\alpha},m}(t) :=  \mu_0 \ \sum_{\ell=1}^{N_m} \alpha_ {\ell,m}\Omega_{m, \ell}^2 \int_{\R^3}  |\partial_t {\bf M}_\ell({\bf x}, t)|^2   \; d{\bf x}.
	\end{array} \right.
\end{equation}
The identity  \eqref{Dissipationenergy} expresses a dissipation result from which one can expect that the total energy ${\cal E}_{dis}(t)$, thus a fortiori the electromagnetic energy ${\cal E}(t)$, tends to $0$ when $t \rightarrow + \infty$. This question was the object of the two papers \cite{cas-jol-ros-22,cas-jol-ros-22-bis}, in which we analyzed the rate of decay  of ${\cal E}(t)$ to $0$ for large time.  In the following, we wish to report on the results of the second paper, based on a modal analysis via the use of the Fourier transform in space. There results deeply relies on a sharp analysis of the dispersion relation \eqref{relationdispersion} for the dissipative Lorentz model \eqref{DissipativeLorentzlaws}, which is the object of the section \ref{sec_DispersionDissipative}.  This dispersion analysis results into decay estimates for the energy ${\cal E}(t)$, as we shall see in sections \ref{sec_decay}  and \ref{sec_decay_proof}.
\subsection{Dissipative dispersion relation} \label{sec_DispersionDissipative}
The dispersion relation of \eqref{DissipativeLorentzsystem} writes
\begin{equation} \label{relationdispersionDiss}
	{\cal D}_\bal(\omega) = |{\bf k}|^2, \quad \mbox{where } {\cal D_\bal}(\omega)  :=  \omega^2 \, \varepsilon_\bal(\omega) \mu_\bal(\omega).
\end{equation}
Contrary to ${\cal D}(\omega)$ in the non dissipative case, ${\cal D}_\bal(\omega)$ is no longer even but satisfies the symmetry property $\overline{{\cal D}_\bal(- \overline \omega) }= {\cal D}_\bal(\omega)$ (which corresponds to $({\bf RP})$). \\[6pt]
We introduce the second order polynomials  $q_{e,j}$ and $q_{m,\ell}$ are defined by
\begin{equation}\label{eq.polynom}
	q_{e,j}(\omega)=\omega^2+ \rmi\, \alpha_{e,j} \, \omega- \omega_{e, j}^2 \  \mbox{ and } \ q_{m,\ell}(\omega)=\omega^2+ \rmi \, \alpha_{m, \ell} \, \omega- \omega_{m, \ell}^2 .
\end{equation}
\noindent We make now an additional assumption for the irreducibility of the dispersion relation \eqref{relationdispersionDiss}:
\begin{Ass} \label{ireducible2}
Electric polynomials  $q_{e,j}$ (see \eqref{eq.polynom})  with  distinct indices $j$  do not have common roots. The same holds for the magnetic  polynomials $q_{m,\ell}$  with  distinct indices $\ell$. 
\end{Ass}

\begin{Rem}
When $\alpha_{e,j} <  2 \, \omega_{e,j}$, the two roots of $q_{e,j}$, $\omega_*$ and $-\overline{\omega_*}\notin \rmi \R^-$, are distinct.
Moreover, as  $\alpha_{e,j}=-2\operatorname{Im}(\omega_*)$ and $\omega_{e,j}= |\omega_*|$, $q_{e,j}$  can not share a common root with an other electric polynomial $q_{e,j'}$ since by assumption: $(\alpha_{e,j},\omega_{e,j})\neq (\alpha_{e,j'},\omega_{e,j'})$ for $j\neq j'$. Therefore, one only needs to assume $(\mathrm{H}_1)$ for electric polynomials $q_{e,j'}$ for which $\alpha_{e,j'}\geq 2 \, \omega_{e,j'} $. The same properties hold for the magnetic polynomials $q_{m,\ell}$ with obvious changes.  
\end{Rem}
%

\noindent It is easy to check that  with the Assumptions \ref{ireducible} and \ref{ireducible2} the rational functional function  ${\cal D}_\bal$ is irreducible. Moreover, its numerator is of degree $2 N+2$ and its denominator is of degree $2N$ (where we recall that  $N=N_e+N_m$). \\[6pt]
As in the non dissipative case, the set of poles ${\cal P}_\bal $  and zeros  ${\cal Z}_\bal $ of the function ${\cal D}_\bal$ will play a particular role. Due to the symmetry property,  thes sets are invariant by the transformation $\omega \to - \overline{\omega}$ and more precisely if $z_{\alpha}$ (resp. $p_{\alpha}$ ) is a zero (resp. a pole) of multiplicity $m$ of a ${\cal D}_\bal$ if and only if 
$-\overline{z_{\alpha}}$ (resp. $-\overline{p_{\alpha}}$ ) is a zero (resp. a pole) of multiplicity $m$ of ${\cal D}_\bal$. 
\begin{Def} \label{Resonances} $[$ Resonances/Non dissipativity $] $We shall say that 
	$\pm \omega_{e,j}$ is an electric resonance if $\alpha_{e,j}=0$, 
	and $\pm \omega_{m,\ell}$ is an magnetic resonance if  $\alpha_{m,\ell}=0$.
	\noindent We say that the medium is electrically non dissipative (e.n.d)  if all $\alpha_{e,j}$vanish, and that 
	it is  magnetically non dissipative (m.n.d) if all $\alpha_{m,\ell}$ vanish. We denote $\text{R}_{\text e} $, resp. $\text{R}_{\text m} $, the set  of electric, resp. magnetic, resonances. Of course, $\text{R}_{\text e} $ is the set of real poles of $\omega \, \varepsilon(\omega)$,  $\text{R}_{\text m} $ is the set of real poles of $\omega \, \mu(\omega)$ and  $\text{R}_{\text e} \cup \text{R}_{\text m}$ is the set of real poles of ${\cal D}_\bal$. 
\end{Def}
\noindent We split the set of electric resonances into two disjoint subsets $\text{R}_{\text e} = \text{R}^s_{\text e}  \cup \text{R}^d_{\text e}$ where  
\begin{equation} \label{defRe} 
	\pm \omega_{e,j} \in \text{R}_{\text e}^s  \quad \Longleftrightarrow \quad \omega_{e,j} \in \text{R}_{\text e} \mbox{ and } \omega_{e,j} \notin \{\omega_{m,\ell}\}.
\end{equation}
In the same way, $\text{R}_{\text m} = \text{R}^s_{\text m}  \cup \text{R}^d_{\text m}$ where  
\begin{equation} \label{defRm} 
	\pm \omega_{m,\ell} \in \text{R}_{\text m}^s  \quad \Longleftrightarrow \quad \omega_{m,\ell} \in \text{R}_{\text m} \mbox{ and } \omega_{e,j} \notin \{\omega_{e,j}\}.
\end{equation}
Note that $\pm \omega_{e,j} \in \text{R}_{\text e}^s$ iff  $\pm \omega_{e,j} \in \text{R}_{\text e}$ and is a simple pole of ${\cal D}_\bal$, while   $\pm \omega_{m,\ell} \in \text{R}_{\text m}^s$ iff   $\pm \omega_{m,\ell} \in \text{R}_{\text m}$  and is a double pole of ${\cal D}_\bal$ (this explains the the indices ${s}$ and ${d}$). Also note that $  \text{R}_{\text e}^d= \text{R}_{\text m}^d$.
\begin{Def} \label{Critical_cases} $[$Weak/strong dissipativity $]$
	Considering the Lorentz model \eqref{DissipativeLorentzlaws}, we say that
	\begin{itemize}
		\item[(i)]  the medium is magnetically weakly dissipative if it is e.n.d  and $\text{R}_{\text e}^s\neq \emptyset$. 
		\item[(ii)]    the medium is electrically weakly dissipative if it is m.n.d and $\text{R}_{\text m}^s \neq \emptyset$. 
	\end{itemize}
	\noindent	The medium is weakly dissipative  if it is magnetically or electrically weakly dissipative. Otherwise, we shall say that the medium is strongly dissipative. 
\end{Def} 
\noindent Note that, of course, dissipative Lorentz media are generically  strongly dissipative.

\begin{Thm} \label{thm_dispersion_diss} Seen as an equation in $\omega$, the dispersion relation \eqref{relationdispersionDiss} admits $2N+2$ solutions $$\big \{\omega_{n}^\bal( |\bk|) , 1 \leq n \leq  2N+2\big\} $$
where each function $|\bk|  \in \R^+ \mapsto \omega_{n}^\bal( |\bk|)$ is  continuous and piecewise analytic  
and satisfies 
\begin{equation} \label{prop_dissip}
	 \forall \; \bk \neq 0, \quad \operatorname{Im} \, \omega_{n}^\bal\big( |\bk| \big) < 0.
	 \end{equation} 
	 Moreover, one has $\omega_{n}^\bal (0)= z_n^\bal \in \calZ_{\bal}, \ \forall  n\in  \{1, \ldots , 2N+2\}$ and the numeration of the functions $\omega_n^\bal$ is chosen such that 
\begin{equation} \label{limits_omega_n_alpha}
 \lim_{ |\bk|  \rightarrow + \infty} \omega_{n}^\bal\big( |\bk| \big) = p_n^\bal \in \calP_{\bal} \mbox{ if }  1\leq  n \leq 2N \mbox { and }   \ds   \lim_{ |\bk|  \rightarrow + \infty} \, | \, \omega_{n}^\bal( |\bk| )| = + \infty \mbox{ if  } n \geq 2N+1. 
\end{equation}
Furthermore, if $z^\bal\in   \calZ_{\bal}$ is a zero (resp. $p^\bal \in  \calP_{\bal}$ is  a pole) of multiplicity $m$  of ${\cal D}_\bal$, there exists exactly $m$ distinct branches  $\omega_n^\bal $ which converge to  $z^\bal$ as $|\bk|\to 0$ (resp. to $p^{\bal}$ as $|\bk|\to +\infty$).
Moreover, if $|\bk|$ is large enough   or  $|\bk|>0$ is small enough, the  branches of solutions $\omega_n^{\bal}$ do not cross each other.
	\end{Thm}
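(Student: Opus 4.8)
The plan is to read the dispersion relation \eqref{relationdispersionDiss} as a polynomial equation in $\omega$ depending affinely on the parameter $|\bk|^2$. Under Assumptions \ref{ireducible} and \ref{ireducible2} the rational function $\mathcal{D}_\bal$ is irreducible with numerator of degree $2N+2$ and denominator of degree $2N$, so clearing denominators turns \eqref{relationdispersionDiss} into $\mathrm{Num}(\omega) - |\bk|^2\,\mathrm{Den}(\omega) = 0$, a polynomial of degree $2N+2$ whose top coefficient equals $\varepsilon_0\mu_0 \neq 0$ and is independent of $|\bk|$. Hence for every $|\bk|$ there are exactly $2N+2$ roots counted with multiplicity, which I label $\omega_n^\bal(|\bk|)$. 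Their continuity in $|\bk|$ follows as for the branches \eqref{branches} (Proposition 4.1.19 of \cite{Hin-10}), and their analyticity away from the finitely many values of $|\bk|$ at which the discriminant (a polynomial in $|\bk|^2$) vanishes follows from the analytic implicit function theorem, exactly as in Step 2 of the proof of Theorem \ref{thm_dispersion}: Hartogs' theorem gives joint analyticity of $(\omega,|\bk|^2)\mapsto \mathcal{D}_\bal(\omega)-|\bk|^2$, and $\mathcal{D}_\bal'(\omega)\neq 0$ at a simple root lets one solve for $\omega$ analytically. This yields the asserted continuous, piecewise-analytic branches.

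\textbf{Strict dissipation.} The core is the strict estimate \eqref{prop_dissip}. Because the dissipative Lorentz medium is passive and local, Theorem \ref{disp_pass} already gives $\operatorname{Im}\omega_n^\bal(|\bk|) \leq 0$, so it suffices to exclude \emph{real} solutions for $\bk \neq 0$. A direct computation from \eqref{DissipativeLorentzlaws} gives, for real $\omega \neq 0$,
\[
A := \operatorname{Im}\big(\omega\,\varepsilon_\bal(\omega)\big) = \varepsilon_0 \sum_{j} \frac{\Omega_{e,j}^2\,\alpha_{e,j}\,\omega^2}{(\omega_{e,j}^2-\omega^2)^2 + \alpha_{e,j}^2\,\omega^2} \geq 0,
\]
and the analogous nonnegative expression $B := \operatorname{Im}(\omega\,\mu_\bal(\omega))$; each is strictly positive as soon as one damping coefficient of the corresponding type is positive. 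Since $\bal \neq 0$, at least one of $A,B$ is positive for every real $\omega\neq 0$. If exactly one vanishes, say $B=0$, then $\omega\,\mu_\bal(\omega)\in\R$ and $\operatorname{Im}\mathcal{D}_\bal(\omega) = A\,\operatorname{Re}(\omega\,\mu_\bal(\omega))$, which can vanish only if $\omega\,\mu_\bal(\omega)=0$, forcing $\mathcal{D}_\bal(\omega)=0\neq|\bk|^2$; if both are positive, the sign argument of Theorem \ref{disp_pass} applies verbatim, since $\operatorname{Re}\mathcal{D}_\bal=|\bk|^2>0$ forces $\operatorname{Re}(\omega\varepsilon_\bal)$ and $\operatorname{Re}(\omega\mu_\bal)$ to share a sign while $\operatorname{Im}\mathcal{D}_\bal=0$ forces opposite signs. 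Finally $\omega=0$ is excluded since $\mathcal{D}_\bal(0)=0$ under the running assumptions. Hence no real solution survives and \eqref{prop_dissip} holds.

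\textbf{Endpoints, multiplicities, non-crossing.} At $|\bk|=0$ the equation reads $\mathcal{D}_\bal(\omega)=0$, whose roots are precisely $\calZ_\bal$, so by continuity $\omega_n^\bal(0)\in\calZ_\bal$; near a zero $z^\bal$ of multiplicity $m$ the Puiseux expansion $\omega - z^\bal \sim c\,|\bk|^{2/m}\,\rme^{2\rmi\pi\ell/m}$, obtained by the implicit function argument behind \eqref{eq.pole} (cf. \cite{cas-jol-ros-22-bis}), produces exactly $m$ branches tending to $z^\bal$. As $|\bk|\to+\infty$, dividing by $|\bk|^2$ shows that bounded roots converge to the roots of $\mathrm{Den}$, i.e. the poles $\calP_\bal$ of total multiplicity $2N$, with $m$ branches per pole of multiplicity $m$ (again via \eqref{eq.pole}); the two remaining roots escape to infinity, where $\mathcal{D}_\bal(\omega)\sim\varepsilon_0\mu_0\,\omega^2$ by \textbf{(HF)} yields $|\omega_n^\bal(|\bk|)|\to+\infty$. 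This accounts for all $2N+2$ branches and fixes the labeling and the limits in \eqref{limits_omega_n_alpha}. The non-crossing for small (resp. large) $|\bk|$ follows because the Puiseux expansions attached to distinct zeros (resp. poles), and the $m$ expansions attached to a single one (distinct factors $\rme^{2\rmi\pi\ell/m}$), have distinct leading terms, so the branches separate.

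\textbf{Main obstacle.} I expect the hard part to be the strict inequality: passivity alone yields only $\operatorname{Im}\leq 0$, and upgrading to strict negativity requires the careful sign bookkeeping on the real axis, which in particular must cover the weakly dissipative regimes (only electric or only magnetic damping) where one of $A,B$ genuinely vanishes. The endpoint multiplicity count and the globally consistent labeling are technically delicate but reduce to the Puiseux/implicit-function machinery already used for \eqref{eq.pole}.
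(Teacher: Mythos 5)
Your proposal is correct and follows essentially the same route as the paper's proof: existence of $2N+2$ continuous, piecewise analytic branches from the degree-$(2N+2)$ polynomial form with leading coefficient $\varepsilon_0\mu_0$, the bound $\operatorname{Im}\,\omega \leq 0$ imported from Theorem \ref{disp_pass}, exclusion of real roots via the explicit formulas for $\operatorname{Im}\big(\omega\,\varepsilon_\bal(\omega)\big)$ and $\operatorname{Im}\big(\omega\,\mu_\bal(\omega)\big)$ (the paper's \eqref{eq.positvity}) combined with the same sign bookkeeping on $\operatorname{Re}\,\mathcal{D}_\bal$ and $\operatorname{Im}\,\mathcal{D}_\bal$, and implicit-function/Puiseux expansions near zeros and poles for the limits \eqref{limits_omega_n_alpha}, the multiplicity count, and non-crossing. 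The only differences are organizational and harmless: you handle the case where exactly one of the two imaginary parts vanishes by computing $\operatorname{Im}\,\mathcal{D}_\bal$ directly rather than via the paper's identity (b), and you explicitly rule out $\omega=0$ (which the paper leaves implicit); conversely you should state, as the paper does, that a real solution cannot be a resonance, since your formulas for $A$ and $B$ are only defined away from the real poles of $\varepsilon_\bal$ and $\mu_\bal$.
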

	\begin{proof}
Let $\bal$ be fixed.  The rational dispersion relation \eqref{relationdispersionDiss} is equivalent to a polynomial equation $P^{\bal}_ {|\bk|}(\omega)=0$  (parametrized by $|\bk|$) where  $P^{\bal}_{|\bk|}$  is a polynomial  of degree $2N+2$  whose leading  coefficient  $\varepsilon_0 \, \mu_0>0$  is independent of $|\bk|$. 
Thus, the existence of $N$ continuous   and piecewise analytic functions $\omega_n^\bal$ for $n\in \{ 1, \ldots, 2N+2\}$ is  a consequence of Theorem 4.1.14  and Proposition 4.1.19  of \cite{Hen-86}.\\[6pt]
\noindent The limits \eqref{limits_omega_n_alpha} and the asymptotic properties of $ \omega_{n}^\bal$  for $|\bk|$ large enough or $|\bk|>0$  small enough  are obtained   as in the non-dissipative case by  the implicit function theorems, see Propositions 4.1 and 5.1  of \cite{cas-jol-ros-22-bis} for the technical details. \\[6pt]
	Concerning \eqref{prop_dissip}, we already know by Theorem \ref{disp_pass} that all solutions $\omega$ of  \eqref{relationdispersionDiss} for ${\bf k} \neq 0$ satisfy $\operatorname{Im}  \, \omega \leq 0$. It remains to eliminate the possibility of  a real solution $\omega$. If by contradiction, such a  real solution $\omega$ exists, it cannot be a magnetic or electric resonance since these are points where ${\cal D}_\bal(\omega)$ blows up.  For any other real  solution of \eqref{relationdispersionDiss}, we have, as in  the proof of Theorem \ref{disp_pass} 
		$$
		\begin{array}{l}
			(a) \quad		\operatorname{Re}\big(\omega \, \varepsilon_\bal(\omega)\big) \, \operatorname{Re} \big(\omega \, \mu_\bal(\omega)\big) = |{\bf k}|^2 + \operatorname{Im}  \big(\omega_\bal \, \varepsilon_\bal(\omega)\big) \, \operatorname{Im} \big(\omega \, \mu_\bal(\omega)\big) \\[12pt]
			(b) \quad		\operatorname{Re} \big(\omega \, \varepsilon_\bal(\omega)\big) \, \operatorname{Im}  \big(\omega \, \mu_\bal(\omega)\big) + \operatorname{Im}  \big(\omega \, \varepsilon_\bal(\omega)\big) \, \operatorname{Re} \big(\omega \, \mu_\bal(\omega)\big) = 0
		\end{array}
		$$
 As $\omega\in \bbR$, one has  $\operatorname{Im}  \big(\omega_\bal \, \varepsilon_\bal(\omega)\big) \geq 0$ and $\operatorname{Im}  \big(\omega_\bal \, \mu_\bal(\omega)\big) \geq 0$ and one deduces from (a) that $\operatorname{Re} \big(\omega \, \varepsilon_\bal(\omega)\big)$ and $\operatorname{Re} \big(\omega \, \mu_\bal(\omega)\big)$ have the same sign (and do not vanish). To obtain a contradiction,  as for Theorem \ref{disp_pass}, we simply have to check that  one has   $\operatorname{Im}  \big(\omega \, \varepsilon_\bal(\omega)\big)$ or $\operatorname{Im}  \big(\omega \, \mu(\omega)\big) > 0$. Indeed, from  (b), $\operatorname{Im}  \big(\omega \, \varepsilon_\bal(\omega)\big) > 0$ 
 implies $\operatorname{Im}  \big(\omega \, \mu_\bal(\omega)\big) > 0$, and reciprocally. To conclude,  we compute (using that $\omega\in \bbR$ is not a magnetic or electric resonance):
		\begin{equation}\label{eq.positvity}
		\left\{	\; \begin{array}{lll} 
			\ds \operatorname{Im}  \big(\omega \, \varepsilon_\bal(\omega)\big)=\varepsilon_0 \, |\omega|^2\,  \sum_{j=1}^{N_e} \alpha_{e,j} \,  \frac{\Omega^2_{e,j} } {|q_{e,j}(\omega)|^2},\\ [18pt]
			\ds \operatorname{Im}  \big(\omega \, \mu_\bal(\omega)\big)=\mu_0 \, |\omega|^2\,  \sum_{\ell=1}^{N_m} \,\alpha_{m,\ell} \, \frac{\Omega^2_{m,\ell} } {|q_{m,\ell}(\omega)|^2},
			\end{array}  \right.
		\end{equation}
	which shows, since $\bal \neq 0$,  that $\operatorname{Im}  \big(\omega \, \varepsilon_\bal(\omega)\big)> 0$ or $\operatorname{Im}  \big(\omega \, \mu_\bal(\omega)\big) > 0$.	
	\end{proof}
\noindent Due to the strict inequality $\operatorname{Im}  \, \omega_{n}^\bal(|\bk|) < 0$, for $\bk \neq 0$, the amplitude of the corresponding plane wave decays exponentially when the time $t$ goes to infinity, see \eqref{planewave2}. However, the rate of decay degenerates,  that is to say $\operatorname{Im}  \, \omega_{n}^\bal(|\bk|) \rightarrow  0$, in the  following limit cases:
\begin{itemize} 
	\item [(i)] when $|\bf k| \rightarrow + \infty$ for $n =2 N+1, \, 2N+2$ or  when $p_n^\bal  \in \R$, 
	\item [(ii)] when $|{\bf k}| \rightarrow 0$ for $   z_n^\bal \in \R$.
	\end{itemize}
For treating the case $|\bf k| \rightarrow + \infty$, we introduce the set of indices
\begin{equation} \label{defNinfty}
	{\cal N}^\bal_\infty : =\, \Big\{ n \in \{1, \cdots, 2N\} \, / \, p_n^\bal \in \R\Big\},
	\end{equation}
	 so that  $\{p_n^\bal ,  n \in 	{\cal N}^\bal_\infty  \}$ is a rearrangement of the set or resonances $\text{R}_{\text e} \cup \text{R}_{\text m}$ (see definition \ref{resonances}). We shall also introduce the subset ${\cal N}^\bal_r$ of ${\cal N}^\bal_\infty$ defined by 
\begin{equation} \label{defNr} 
	{\cal N}^\bal_s: =\, \Big\{ n \in \{1, \cdots, 2N\} \, / \, p_n^\bal \in \text{R}_{\text e}^s \cup \text{R}_{\text m}^s \Big\},
\end{equation}
where $\text{R}_{\text e}^s$ and $\text{R}_{\text m}^s$ are defined in \eqref{defRe} and  \eqref{defRm}. 
\\ [12pt]
\noindent	For the sequel, we denote $c>0$ the speed of light characterized by $\varepsilon_0 \, \mu_0 \, c^2 = 1.$ The next three Lemmas \ref{thm_dispersion_asymptotics_1} to \ref{lem_asymptoticsHF1bis} provide the asymptotic expansions of $\operatorname{Im}  \, 	\omega^\bal_{n}(|\bk|)$ when $|\bk| \rightarrow + \infty$ and when it tends to 0, i. e. for $n \in {\cal N}^\bal_r \cup \{2N+1, 2N+2\}$. The proof of these lemmas is essentially computational and we refer the reader to \cite{cas-jol-ros-22-bis}, and more precisely to Lemmas 4.4, Lemma 4.7 and  its corollary  4.7 and Lemma 4.11, for the details.
\begin{Lem} \label{thm_dispersion_asymptotics_1} $[$ High frequency asymptotic  of $\operatorname{Im}  \, \omega^\bal_{n}(|\bk|) \,] $  for $n=2N+1, \, 2N+2$ and $|\bk|\to +\infty$,
	\begin{equation}\label{eq.asymptinf}
			\operatorname{Im}  \, \omega^\bal_{n}(|\bk|)=-  \, \frac{A_\infty}{2\, c^2|\bk|^2}\, + O(|\bk|^{-4}), \mbox{ with } A_{\infty}= \sum_{j=1}^{N_e} \alpha_{e,j}\, \Omega_{e,j}^2 + \sum_{\ell=1}^{N_m} \alpha_{m,\ell}\,\Omega_{m,\ell}^2 > 0.
		\end{equation}
		\end{Lem}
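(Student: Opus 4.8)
The plan is to read the result off an asymptotic expansion of $\mathcal{D}_\bal(\omega)$ at large $|\omega|$, combined with the parity structure forced by the reality principle $(\mathbf{RP})$; this parity is precisely what upgrades the error from $O(|\bk|^{-3})$ to $O(|\bk|^{-4})$. First I would expand each factor of \eqref{DissipativeLorentzlaws}. Writing the denominators through the polynomials $q_{e,j}$ of \eqref{eq.polynom}, one has $\varepsilon_\bal(\omega)/\varepsilon_0 = 1 - \sum_j \Omega_{e,j}^2/q_{e,j}(\omega)$, and expanding $1/q_{e,j}(\omega)$ in powers of $1/\omega$ gives
$$\frac{\varepsilon_\bal(\omega)}{\varepsilon_0} = 1 - \frac{B_e}{\omega^2} + \frac{\rmi A_e}{\omega^3} + O(\omega^{-4}), \qquad B_e := \sum_j \Omega_{e,j}^2, \quad A_e := \sum_j \alpha_{e,j}\Omega_{e,j}^2,$$
and likewise for $\mu_\bal$. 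The crucial qualitative point, which follows from $(\mathbf{RP})$ (i.e. $\overline{\varepsilon_\bal(-\overline\omega)}=\varepsilon_\bal(\omega)$), is that in the full asymptotic series the coefficients of the even powers $\omega^{-2k}$ are real while those of the odd powers $\omega^{-(2k+1)}$ are purely imaginary. Multiplying the two expansions and the factor $\omega^2$, and using $c^2\varepsilon_0\mu_0=1$, I obtain
$$F(\omega) := c^2\,\mathcal{D}_\bal(\omega) = \omega^2 - (B_e + B_m) + \frac{\rmi A_\infty}{\omega} + O(\omega^{-2}),$$
with $A_\infty = A_e + A_m$ as in \eqref{eq.asymptinf}, and $F$ inherits the same parity structure (real coefficients on even powers of $\omega$, imaginary on odd powers).

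Next I would solve the dispersion relation $F(\omega)=K$ with $K := c^2|\bk|^2$ near the two high-frequency branches. Setting $\epsilon := (c|\bk|)^{-1}$ and $\omega = \zeta/\epsilon$, the equation $F(\omega)=K$ becomes, after multiplication by $\epsilon^2$, an equation $\zeta^2 = 1 + (B_e+B_m)\epsilon^2 - \rmi A_\infty \epsilon^3/\zeta - \cdots$ whose two sides are jointly analytic in $(\zeta,\epsilon)$ near $(\pm 1,0)$. The analytic implicit function theorem then yields, for each of the branches $n=2N+1,2N+2$ whose existence is granted by Theorem \ref{thm_dispersion_diss}, an analytic solution $\zeta(\epsilon) = \zeta_0 + \zeta_2\epsilon^2 + \zeta_3\epsilon^3 + \cdots$ with $\zeta_0=\pm 1$; matching the order $\epsilon$ gives $2\zeta_0\zeta_1=0$, hence $\zeta_1=0$, and matching order $\epsilon^3$ gives $2\zeta_0\zeta_3 = -\rmi A_\infty/\zeta_0$, so that $\zeta_3 = -\rmi A_\infty/2$ in both cases.

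The final step is the parity bookkeeping. Feeding the even/odd coefficient structure of $F$ into the recursion for the $\zeta_n$, an easy induction shows $\zeta_n\in\R$ for $n$ even and $\zeta_n\in\rmi\R$ for $n$ odd. Returning to $\omega = \zeta(\epsilon)/\epsilon = \zeta_0\epsilon^{-1} + \zeta_2\epsilon + \zeta_3\epsilon^2 + \zeta_4\epsilon^3 + \cdots$ and taking imaginary parts, every real $\zeta_{2k}$ drops out, so $\operatorname{Im}\omega = \operatorname{Im}(\zeta_3)\,\epsilon^2 + \operatorname{Im}(\zeta_5)\,\epsilon^4 + \cdots = -\tfrac{1}{2}A_\infty\,\epsilon^2 + O(\epsilon^4)$. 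Since $\epsilon^2 = (c^2|\bk|^2)^{-1}$, this is exactly \eqref{eq.asymptinf}; note that the $O(\epsilon^3)=O(|\bk|^{-3})$ term carries the \emph{real} coefficient $\zeta_4$ and therefore contributes nothing to $\operatorname{Im}\omega$, which is precisely why the remainder is $O(|\bk|^{-4})$ rather than $O(|\bk|^{-3})$. Positivity $A_\infty>0$ is then immediate from $\bal\neq 0$ together with $\alpha_{e,j},\alpha_{m,\ell}\geq 0$ and $\Omega_{e,j}^2,\Omega_{m,\ell}^2>0$.

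I expect the main obstacle to be not the computation but the rigorous justification that the two relevant branches are exactly the $\zeta_0=\pm1$ solutions of the analytic equation, so that no high-frequency branch is missed or double-counted, together with the validity of the parity induction to all orders. Both are routine but must be stated carefully; the former is where one genuinely leans on the branch count already secured in Theorem \ref{thm_dispersion_diss} and on the symmetry $\overline{\mathcal{D}_\bal(-\overline\omega)}=\mathcal{D}_\bal(\omega)$, which exchanges the two branches and confirms that they share the same imaginary part.
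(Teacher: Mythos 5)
Your proof is correct: the expansion of $\mathcal{D}_\bal$ at infinity, the rescaling $\zeta=\epsilon\omega$ with $\epsilon=(c|\bk|)^{-1}$, the analytic implicit function theorem at $\zeta_0=\pm 1$ (non-degenerate since $\partial_\zeta(\zeta^2 G(\epsilon/\zeta))|_{(\pm1,0)}=\pm 2\neq 0$), and the coefficient extraction $\zeta_1=0$, $\zeta_2\in\R$, $\zeta_3=-\rmi A_\infty/2$ all check out, and the parity claim ($\zeta_n$ real for $n$ even, purely imaginary for $n$ odd) is indeed valid --- it follows from $(\mathbf{RP})$ combined with the invariance of the rescaled equation under $(\zeta,\epsilon)\to(-\zeta,-\epsilon)$ and the uniqueness part of the implicit function theorem, which forces $\zeta(\epsilon)=\overline{\zeta(-\overline\epsilon)}$ on each branch. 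This is essentially the approach the paper follows: it defers the ``essentially computational'' details to Lemma 4.4 and Propositions 4.1 and 5.1 of \cite{cas-jol-ros-22-bis}, which likewise rest on implicit-function-theorem expansions of the dispersion relation as $|\bk|\to+\infty$; your reality/parity bookkeeping is a clean way to secure the $O(|\bk|^{-4})$ remainder (rather than $O(|\bk|^{-3})$) without computing a further order explicitly.
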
 
\begin{Lem} \label{lem_asymptoticsHF1} $[$ High frequency asymptotics of $\operatorname{Im}  \, \omega_{n}^\bal(|\bk|), n \in {\cal N}^\bal_\infty$, generic case  $]$  
		If the medium is not weakly dissipative, for any $n \in {\cal N}^\bal_\infty$, for some $A_{n,2}^\infty > 0$ (see remark \ref{rem_value}) 
		\begin{equation}\label{eq.asymptps}
			\operatorname{Im}  \, 	\omega^\bal_{n}(|\bk|)= - \frac{A_{n,2}^\infty}{2\, c^2|\bk|^2}  + o(|\bk|^{-2}),\ \mbox{ as } |\bk|\to +\infty.
		\end{equation} 
\end{Lem}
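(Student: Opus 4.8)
The plan is to localize the analysis around the real limit point $p:=p_n^\bal\in\mathcal{P}_\bal\cap\R$ toward which the branch converges: by \eqref{limits_omega_n_alpha} we have $\omega_n^\bal(|\bk|)\to p$ as $|\bk|\to+\infty$, and since the real poles of $\mathcal{D}_\bal$ are exactly the resonances, $p$ is a resonance of multiplicity $m\in\{1,2\}$ (simple if $p$ comes from a single undamped electric or magnetic term, double if $p$ is a common electric--magnetic resonance). Writing $\omega=p+\delta$ with $\delta=\delta(|\bk|)\to 0$, I would insert the Laurent expansion $\mathcal{D}_\bal(p+\delta)=c_{-m}\delta^{-m}+c_{-m+1}\delta^{-m+1}+\cdots$ (with $c_{-m}\ne 0$) into the dispersion relation $\mathcal{D}_\bal(p+\delta)=|\bk|^2$. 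Inverting this relation for large $|\bk|$ --- rigorously through the analytic implicit function / Puiseux argument already used for \eqref{limits_omega_n_alpha} (Propositions 4.1 and 5.1 of \cite{cas-jol-ros-22-bis}) --- yields the leading behaviour $\delta\sim(c_{-m}/|\bk|^2)^{1/m}$.

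I would then extract the leading imaginary part, distinguishing the two pole orders. For a simple pole ($m=1$) one gets $\delta=c_{-1}/|\bk|^2+O(|\bk|^{-4})$, so $\operatorname{Im}\,\omega_n^\bal(|\bk|)=\operatorname{Im}(c_{-1})\,|\bk|^{-2}+o(|\bk|^{-2})$, where $c_{-1}$ is the residue of $\mathcal{D}_\bal$ at $p$. For a double pole ($m=2$, a common resonance $\omega_{e,j_0}=\omega_{m,\ell_0}$) the coefficient $c_{-2}$ is the product of the residues of $\omega\,\varepsilon_\bal$ and $\omega\,\mu_\bal$ at $p$, both real and negative, hence $c_{-2}>0$ and $\delta\sim\pm\sqrt{c_{-2}}\,|\bk|^{-1}$ is \emph{real} at leading order; carrying the expansion one order further shows that the first nonreal contribution is $\operatorname{Im}\,\delta=\tfrac12\operatorname{Im}(c_{-1})\,|\bk|^{-2}+o(|\bk|^{-2})$. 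In either case $\operatorname{Im}\,\omega_n^\bal(|\bk|)=\kappa\,|\bk|^{-2}+o(|\bk|^{-2})$ with $\kappa$ a real multiple of $\operatorname{Im}(c_{-1})$, and I would set $A_{n,2}^\infty:=-2c^2\kappa$ so as to match the stated form.

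The next step is to identify $\operatorname{Im}(c_{-1})$. Computing the residue reduces it to the imaginary part, evaluated at the real point $p$, of the factor that stays finite there: for a simple electric resonance it is proportional (with a negative factor from the residue of the electric term) to $\operatorname{Im}\,\mu_\bal(p)$, for a simple magnetic resonance to $\operatorname{Im}\,\varepsilon_\bal(p)$, and for a double pole to a negative combination of $\operatorname{Im}\,\varepsilon_\bal(p)$ and $\operatorname{Im}\,\mu_\bal(p)$. Using the explicit formulas \eqref{eq.positvity} --- legitimate because $p$, being a resonance, is not a root of any other $q_{e,j}$ or $q_{m,\ell}$ by Assumption \ref{ireducible2} --- each such imaginary part is a sum of nonnegative terms $\alpha_{e,j}\,\Omega_{e,j}^2/|q_{e,j}(p)|^2$ (resp. magnetic), so that $\kappa\le 0$ unconditionally.

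The main obstacle is the strict inequality $A_{n,2}^\infty>0$, i.e. showing that the relevant sum does not vanish, and this is precisely where the hypothesis that the medium is not weakly dissipative enters. If $p$ is a simple electric resonance, non-degeneracy requires at least one nonzero magnetic damping coefficient; the failure of this is exactly a weakly dissipative situation in the sense of Definition \ref{Critical_cases}, hence excluded. Symmetrically, a simple magnetic resonance requires some electric damping. For a double (common) resonance it is enough that $\bal\ne 0$ furnishes \emph{any} other damped oscillator, whose frequency differs from $p$ by Assumption \ref{ireducible2} and which therefore contributes a strictly positive term to the sum. A careful case-by-case bookkeeping, matching each degenerate configuration to the appropriate notion of (electric or magnetic) weak dissipativity, closes the argument; the underlying computations are those of Lemma 4.7 and its corollary in \cite{cas-jol-ros-22-bis}.
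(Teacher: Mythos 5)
Your proposal is correct and takes essentially the same route as the paper, which does not prove the lemma itself but defers to Lemmas 4.4, 4.7 and 4.11 of \cite{cas-jol-ros-22-bis}, where the argument is exactly the one you outline: Laurent expansion of ${\cal D}_{\bal}$ at the real pole $p_n^{\bal}$, inversion of ${\cal D}_{\bal}(\omega)=|\bk|^2$ for large $|\bk|$ (including the extra order needed in the double-pole case, where your coefficient $\tfrac12\operatorname{Im}(c_{-1})$ is the right one), and identification and positivity of the limiting coefficient via \eqref{eq.positvity} under the non-weak-dissipativity hypothesis, which reproduces the values of Remark \ref{rem_value}. Your matching of the degenerate configurations to weak dissipativity (simple electric resonance with no magnetic damping, and symmetrically) is also the intended one: it is consistent with Lemma \ref{lem_asymptoticsHF1bis}, even though Definition \ref{Critical_cases} as printed appears to interchange the roles of $\text{R}_{\text{e}}^s$ and $\text{R}_{\text{m}}^s$.
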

\begin{Lem}  \label{lem_asymptoticsHF1bis} $[$  High frequency asymptotics  of $\operatorname{Im}  \, \omega_{n}(|\bk|), n \in {\cal N}^\bal_s$, weakly dissipative case $]$  \\
Suppose that the medium is electrically weakly dissipative and $n \in {\cal N}^\bal_s$. 
 If $p_n^\bal \in {\text R}_{\text e}^s$, which is not empty, see definition \ref{planewave2} (ii), then, for some $A_{n,4}^\infty> 0$,	\begin{equation}\label{eq.asymptpsbis}
\operatorname{Im}  \, 	\omega_{n}^\bal(|\bk|)= - \frac{A_{n,4}^\infty}{2\, c^4|\bk|^4}  + o(|\bk|^{-4}),\ \mbox{ as } |\bk|\to +\infty.
	\end{equation}
If $p_n^\bal \notin {\text R}_{\text e}^s$, the asymptotic expansion \eqref{eq.asymptps} still holds,
If the medium is magnetically weakly dissipative, the same results hold with ${\text R}_{\text m}$ instead of ${\text R}_{\text e}$.
\end{Lem}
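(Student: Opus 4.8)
The plan is to localise the analysis near the limiting pole $p := p_n^\bal$ and to recast the dispersion relation as an analytic inversion problem. By Theorem \ref{thm_dispersion_diss} the branch $\omega_n^\bal(|\bk|)$ converges to the real pole $p \in \calP_\bal$ as $|\bk| \to +\infty$, and since $p$ is a \emph{simple} pole of ${\cal D}_\bal$ the function $G(\omega) := {\cal D}_\bal(\omega)^{-1}$ is analytic near $p$ with $G(p)=0$ and $G'(p)\neq 0$. Rewriting \eqref{relationdispersionDiss} as $G(\omega)=|\bk|^{-2}$ and invoking the analytic implicit function theorem (exactly as in Step 2 of the proof of Theorem \ref{thm_dispersion}), I obtain a unique analytic solution $\omega = p + \Phi(|\bk|^{-2})$ with $\Phi$ analytic near $0$, $\Phi(0)=0$; its Taylor expansion in the variable $|\bk|^{-2}$ provides the asymptotic expansion with a genuinely controlled remainder.

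First I would write the Laurent expansion ${\cal D}_\bal(\omega) = C\,(\omega-p)^{-1} + D + O(\omega-p)$ near $p$, so that the inversion reads
\[
\omega_n^\bal(|\bk|) - p = C\,|\bk|^{-2} + C D\,|\bk|^{-4} + o(|\bk|^{-4}), \qquad |\bk|\to+\infty .
\]
The decisive point is the nature of the leading coefficient $C$. Since the medium is electrically weakly dissipative it is m.n.d, whence $\mu_\bal \equiv \mu$ is real on $\R$; and $p \in \text{R}_{\text e}^s$ means $p=\omega_{e,j}$ with $\alpha_{e,j}=0$ and $\omega_{e,j}\notin\{\omega_{m,\ell}\}$, so $\varepsilon_\bal$ has a simple pole at $p$ with the real residue $-\varepsilon_0\Omega_{e,j}^2/(2p)$ while $\mu_\bal(p)$ is finite and real. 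Consequently $C = -\tfrac12\,\varepsilon_0\Omega_{e,j}^2\,p\,\mu_\bal(p) \in \R$. This is precisely what fails in the generic situation of Lemma \ref{lem_asymptoticsHF1}, where the factor multiplying the pole is complex and $C\notin\R$.

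Because $C$ is real, the $|\bk|^{-2}$ term of the expansion is real and the imaginary part is pushed to the next order:
\[
\operatorname{Im}\,\omega_n^\bal(|\bk|) = C\,\operatorname{Im}(D)\,|\bk|^{-4} + o(|\bk|^{-4}).
\]
It then remains to compute $\operatorname{Im}(D)$ and fix its sign. Collecting the regular part of ${\cal D}_\bal=\omega^2\varepsilon_\bal\mu_\bal$ at $p$ and using once more that $\omega^2\mu_\bal$ is real there, one finds $\operatorname{Im}(D)=p^2\mu_\bal(p)\,\operatorname{Im}(\varepsilon_\bal(p))$, while the positivity identity \eqref{eq.positvity} gives $\operatorname{Im}(\varepsilon_\bal(p)) = \varepsilon_0\,p\sum_{\alpha_{e,j'}\neq 0}\alpha_{e,j'}\Omega_{e,j'}^2/|q_{e,j'}(p)|^2 > 0$, the sum being nonzero exactly because some electric oscillator is dissipative. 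Hence $C\,\operatorname{Im}(D) = -\tfrac12\,\varepsilon_0\Omega_{e,j}^2\,p^3\,\mu_\bal(p)^2\,\operatorname{Im}(\varepsilon_\bal(p)) < 0$, which yields \eqref{eq.asymptpsbis} with $A_{n,4}^\infty = -2c^4\,C\,\operatorname{Im}(D) = c^4\,\varepsilon_0\Omega_{e,j}^2\,p^3\,\mu_\bal(p)^2\,\operatorname{Im}(\varepsilon_\bal(p)) > 0$. If instead $p_n^\bal\notin\text{R}_{\text e}^s$, then $C\notin\R$, the leading term already carries a nonzero imaginary part, and $\operatorname{Im}\,\omega_n^\bal(|\bk|)=\operatorname{Im}(C)\,|\bk|^{-2}+o(|\bk|^{-2})$ reproduces \eqref{eq.asymptps}. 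The magnetically weakly dissipative case is identical after exchanging $(\varepsilon,e)$ and $(\mu,m)$ and replacing $\text{R}_{\text e}^s$ by $\text{R}_{\text m}^s$.

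The genuinely delicate part is not the structure but the bookkeeping: one must verify that the analytic inversion is legitimate, i.e.\ that $\Phi$ is analytic and the truncation error is really $o(|\bk|^{-4})$ and not merely $O(|\bk|^{-4})$, and one must track the Laurent coefficients $C$ and $D$ carefully enough to extract both the exact constant $A_{n,4}^\infty$ and, crucially, its sign. The whole mechanism hinges on the single algebraic fact that in the m.n.d case $\mu_\bal(p)$ is real, so that $C\in\R$; this is what demotes the imaginary part from order $|\bk|^{-2}$ to order $|\bk|^{-4}$, and it is the only place where the weak-dissipation hypothesis is used.
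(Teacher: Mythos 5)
Your argument is correct and complete in its essentials: localizing the dispersion relation ${\cal D}_\bal(\omega)=|\bk|^2$ at the simple real pole $p_n^\bal$ via the analytic inversion of ${\cal D}_\bal^{-1}$, and observing that the residue $C$ is real exactly because $\mu_\bal$ is real on the real axis in the m.n.d.\ case (so that the imaginary part is demoted to the $|\bk|^{-4}$ term, carried by $\operatorname{Im}(D)=p^2\mu_\bal(p)\operatorname{Im}\varepsilon_\bal(p)>0$ coming from the damped electric oscillators via \eqref{eq.positvity}), is precisely the mechanism that produces \eqref{eq.asymptpsbis}, and your constant $A_{n,4}^\infty$ and its positivity (including for $p_n^\bal<0$, since $p^3\operatorname{Im}\varepsilon_\bal(p)$ is positive for both signs) come out correctly, as does the reversion to \eqref{eq.asymptps} when $C\notin\R$. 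The chapter itself gives no proof of this lemma --- it defers to Lemmas 4.4, 4.7 and 4.11 of \cite{cas-jol-ros-22-bis}, which rest on the same implicit-function-theorem expansion near the real poles that you use (and that the paper employs in \eqref{eq.pole} and Theorem \ref{thm_dispersion_diss}) --- so your proposal follows essentially the paper's intended route.
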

\begin{Rem} \label{rem_value} $[$Value of $A_{n,2}^\infty$ $]$.  If the medium is not weakly dissipative, then  for each $n \in {\cal N}^\bal_\infty$, since $\{p_n^\bal ,  n \in 	{\cal N}^\bal_\infty \} = \text{R}_{\text e} \cup \text{R}_{\text m}$,
\begin{itemize}
\item Either $p_n= \pm \omega_{e,j}\in  {\text R}_{\text e}^s , \, \mbox{ and }  \displaystyle  A_{n,2}^\infty= \mbox{$\frac{1}{2}$} \, \varepsilon_0  \, \,\operatorname{Im}  \big(\omega_{e,j} \, \mu(\omega_{e,j})\big)  \,  \Omega_{e,j}^2 > 0$,
\item  Either  $p_n= \pm \omega_{m,\ell} \in {\text R}_{\text m}^s , \,  \mbox{ and }  \displaystyle  A_{n,2}^\infty= \mbox{$\frac{1}{2}$} \, \mu_0 \, \,\operatorname{Im}  \big(\omega_{\ell,m} \, \varepsilon(\omega_{m,\ell})\big)  \, \Omega_{m,\ell}^2 > 0,  $
\item	 Or $ p_n=\pm \omega_{e,j_0}=\pm \omega_{m,\ell_0} \in {\text R}_{\text e}^d={\text R}_{\text m}^d$ and  $$\displaystyle  A_{n,2}^\infty= p_n^2\,  \bigg( \frac{ \Omega_{m,\ell_0}^2}{2}  \sum_{j=1, j\neq j_0}^{N_e} \frac{\Omega_{e,j}^2 \alpha_{e,j} }{ |q_{e,j}(p_n)|^2}  + \frac{ \Omega_{e,j_0}^2}{2}   \sum_{\ell=1, \ell\neq \ell_0}^{N_m} \frac{\Omega_{m,\ell}^2 \alpha_{m,\ell} }{ |q_{m,\ell}(p_n)|^2} \bigg) > 0.  $$
\end{itemize}
\end{Rem}
\noindent
For treating the case $|\bf k| \rightarrow 0$, we introduce the set of indices
\begin{equation} \label{setsof incices} 
	{\cal N}^\bal_0: \, \Big\{ n \in \{1, \cdots, 2N+2\} \, / \, z_n^\bal \in \R\Big\},
\end{equation}
Note that, by the Assumption \ref{ireduciblebis},  ${\cal N}^\bal_0$ contains always at least one element since $0\in {\cal Z}_\bal \cap \bbR$ is real zero  of multiplicity two of $\mathcal{D}_{\alpha}$, and is reduced to a singleton   when the medium is electrically or magnetically non dissipative (cf. definition \ref{Resonances}). Indeed,  if the  medium is electrically  dissipative $ {\cal Z}_\bal \cap   \bbR=\{ 0\}  \cup \calZ_e^{\bal}$  (where we recall that $\calZ_e^{\bal}$ is the set of zeros of $ \varepsilon(\cdot)$) and all elements  of  $\calZ_e^{\bal}$ are simple zeros of $\mathcal{D}_{\alpha}$ (and in particular, they do not belong  to $\calZ_m^{\bal}$). The same properties hold for magnetically dissipative media with obvious changes.
\begin{Lem} [Low frequency asymptotics  of the dispersion relation]  \label{lem_asymptoticsHF2} 
	For $n \in {\cal N}^\bal_0$, 
	\begin{equation}\label{eq.asymptz0}
	\operatorname{Im}  \, 	\omega_{n}^\bal(|\bk|) = -A_{n,2}^0 \, c^2  \,  |\bk|^{2}    + o(|\bk|^{2}),\ \mbox{ as } |\bk|\to 0,
	\end{equation}
	where \begin{itemize}
	\item Either  $z_n=0$ and
	$ \ds
	A_{n,2}^0 =  \frac{\varepsilon_0^2\,  c^2}{2}\, \sum_{j=1}^{N_e} \frac{ \alpha_{e,j}\, \Omega_{e,j}^2 }{\omega_{e,j}^4} + \frac{\mu_0 \, c^2}{2}\, \sum_{\ell=1}^{N_m} \frac{  \alpha_{m,\ell}\, \Omega_{m,\ell}^2}{\omega_{m,\ell}^4} >0,
	$
\item  Either  	$z_n\in {\cal N}^\bal_0$ and $z_n\in \calZ_e^{\bal}$ (i.e. it is a real zero of $ \varepsilon(\cdot)$) and $\displaystyle  A_{n,2}^0 =\frac{(\omega \varepsilon)'(z)^{-1} }{z \mu(z)}>0$,
\item  Either  	$z_n\in {\cal N}^\bal_0$ and $z_n\in \calZ_m^{\bal}$ (i.e. it is a real zero of $ \mu(\cdot)$) and $\displaystyle A_{n,2}^0 =\frac{(\omega \mu)'(z)^{-1} }{z \varepsilon(z)}>0$.
\end{itemize}
\end{Lem}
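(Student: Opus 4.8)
The plan is to localize the dispersion relation \eqref{relationdispersionDiss} near each real zero and invert it. By Theorem \ref{thm_dispersion_diss}, for $n\in{\cal N}^\bal_0$ the branch $\omega_n^\bal(|\bk|)$ tends to the real zero $z:=z_n^\bal\in\calZ_\bal\cap\R$ as $|\bk|\to 0$; since $|\bk|^2={\cal D}_\bal\big(\omega_n^\bal(|\bk|)\big)\to 0$, I would treat $|\bk|^2$ as the small parameter and expand ${\cal D}_\bal$ about $z$. Two regimes must be separated according to the multiplicity of $z$ as a zero of ${\cal D}_\bal$.

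First I would handle the generic case of a simple zero, $z\in\calZ_e^\bal$ or $z\in\calZ_m^\bal$ (necessarily $z\neq 0$). Then ${\cal D}_\bal'(z)\neq 0$, and the analytic implicit function theorem solves ${\cal D}_\bal(\omega)=|\bk|^2$ for the unique root near $z$, giving $\omega_n^\bal(|\bk|)-z={\cal D}_\bal'(z)^{-1}\,|\bk|^2+O(|\bk|^4)$ and hence $\operatorname{Im}\,\omega_n^\bal(|\bk|)=|\bk|^2\,\operatorname{Im}\big({\cal D}_\bal'(z)^{-1}\big)+O(|\bk|^4)$. The coefficient is then evaluated by writing ${\cal D}_\bal=(\omega\,\varepsilon_\bal)\,(\omega\,\mu_\bal)$: if $z\in\calZ_e^\bal$ then $\varepsilon_\bal(z)=0$, the product rule collapses to ${\cal D}_\bal'(z)=(\omega\,\varepsilon_\bal)'(z)\,z\,\mu_\bal(z)$, and I use that a real zero of $\varepsilon_\bal$ forces (via \eqref{eq.positvity}) all $\alpha_{e,j}=0$, so that $(\omega\,\varepsilon_\bal)'(z)=(\omega\,\varepsilon)'(z)$ is real and strictly positive by Lemma \ref{lem_positivity}, while $\operatorname{Im}\big(z\,\mu_\bal(z)\big)>0$ again by \eqref{eq.positvity} (as $\bal\neq0$ precludes magnetic non-dissipativity). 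This makes $\operatorname{Im}\big({\cal D}_\bal'(z)^{-1}\big)<0$, identifies $A_{n,2}^0$, and yields $A_{n,2}^0>0$; the case $z\in\calZ_m^\bal$ is symmetric.

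The degenerate case $z=0$ is the delicate one: by Assumption \ref{ireduciblebis} it is a double zero of ${\cal D}_\bal$, so ${\cal D}_\bal'(0)=0$ and the linear inversion above breaks down. I would instead factor ${\cal D}_\bal(\omega)=\omega^2\,g(\omega)$ with $g:=\varepsilon_\bal\,\mu_\bal$ and $g(0)=\varepsilon_\bal(0)\,\mu_\bal(0)>0$, then introduce $h(\omega):=\omega\,\sqrt{g(\omega)}$, an analytic invertible germ at $0$ since $h'(0)=\sqrt{g(0)}\neq0$, so that the two branches are $\omega=h^{-1}(\pm|\bk|)$. Expanding $h$ and inverting gives $\omega=\pm|\bk|/\sqrt{g(0)}-\tfrac{g'(0)}{2\,g(0)^2}\,|\bk|^2+\cdots$, where crucially both signs produce the same quadratic term. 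The key computation is that $\varepsilon_\bal'(0)$ and $\mu_\bal'(0)$ are purely imaginary with positive imaginary part, each proportional to $\sum \alpha\,\Omega^2/\omega^4$; hence $g'(0)$ is purely imaginary with $\operatorname{Im} g'(0)>0$, and $\operatorname{Im}\,\omega=-\tfrac{1}{2}\,g(0)^{-2}\,\operatorname{Im} g'(0)\,|\bk|^2+o(|\bk|^2)$, which is negative and furnishes the stated $A_{n,2}^0$.

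The main obstacle is the rigorous bookkeeping in the $z=0$ regime: one must push the expansion to the precise order at which the imaginary part first appears (quadratic in $|\bk|$), justify the square-root inversion by the implicit function theorem applied to $h$ (or by a Puiseux argument), and check that the two branches emanating from $0$ carry the same negative quadratic coefficient. A secondary, purely computational point is confirming $A_{n,2}^0>0$ uniformly in all three sub-cases, which in each case reduces to the positivity of the damping sums (guaranteed by $\bal\neq0$) together with the Herglotz monotonicity $(\omega\,\varepsilon)'>0$ and $(\omega\,\mu)'>0$ of Lemma \ref{lem_positivity}.
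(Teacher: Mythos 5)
Your proposal is correct and follows essentially the same route as the paper's proof (which the chapter itself does not spell out, deferring to \cite{cas-jol-ros-22-bis}): namely, local inversion of ${\cal D}_\bal(\omega)=|\bk|^2$ near each real zero, via the analytic implicit function theorem at the simple zeros $z\in\calZ_e^{\bal}\cup\calZ_m^{\bal}$ --- where your key observations, that a real zero of $\varepsilon_\bal$ forces all $\alpha_{e,j}=0$ by \eqref{eq.positvity} so that $(\omega\varepsilon)'(z)>0$ by Lemma \ref{lem_positivity} while $\operatorname{Im}\big(z\,\mu_\bal(z)\big)>0$ since $\bal\neq 0$, give exactly the right sign --- and via the square-root factorization ${\cal D}_\bal(\omega)=\big(\omega\sqrt{\varepsilon_\bal\mu_\bal}\,\big)^2$ with both branches $h^{-1}(\pm|\bk|)$ sharing the same quadratic term at the double zero $z=0$. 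One reassurance: your computed coefficient $\operatorname{Im} g'(0)/\big(2\,g(0)^2\big)$ with $g(0)=\varepsilon_\bal(0)\,\mu_\bal(0)$ is the correct one (it agrees with the companion article), and its mismatch with the constant displayed in the chapter's statement reflects typographical inconsistencies there (the prefactors $\varepsilon_0^2 c^2$ and $\mu_0 c^2$ and the missing factors $\varepsilon_\bal(0)$, $\mu_\bal(0)$), not a flaw in your argument.
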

\subsection{Long time behaviour of solutions: the result} \label{sec_decay} 
With the help of the Fourier transform in space $${\cal F}_x : u({\bf x}) \in L^2(\R^3) \rightarrow \widehat u({\bf k}) \in L^2(\R^3),$$
we can define the functional space  
	\begin{equation} \label{delLcalp}
		{\cal  L}_p (\R^3) =  \big \{ v \in \mathcal{S}'(\R^3) \; /  \; |\bk|^{-p} (1+|\bk|^{p}) \; \widehat v \in L^\infty(\R^3) \big \}
	\end{equation} 
which is a Banach space when  equipped  with the norm
	\begin{equation} \label{normcalp}
		\|v\|_{{\cal  L}_p} := \big\| \,  |\bk|^{-p} (1+|\bk|^{p}) \; \widehat v  \, \big\|_{L^\infty(\R^3)} .
	\end{equation}

\begin{Thm} \label{thm_Lorentz} Under the Assumptions \ref{ireducible}, \ref{ireduciblebis} and \ref{ireducible2}. The electromagnetic energy ${\cal E}(t)$ associated to the evolution system \eqref{DissipativeLorentzsystem} with intial conditions  \eqref{CI} tends to $0$ when $t \rightarrow +\infty$. Moreover, under the additional assumption
	$$
	(\bE_0, \bH_0)  \in H^s(\R^3)^3 \cap   {\cal L}_{p}(\R^3)^3 \times H^s(\R^3)^3 \cap   {\cal L}_{p} (\R^3)^3,
	$$
one has the  follwing decay estimates: 
\begin{itemize}
\item if \eqref{DissipativeLorentzlaws} is not weakly dissipative (the generic case), then
		\begin{equation} \label{polynomial_decayncr}
		{\cal E}(t) \leq C_s \; \frac{\|\bE_0 \|_{H^s}^2 + \|\bH_0 \|_{H^s}^2}{t^m} + C_p \,  \frac{ \|\bE_0 \|_{ \boldsymbol{\cal L}_{p}^N}^2 +  \|\bH_0 \|_{ \boldsymbol{\cal L}_{p}^N}^2}{t^{p+\frac{3}{2}}},  \quad  \forall \; t>0,
		\end{equation}
\item  if \eqref{DissipativeLorentzlaws} is weakly dissipative, then 
		\begin{equation} \label{polynomial_decaycr}
		{\cal E}(t)  \leq C_s \; \frac{\|\bE_0 \|_{H^s}^2 + \|\bH_0 \|_{H^s}^2}{t^\frac{s}{2}} + C_p \,  \frac{ \|\bE_0 \|_{ \boldsymbol{\cal L}_{p}^N}^2 +  \|\bH_0 \|_{ \boldsymbol{\cal L}_{p}^N}^2}{t^{p+\frac{3}{2}}},  \quad  \forall \; t>0.
		\end{equation}	
		\end{itemize}
\end{Thm}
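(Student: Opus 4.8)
\noindent The plan is to pass to the spatial Fourier variable and reduce the dissipative evolution to a $\bk$-parametrized family of finite-dimensional problems, then read off the decay of ${\cal E}(t)$ from the imaginary parts of the dispersion branches $\omega_n^\bal$. Applying ${\cal F}_x$ to \eqref{DissipativeLorentzsystem} turns $\bbA_\bal$ into a matrix symbol $\bbA_\bal(\bk)$ acting on the finite-dimensional fibre of field components, so that $\widehat\bU(\bk,t) = \rme^{-\rmi\bbA_\bal(\bk)t}\,\widehat\bU_0(\bk)$. The divergence-free initial conditions \eqref{CI} are preserved and confine the dynamics to the transverse sector in which, by the polarization relations \eqref{relationpolar}, the eigenvalues of $\bbA_\bal(\bk)$ are exactly the $2N+2$ roots $\omega_n^\bal(|\bk|)$ of the dispersion relation \eqref{relationdispersionDiss} (each carried by the two transverse polarizations), the longitudinal/electrostatic modes being left unexcited. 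By Plancherel, ${\cal E}(t) = \frac{1}{2}\int_{\R^3}\big(\varepsilon_0|\widehat\bE(\bk,t)|^2 + \mu_0|\widehat\bH(\bk,t)|^2\big)\,\md\bk$, and since $(\widehat\bE,\widehat\bH)$ are components of $\rme^{-\rmi\bbA_\bal(\bk)t}\widehat\bU_0(\bk)$, I would bound ${\cal E}(t) \leq C\int_{\R^3}\|\rme^{-\rmi\bbA_\bal(\bk)t}\widehat\bU_0(\bk)\|^2\,\md\bk$.

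First I would settle the qualitative statement ${\cal E}(t)\to 0$, which needs no regularity of the data. For each fixed $\bk\neq 0$ the fibre energy is non-increasing by \eqref{Dissipationenergy}, so the integrand is dominated by the $t$-independent integrable function $\|\widehat\bU_0(\bk)\|^2$; moreover Theorem \ref{thm_dispersion_diss} gives $\operatorname{Im}\,\omega_n^\bal(|\bk|) < 0$, whence the modal factors $\rme^{2\operatorname{Im}\omega_n^\bal(|\bk|)t}$ force the integrand to $0$ pointwise a.e. Dominated convergence then yields ${\cal E}(t)\to 0$.

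For the quantitative rates I would split the $\bk$-integral into a low-frequency zone $|\bk|<\rho_0$, a high-frequency zone $|\bk|>\rho_1$, and an intermediate compact shell. On the shell the branches stay uniformly away from the real axis, so $\|\rme^{-\rmi\bbA_\bal(\bk)t}\|\leq C(1+t)^{\kappa}\rme^{-\delta_0 t}$ with $\delta_0>0$, an exponentially small contribution absorbed by the polynomial terms. In the low-frequency zone Lemma \ref{lem_asymptoticsHF2} gives $\operatorname{Im}\,\omega_n^\bal(|\bk|)\sim -A_{n,2}^0 c^2|\bk|^2$; bounding $\|\widehat\bU_0(\bk)\|\lesssim|\bk|^p$ near the origin through the ${\cal L}_p$-norm \eqref{delLcalp} and integrating $\int_{|\bk|<\rho_0}|\bk|^{2p}\rme^{-c'|\bk|^2 t}\,\md\bk$ in spherical coordinates (substitution $|\bk|=r/\sqrt t$) produces the rate $t^{-(p+3/2)}$, i.e. the second term of both \eqref{polynomial_decayncr} and \eqref{polynomial_decaycr}. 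In the high-frequency zone the relevant asymptotics are $\operatorname{Im}\,\omega_n^\bal(|\bk|)\sim -c''|\bk|^{-2}$ when the medium is not weakly dissipative (Lemmas \ref{thm_dispersion_asymptotics_1} and \ref{lem_asymptoticsHF1}) and $\sim -c''|\bk|^{-4}$ for the critical branches $n\in{\cal N}^\bal_s$ in the weakly dissipative case (Lemma \ref{lem_asymptoticsHF1bis}); writing $\|\widehat\bU_0\|^2 = |\bk|^{-2s}\big(|\bk|^{2s}\|\widehat\bU_0\|^2\big)$ and extracting $\sup_{|\bk|>\rho_1}|\bk|^{-2s}\rme^{-c''t/|\bk|^\beta}$, the one-variable optimisation (maximum at $|\bk|^\beta\sim t$) gives $t^{-2s/\beta}$ against $\|\bU_0\|_{H^s}^2$: this is the generic high-frequency exponent $t^{-m}$ of \eqref{polynomial_decayncr} when $\beta=2$ and the weaker exponent $t^{-s/2}$ of \eqref{polynomial_decaycr} when $\beta=4$, the slower $\beta=4$ branches dominating in the weakly dissipative case.

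The hard part will be the high-frequency regime combined with the non-self-adjointness of $\bbA_\bal$: I must control $\|\rme^{-\rmi\bbA_\bal(\bk)t}\|$ by $\rme^{\,t\max_n\operatorname{Im}\omega_n^\bal(|\bk|)}$ uniformly in $\bk$, whereas the spectral projectors of $\bbA_\bal(\bk)$ degenerate wherever two branches $\omega_n^\bal$ collide, producing spurious algebraic-in-$t$ growth. The saving fact, supplied by Theorem \ref{thm_dispersion_diss}, is that the branches do not cross for $|\bk|$ large or small, i.e. exactly in the two zones that fix the rates, while on the compact shell the algebraic factors are harmless against the exponential gap. The remaining delicate points are the uniformity of the expansions of $\operatorname{Im}\,\omega_n^\bal$ across all branches and, above all, the weakly dissipative case, where the degenerate $|\bk|^{-4}$ decay of the resonant branches ${\cal N}^\bal_s$ forces the weaker exponent $t^{-s/2}$ and must be disentangled carefully from the generic $|\bk|^{-2}$ branches.
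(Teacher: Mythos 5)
Your proposal is correct and follows essentially the same route as the paper: spatial Fourier transform restricted to the divergence-free (transverse) fibres, identification of the fibre spectrum with the $2N+2$ dispersion branches, Plancherel with the same low/middle/high frequency splitting, the same asymptotic Lemmas \ref{thm_dispersion_asymptotics_1}--\ref{lem_asymptoticsHF2} feeding the same one-variable optimisation ($t^{-2s/\beta}$ with $\beta=2$ or $4$) and the same $|\bk|^{p}$ bound from the ${\cal L}_p$-norm giving $t^{-(p+3/2)}$, together with the key difficulty you correctly isolate (non-normality of $\bbA_{\bal,\bk}$, resolved exactly as in the paper by non-crossing of branches in the relevant zones and uniform bounds on the Riesz spectral projectors). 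Your dominated-convergence argument for the qualitative statement ${\cal E}(t)\to 0$ is a small, valid addition that the paper leaves implicit.
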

\begin{Rem} [Comments on Theorem \ref{thm_Lorentz}]. \label{rem_comments}
\begin{itemize}
		\item In the theory of dynamical systems, estimates in inverse powers of $t$ such as \eqref{polynomial_decayncr} or \eqref{polynomial_decaycr} are called {\it polynimial stability}  by opposition to {\it exponential stability} when inverse powers of $t$  are replaced by decreasing exponentials. 
		\item This theorem corresponds to the Theorem 1.10 of the article \cite{cas-jol-ros-22-bis}. Similar results were obtained in  \cite{cas-jol-ros-22} by the same authors, using Lyapunov function methods. The convergence  ${\cal E}(t) \rightarrow 0$ can be seen in some particular cases as an application of an abstract result due to \cite{Figotin}, while polynomial stability results similar to \eqref{polynomial_decayncr} or \eqref{polynomial_decaycr} have been proven in \cite{Nicaise} under different assumptions that those of this chapter, using semi-group theory (see \cite{cas-jol-ros-22}, section 1.2, for more detailed bibliographical comments). 
	\item The decay estimates  \eqref{polynomial_decayncr} or \eqref{polynomial_decaycr} are sharp in the sense that similar lower bounds can be proven for adequate initial data, see  \cite{cas-jol-ros-22-bis}, theorem 1.14 and section 7.2.		
		
		\item In electromagnetism, a classical example of exponentially stable media is provided by conductive media, corresponding to the following local passive concstitutive laws
		$$
		\varepsilon(\omega) = \varepsilon_0 \, \Big( 1 + \frac{\sigma}{\rmi \, \omega} \Big), \quad \mu(\omega) = \mu_0,
		$$
		where $\sigma > 0$ is the conductivity of the medium. In that case, regardless the regularity of initial data,  the electromagnetic energy decays as $e^{-\sigma t}$. In comparison to Theorem \ref{thm_Lorentz},  this shows that dissipative Lorentz media are much less dissipative than conductive media. 

	\end{itemize}
\end{Rem}
	\begin{Rem} \label{remcalLp} $[$On the space ${\cal  L}_p (\R^3)$ $\!]$
	For $p \in \N$,  $	\|v\|_{{\cal  L}_p}$ is linked to the space 
	\begin{equation} \label{spacesL1weighted} 
		\displaystyle L^1_{p}(\R^3) := \big\{ v \in L^1(\R^3) \; / \; (1+|\bx|)^p \, v \in L^1(\R^3) \big\} \quad (\mbox{with } L^1_{0}(\R^3)= L^1(\R^3)),
	\end{equation}
	endowed with the norm
	$
	\|(1+|\bx|)^p \,v  \|_{L^1(\R^3)}.
	$
	Functions of $L^1_{p}(\R^3)$ have existing moments up to order $p$ which allows us to introduce (with $\alpha = (\alpha_1, \alpha_2, \alpha_3)\in \mathbb{N}^3$) 
	\begin{equation} \label{spacesL1weighted2} 
		\displaystyle L^1_{p,0}(\R^3) := \big\{ u \in L^1_{p}(\R^3) \; / \; \forall \; |\alpha| \leq p-1, \; \int \bx^\alpha \, u \, d{\bx} = 0  \big\}, \ \quad \mbox{(closed in $L^1_{p}(\R^3)$)}
	\end{equation}
	where $|\alpha| = \alpha_1+ \alpha_2+ \alpha_3$ and  $\bx^\alpha:= x_1^{\alpha_1}\, x_2^{\alpha_1}\, x_3^{\alpha_1}$. It can be proven (see \cite{cas-jol-ros-22-bis}, remark 1.7) that 
	\begin{equation} \label{inclusion} 
		\forall \; p \in \N, 	\quad \displaystyle L^1_{p,0}(\R^3) \subset {\cal  L}_p(\bbR^3)  \mbox{ with continuous injection}.
	\end{equation}
\end{Rem}	
\subsection{Long time behaviour of solutions : elements of proof }\label{sec_decay_proof}
{\bf Abstract form}. In the same framework as in section \ref{sec-abstract2}, we can rewrite 
\begin{equation}\label{eq.schro_diss}
	\frac{\md \, {\bf U}}{\md\, t} + \rmi\,\mathbb{A}_\bal  \, {\bf U}=0, \quad \mathbb{A}_\bal := \mathbb{A} - \rmi \, \mathbb{D}_\bal, 
\end{equation}
with unknown $ {\bf U} = \big( {\bf E}, {\bf H}, \bbP, \dot\bbP, \bbM, \dot\bbM\big) \in \boldsymbol{\cal H}$, where  $\mathbb{A}$ is the unbounded self-adjoint operator defined in section \ref{sec-abstract2}, see \eqref{eq.opAbis},  and $\mathbb{D}_\alpha$ the positive  bounded self-adjoint operator given by 
\begin{equation}\label{eq.opADalpha}
	\mathbb{D}_\bal := \, \begin{pmatrix}
		\quad 0 \quad &\quad 0 \quad & \quad 0 \quad& \quad 0 \quad & \quad 0 \quad & \quad 0 \quad \\[2pt]
	0 & 0 &0 & 0& 0&0 \\[2pt]
	0 & 0 &0  & 0 &0 &0\\[2pt]
0 & 0 & 0 & \mbox{diag} \; \alpha_{e,j} &0 &0\\[2pt]
	0 & 0 &0  & 0 &0 & 0 \\[2pt]
	0 &0& 0 & 0 & 0  & \mbox{diag} \; \alpha_{m,\ell}
\end{pmatrix}
\end{equation}
On its domain  $D(\bbA_{\bal})=D(\bbA)$,   $\mathbb{A}_\bal$  is  a  maximal dissipative operator. Thus, it generates a contraction semi-group of class $\mathcal{C}^0$ (see e.g. Theorem 4.3 page 14 of  \cite{Paz-83}). Hence, a Cauchy problem  associated to the evolution equation \eqref{eq.schro_diss} is well-posed and stable (see e.g. Theorem 1.3 page 102 of  \cite{Paz-83})).\\[6pt]
To deal with divergence free initial conditions, we can restrict ourselves to the subspace $\boldsymbol{\cal H}^\perp$ of $\boldsymbol{\cal H}$ made of fields $\big( {\bf E}, {\bf H}, \bbP, \dot\bbP, \bbM, \dot\bbM\big)$ that are all divergence free. This space (and its orthogonal) are preserved by $\mathbb{A}_\bal $. In other words, it is a  reducing space for the operator of $\mathbb{A}_\bal $. One denotes by $\mathbb{A}_\bal^\perp$
the restriction of $\mathbb{A}_\bal $ to $\boldsymbol{\cal H}^\perp$. We are thus interested in
\begin{equation}\label{sol_eq.schro_diss}
{\bf U}(t) = e^{- \rmi\, t \, \mathbb{A}_\bal^\perp} \, \bU_0.
\end{equation}
{\bf Fourier transform in space.} Applying the Fourier transform in space ${\cal F}_x $  to the equation \eqref{eq.schro_diss}, we see that, if $\widehat {\bf U}(\cdot,t) = {\cal F}_x  \big[{\bf U}(\cdot,t)\big]$, for each $\bk$, $\widehat {\bf U}(\bk,t)  \in {\bf C}^{2N+2},$ where ${\bf C} := \C^3$, satisfies the ordinary differential equation
\begin{equation}\label{eq.schr_diss_Fourier}
	\frac{\md \, \widehat{\bf U}}{\md\, t} (\bk,t) + \rmi\,\mathbb{A}_{\bal,\bk} \, {\bf U} (\bk,t)=0, \quad \mathbb{A}_{\bal,\bk} := \mathbb{A}_{\bf k} + \rmi \, \mathbb{D}_\bal, 
\end{equation}
where $\mathbb{A}_{\bal,\bk}$ is the hermitian matrix given in block form by 
\begin{equation}\label{eq.opAk}
	\mathbb{A}_{\bal,\bk} := \ \rmi\, \begin{pmatrix}
		0 &\varepsilon_0^{-1}\, \bk \, \times& 0&-{\mathbb M}_e & 0 & 0\\[2pt]
	- \mu_0^{-1}\, \bk \, \times & 0 &0 & 0& 0& -{\mathbb M}_m \\[2pt]
	0 & 0 &0  & 1&0 &0\\[2pt]
	{\mathbb M}_e^* & 0 &-\,\mbox{diag} \; \omega_{e,j}^2 &0 &0 &0\\[2pt]
	0 & 0 &0  & 0 &0 & 1\\[2pt]
	0 &{\mathbb M}_m^*& 0 & 0 & -\mbox{diag} \; \omega_{m,\ell}^2  & 0
\end{pmatrix},
\end{equation}
where the definition of the matrices ${\mathbb M}_e$ and ${\mathbb M}_m$ is a trivial adaptation of the one of the operators  ${\mathbb M}_e$ and ${\mathbb M}_m$ in section \ref{sec-abstract2}. We point out that for each $\bk\in \bbR^3 $, the linear reduced opererator $\mathbb{A}_{\bal,\bk}$, defined on the finite dimensional space  ${\bf C}^{2N+2}$, is dissipative.  \\ [12pt]
The divergence free condition ${\bf U}_0 \in  \boldsymbol{\cal H}^\perp$ can be traduced if Fourier by 
$$ 
\widehat{\bf U}_0(\bk) \in {\bf C}^{2N+2}_{\bk,\perp}, \quad {\bf C}_{\bk,\perp} \! :=  \!\big\{ {\bf V}  \in {\bf C} \; / \; \bk \cdot {\bf V} = 0 \big\} \quad (\mbox{with $\bk \cdot {\bf V}$ the inner product in ${\bf C}$}).
$$
The space $\C^{2N+2}_{\bk,\perp}$ is a reducing space of $\mathbb{A}_{\bal,\bk}$, we thus have
\begin{equation}\label{sol_eq.schro_diss-bis}
	\widehat{\bf U}(\bk,t) = e^{- \rmi\, t \, \mathbb{A}_{\bal,\bk}^\perp} \, \bU_0 (\bk), \qquad\qquad \mathbb{A}^\perp_{\bal,\bk} =  \mathbb{A}_{\bal,\bk} \big|_{{\bf C}^{2N+2}_{\bk,\perp}}.
\end{equation}
The long time behaviour of $\bU(t)$, via $\widehat \bU(\bk,t)$, is thus governed by the above exponentials, for which the spectrum of $\mathbb{A}_{\bal,\bk}^\perp$ obviously plays a fundamental role. \\ [12pt]
It is worthwhile emphasizing that the matrices  $\mathbb{A}_{\bal,\bk}$ (thus $\mathbb{A}_{\bal,\bk}^\perp$) are not normal, which is a source of technical difficulty in the analysis. 
~\\ [12pt]
{\bf Dispersion relation and spectra of  $\mathbb{A}_{\bal,\bk}^\perp$ and $\mathbb{A}_{\bal}^\perp$}. \\ [12pt]
The (immediate) link between the dispersion section \ref{sec_DispersionDissipative} and the operator $ \mathbb{A}_{\bal,\bk}^\perp$ (for $\bk\neq 0$) is
\begin{equation*} \label{spectrumAk} 
\omega \in \sigma\big( \mathbb{A}_{\bal,\bk}^\perp) \quad \Longleftrightarrow \quad {\cal D}_{\bal}(\omega) = |\bk|^2, 
\end{equation*}
thus, according to Theorem \ref{thm_dispersion_diss}, 
\begin{equation} \label{spectrum} 
	\sigma\big( \mathbb{A}_{\bal,\bk}^\perp) = \big \{ \omega_{n}^\bal( |\bk|),  \ 1 \leq n \leq 2N+2\big\}
\end{equation} 
and $\sigma\big( \mathbb{A}_{\bal,\bk}^\perp)$ is invariant by the transformation $\omega \to -\overline{\omega}$, i.e.  
$$ \omega_{n}^\bal( |\bk|)\in \sigma\big( \mathbb{A}_{\bal,\bk}^\perp) \Longleftrightarrow-\overline{\omega^\bal_{n}( |\bk|)} \in \sigma\big( \mathbb{A}_{\bal,\bk}^\perp).$$
\noindent If $\widehat {\bf U}:= {\cal F}_x {\bf U}$, $ \big[{\cal F}_x \big(\mathbb{A}_{\bal}^\perp {\bf U}\big)\big](\bk) = \mathbb{A}_{\bal,\bk}^\perp \big[\widehat {\bf U}(\bk)\big]$ for all $\bk$, i.e.  $ \ds \mathbb{A}_{\bal}^\perp = \int^\oplus  \mathbb{A}_{\bal,\bk}^\perp$. As a consequence, one shows by properties of Direct integral of dissipative operators (see e.g. corollary 3.3 of \cite{Ng-20}) that $\sigma\big(\mathbb{A}_{\bal}^\perp \big)$ is here given in terms of the eigenvalues  of the reduced operators  $\bbA_{\bk}^{\perp}$:
\begin{equation} 
\sigma\big(\mathbb{A}_{\bal}^\perp \big) = \bigcup_{n=1}^{2N+2}{\cal S}_n^\bal , \quad {\cal S}_n^\bal := \mbox{closure } \big\{ \omega^\bal_{n}\big( |\bk| \big), \bk \in \R ^3\}.
	\end{equation} 
Thus  $\sigma\big( \mathbb{A}_{\bal}^\perp)$ is also invariant by the transformation $\omega \to -\overline{\omega}$.
	Each $ {\cal S}_n^\bal$ is a curve arc in the complex plane half-plane $\bbC^-$ that joins $z_n^\bal$ and  $p_n^\bal$ (with $p_{2N+1}^\bal =-\infty$  and $p_{2N+2}^\bal= + \infty$ by choice  of  the indexing).
	In figure \ref{Fig_Spectrumdiss},  we plot the spectrum of  $\mathbb{A}_{\bal}^\perp$ for $N =3$ , when the medium is not electrically or magnetically non dissipative, with 2 real resonances. 
	\begin{figure}[h!] 
		\centerline{
		 \includegraphics[width=1 \textwidth]{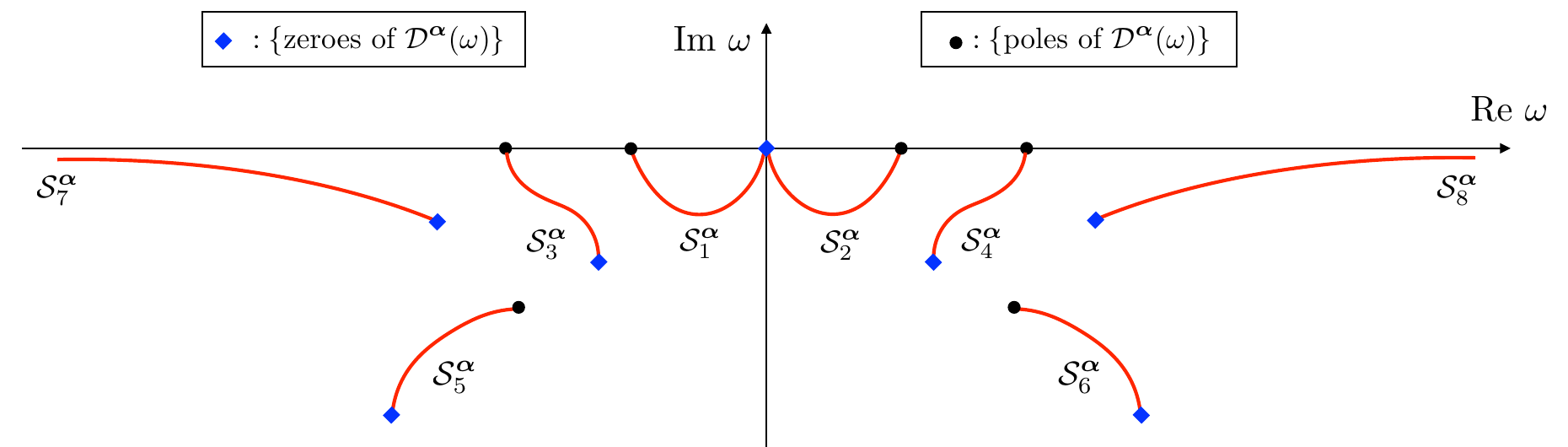}}
		\caption{Plot of the spectrum of  $\mathbb{A}_{\bal}^\perp$. The arcs ${\cal S}_n^\alpha$ play the same role as the (real) segments ${\cal S}_n$ in figure \ref{Fig_Disp}.
		}
		\label{Fig_Spectrumdiss}
	\end{figure}
\begin{Rem}
General properties on the structure of the spectrum of Maxwell's operators involving dissipative dispersive  materials (such its  localization  via numerical range techniques, the structure and classification  of the different types of essential spectra or the notion of spectral pollution with domain truncation, ...) have been analysed also  in non-homogeneous media : for transmission problems between two half-spaces in \cite{Bro-25}, in  locally perturbed media with cylindrical ends \cite{Fer-24} or in  more general locally perturbed media in \cite{Bog-23}. The approaches  used in these works are mainly based on spectral theory of pencils of operators.
\end{Rem}
\noindent 	{\bf Towards the estimate of ${\cal E}(t)$ via Plancherel's theorem}. \\ [12pt]
In the following,  to compare two positive functions $f(y)$ and $g(y)$,  where $y \in Y$ and $y = \bx, \bk,t$ or any combination of the variables, we introduce the notation:
		\begin{equation*} \label{notation2}
			f \lesssim g  \ \Longleftrightarrow \ \exists\;  C > 0  \quad  \mid \  f(y) \leq C \;  g(y), \quad  \forall \; y \in Y,
		\end{equation*} 
		where the constant  $C$ does not depend on $y$. 
\\ [12pt]
	According to the properties of the functions $\omega_{n}^\bal( |\bk|)$, in particular \eqref{limits_omega_n_alpha}, it is natural to split the space of dual variables $\bk$ into three regions: $\R^3 = {\bf LF} \cup {\bf MF} \cup {\bf HF}, \ $ where
	$$
	 \quad {\bf LF}  :=\big \{ 0<|\bk| < M \big\},  \quad {\bf MF}  := \big\{ m \leq |\bk| \leq  M \big\},  \ \mbox{ and } \ {\bf HF}  := \big\{ |\bk| > M \big\},
	$$
	with well chosen $0 < m < M < +\infty$. By Plancherel's theorem, one has 
	\begin{equation} \label{estiPlancherel}
		{\cal E}(t) \leq \|\bU(t)\|_{\cal H}^2 \; = \;  \int _{\bf LF} \big|	\widehat{\bf U}(\bk,t)\big|^2 \rmd\,  \bk +  \int _{\bf MF} \big|	\widehat{\bf U}(\bk,t)\big|^2  \rmd\,  \bk +  \int _{\bf HF} \big|	\widehat{\bf U}(\bk,t)\big|^2  \rmd\,  \bk . 
	\end{equation} 
According to  \eqref{limits_omega_n_alpha}, when $\bk$ describes ${\bf MF}$ (the middle space frequencies) the spectrum remains bounded away from the real axis, uniformly in $\bk$. As a consequence, there exists $\nu > 0$ (depending on $(m,M)$) such that 
$$
  \int _{\bf MF} \big|	\widehat{\bf U}(\bk,t)\big|^2  \rmd\,  \bk \lesssim e^{-\nu t},
$$
meaning that this term does not contribute to the right hand side of \eqref{polynomial_decayncr} (or  \eqref{polynomial_decaycr}), see Theorem 6.2 of \cite{cas-jol-ros-22-bis} for more details. In fact these two terms are explained by the integrals over ${\bf HF}$ and ${\bf LF}$ respectively. 
\\ [12pt]
We do not detail here the analysis of the first term in \eqref{estiPlancherel}, that concerns ${\bf LF}$ (the low space frequencies). This relies on  Lemma \ref{lem_asymptoticsHF2} and we refer the reader to  \cite{cas-jol-ros-22-bis}, section 7.1. One obtains
\begin{equation} \label{estiBF}
\int _{\bf BF} \big|	\widehat{\bf U}(\bk,t)\big|^2 \rmd\,  \bk \leq  C_p \,  \frac{ \|\bE_0 \|_{ \boldsymbol{\cal L}_{p}^N}^2 +  \|\bH_0 \|_{ \boldsymbol{\cal L}_{p}^N}^2}{t^{p+\frac{3}{2}}},
\end{equation}
which corresponds to the second part of  the estimates \eqref{polynomial_decayncr} and \eqref{polynomial_decaycr}. \\ [12pt]
In the sequel, we explain how to get the first part of  the estimates \eqref{polynomial_decayncr} and \eqref{polynomial_decaycr}, related  to the integral over ${\bf HF}$ (the high space frequencies), using  Lemmas \ref{thm_dispersion_asymptotics_1} and \ref{lem_asymptoticsHF1} (or Lemma \ref{lem_asymptoticsHF1bis}). \\ [12pt]
{\bf Spectral decomposition and estimate of ${\bf U}(\bk,t)$ for large $|\bk|$}. It can be shown, see \cite{cas-jol-ros-22-bis}, corollary 3.4,  that, at least for $M$ large enough, the matrices $\mathbb{A}_{\bal,\bk}^\perp$ are diagonalisable.  \\ [12pt] Let $\Pi_n^\bal(\bk)$ be the spectral projector of  $\mathbb{A}_{\bal,\bk}^\perp$  associated to the eigenvalue $ \omega_n^\bal(|\bk|)$, we have
\begin{equation}\label{eq.decompdiag}
\forall \; \bk \in {\bf HF}, \quad 	\widehat{\bf U}(\bk,t) =  \sum_{n=1}^{2N+2} \; e^{- \rmi \, \omega_n^{\bal}(|\bk|)\, t} \; \Pi_n^\bal(\bk) 	\widehat{\bf U}_0(\bk). 
\end{equation}
Using the Riesz-Dunford  functional calculus (see e.g. \cite{Dun-88} for a presentation of this holomorphic  functional calculus), one first  proves (see Lemma 4.13 of \cite{cas-jol-ros-22-bis}) that the part of the solution $\widehat{\bf U}(\bk,t)$ associated  to eigenvalues $ \omega_n^{\bal}(|\bk|)$ which tend to  poles $p_n^{\bal}\in \bbC^-$ decays exponentially (uniformly in $|\bk|$ for $\bk \in {\bf HF}$). 
 In other words, there exists $\delta>0$ such that:
\begin{equation} \label{eq.decayuniform}
\forall \; \bk \in {\bf HF}, \quad 	\Big|  \sum_{ n \notin {\cal N}^\bal_\infty  } \; e^{- \rmi \, \omega_n^{\bal}(|\bk|)\, t} \; \Pi_n^\bal(\bk) 	\widehat{\bf U}_0(\bk) \Big| \lesssim \rme^{-\delta \,t} |\widehat{\bf U}_0(\bk)|. 
\end{equation}
As $\mathbb{A}_{\bal,\bk}^\perp$ is not normal,  the projectors $ \Pi_n^\bal(\bk)  \in {\cal L}\big( \C^{2N+2}_{\bk,\perp}\big)$ involved  in the spectral decomposition \eqref{eq.decompdiag}   are not orthogonal. Thus, one needs to estimate the norm of  $\Pi_n^\bal(\bk) $ (with respect to $|\bk|$) when $n \in {\cal N}^\bal_\infty$. Indeed, it can be shown that these projectors  are uniformly bounded for $\bk \in {\bf HF}$ (this is proved in \cite{cas-jol-ros-22-bis}, see Lemmas 4.3, 4.6 and 4.10, using the Cauchy integral representation of these  spectral projectors \cite{Dun-88,kato}):
\begin{equation} \label{borne_projecteurs}
	\forall n \in {\cal N}^\bal_\infty, \quad \forall \; \bk \in {\bf HF}, \quad \big| \Pi_n^{\pm}(\bk)\big|_{{\cal L}({\bf C}^N)} \lesssim 1. 
	\end{equation}
	As a consequence , as  $ |e^{- \rmi \, \omega_n(|\bk|)\, t}|= e^{\, \operatorname{Im}   \omega_n(|\bk|)\, t}$, we deduce from \eqref{eq.decompdiag}, \eqref{eq.decayuniform} and  \eqref{borne_projecteurs}  that 
	\begin{equation} \label{borne_solfreq}
\forall \; \bk \in {\bf HF}, \quad  |	\widehat{\bf U}(\bk,t)|^2 \lesssim  \rme^{-2 \, \delta \,t} |\widehat{\bf U}_0(\bk)|^2+  \sum_{n\in {\cal N}^\bal_\infty \cup \{n \geq 2N+1 \}} \; e^{\,2 \, \operatorname{Im}    \omega_n^\bal(|\bk|)\, t} \; |	\widehat{\bf U}_0(\bk)|^2.
	\end{equation}
In the {\it non wealkly dissipative} case, for $n \in \{2N+1, 2N+2\} \cup {\cal N}^\bal_\infty$, according to the asymptotics of $\operatorname{Im}   \, \omega_n(|\bk|)$ which are all in $O(|\bk|^{-2})$ by Lemmas \ref{thm_dispersion_asymptotics_1} and \ref{lem_asymptoticsHF1}, one can find $\sigma > 0$ such that
\begin{equation} \label{lowerbound_imaginary part}
	\forall \;  n \in  \{2N+1, 2N+2\} \cup {\cal N}^\bal_\infty, \quad \forall \; \bk \in {\bf HF}, \quad \operatorname{Im} \,  \omega_n(|\bk|) \leq - \, \frac{{\sigma}}{2} \, |\bk|^{-2}.
\end{equation}
{\bf Estimate on the first term of  \eqref{estiPlancherel} (about large space frequencies in ${\bf HF}$).} From \eqref{lowerbound_imaginary part}, for  $\bk \in {\bf HF}$ and $n \in \{2N+1, 2N+2\} \cup {\cal N}^\bal_\infty$, one has  $e^{\,2 \,\operatorname{Im}   \omega_n(|\bk|)\, t} \leq e^{- \sigma \,  |\bk|^{-2} \, t}$ and therefore by virtue of \eqref{borne_solfreq}, one gets (for $M$ large enough):
$$ 
\int _{\bf HF} \big|	\widehat{\bf U}(\bk,t)\big|^2 \lesssim \int _{\bf HF}  e^{- \sigma \, {|\bk|^{-2} \, t}} \, |\widehat{\bf U}_0(\bk)|^2. 
$$
Thus, one deduces,  using the definition of Sobolev norms via Fourier transform, that
\begin{equation} \label{estimate}
\left| \begin{array}{lll} 
\ds \int _{\bf HF} \big|	\widehat{\bf U}(\bk,t)\big|^2 & \lesssim  & \ds t^{-m} \int _{\bf HF}  \Big( e^{- \, \sigma  \tfrac{t}{|\bk|^2}} \, \big( \tfrac{t}{|\bk|^2}\big)^m \Big) \;  |\bk|^{2m}  |\widehat{\bf U}_0(\bk)|^2 \\ [19pt]
 & \lesssim  & \ds C_m \; \frac{\| \bU_0\|_{H^m}^2}{t^m}, \quad  \mbox{with }  C_m := \sup_{\rho > 0} \, \big(\rho^{m} \, e^{-\sigma \, \rho}\big).
\end{array} \right.
\end{equation}
that is to say the first part of \eqref{polynomial_decayncr}.\\ [12pt]
In the {\it wealkly dissipative} case, the inequality \eqref{lowerbound_imaginary part} is no longer true for indices $n$ in the set ${\cal N}^\bal_s$ defined in \eqref{defNr}.  Instead, thanks to  Lemma \ref{lem_asymptoticsHF1bis}, we have, for some $\sigma > 0$:
\begin{equation} \label{lowerbound_imaginary part2}
	\forall \;  n \in   {\cal N}^\bal_s, \quad \forall \; \bk \in {\bf HF}, \quad  \operatorname{Im}  \,  \omega_n(|\bk|) \leq - \frac{\sigma}{2} \, |\bk|^{-4}.
\end{equation}
The indices $n$ in ${\cal N}^\bal_s$ are the ones that constrain  the energy decay : adapting computations of \eqref{estimate} (with $|\bk|^{-4}$ instead of $|\bk|^{-2}$) lead to the first term in \eqref{polynomial_decaycr} (with $s/2$ instead of $s$). 
\subsection*{Acknowledgements}
The authors would like to thank their colleague Maryna Kachanovska for her collaboration on the  article  \cite{cas-kach-jol-17} whose topic corresponds to the subjects treated from section \ref{sec-Laplace} to  section \ref{sec-Gen-Lorentz}.



\end{document}